\newcommand{\Anonymous}{\False}
\newcommand{\cameraversion}{\False}
\newcommand{\Draft}{\False}
\newcommand{\Long}{\True}
\newcommand{\sigplanoptions}{}}
\newcommand{\sigplanoptions}{nocopyrightspace}}
\newtheorem{theorem}{Theorem}
\newtheorem{lemma}{Lemma}
\newtheorem{corollary}{Corollary}
\newtheorem{definition}{Definition}
\newtheorem*{definition*}{Definition}
\begin{document}
\cameraversion{\toappear{}}{}

\setlength{\pdfpageheight}{\paperheight}
\setlength{\pdfpagewidth}{\paperwidth}

\conferenceinfo{CONF 'yy}{Month d--d, 20yy, City, ST, Country}
\copyrightyear{20yy}
\copyrightdata{978-1-nnnn-nnnn-n/yy/mm}
\copyrightdoi{nnnnnnn.nnnnnnn}


\titlebanner{Draft of \today}        
\preprintfooter{}   

\title{Deciding equivalence with sums and the empty type}
\subtitle{}

\Anonymous{
  \authorinfo{Anonymous for review}
             {}
             {}
}{
  \authorinfo{Gabriel Scherer}
             {Northeastern University, USA}
             {gabriel.scherer@gmail.com}
}

\maketitle

\begin{abstract}
  The logical technique of \emph{focusing} can be applied to the
  $\lambda$-calculus; in a simple type system with atomic types and
  negative type formers (functions, products, the unit type), its
  normal forms coincide with $\beta\eta$-normal forms. Introducing
  a \emph{saturation} phase gives a notion of quasi-normal forms in
  presence of positive types (sum types and the empty type). This rich
  structure let us prove the decidability of $\beta\eta$-equivalence
  in presence of the empty type, the fact that it coincides with
  contextual equivalence, and a finite model property.
\end{abstract}

\begin{version}{\Not\Long}
\mprset{lineskip=0em}
\noabovedisplayskip
\nobelowdisplayskip
\end{version}

\begin{version}{\False}
\category{F.3.3}{Logics and Meanings of Programs}
         {Studies of Program Constructs}[Type structure]
\category{F.4.1}{Mathematical Logic and Formal Languages}
         {Mathematical logic}[Lambda calculus and related systems]

\keywords
simply-typed lambda-calculus, equivalence, sums, empty type,
focusing, canonicity, saturation
\end{version}

\section{Introduction}

\subsection{Notion of equivalences}

For a given type system, there may be several notions of program
equivalence of interest.
One may define a notion of syntactic equivalence by a system of
equations between terms, such as $\beta$-equivalence and
$\eta$-equivalence, or their union $\betaeta$-equivalence.
A more extensional notion of equivalence is contextual or
observational equivalence, which checks that the two terms behave in
the same way under all contexts.
Finally, a semantic notion of equivalence is given by interpreting
terms in a certain mathematical space, typically morphisms in
categories with a certain structure, and considering two terms
equivalent if they are equal under all interpretations.
One can generally prove that considering all interpretations in the
category of sets suffices to distinguish two observably distinct
terms, and certain type systems have the stronger \emph{finite model}
property that considering all interpretations in \emph{finite} sets
suffices to distinguish observably distinct terms.

Contextual equivalence has a clear, compact definition, it is the
notion of equivalence that corresponds to programmers' intuition, but
it is difficult to prove and reason about.
Syntactic equivalence can be used as part of syntactic typing
judgments, and may be easier to prove decidable.
Semantic models provide a more abstract point of view on the identity
of programs, and may enable the powerful method of
normalization-by-evaluation.

For the untyped $\lambda$-calculus, \citet*{boehm1968} proved that
$\betaeta$-equivalence and observational equivalence coincide.
Untyped $\beta$-reduction is not normalizing, so equivalence is undecidable.

For the simply-typed $\lambda$-calculus with atomic types, functions
and pairs -- what we call the \emph{negative} fragment -- we also know
that these notions of equivalence coincide.
Furthermore, typed equivalence is decidable: we can define and compute
$\beta$-short $\eta$-long normal forms, and two terms are
$\betaeta$-equivalent exactly when they have the same normal form.
\citet*{friedman1975} proved that two terms equal in all set-theoretic
models are equivalent, and \citet*{statman1982} sharpened the result by
proving that finite sets suffice to distinguish inequivalent terms -- the
finite model property.

This pleasant setting is quickly eroded by moving to richer
programming languages and type systems.
Adding notions of side-effects, for example, makes the general
$\beta\eta$-equivalence unsound.
Even in the realm of pure, total programming languages, adding
parametric polymorphism (System F and beyond) makes
$\beta\eta$-equivalence strictly weaker than contextual equivalence.

\subsection{Sums and empty type}

An interesting middle-ground is the \emph{full} simply-typed
$\lambda$-calculus, with not only atomic types, functions, pairs and
the unit type $\tunit$ but also sum types and the empty type
$\tempty$.
There, results on program equivalence have been surprisingly long to
come, because of deep difficulties caused by mixing functions
(negative connectives) and sums (positive connectives), and the strong
behavior of the empty type: in an inconsistent context, all terms are
equal.

The first decidability result for the system with non-empty sums was
\citet*{ghani1995}, using advanced rewriting techniques later
simplified by \citet*{lindley2007}.
It was followed by normalization-by-evaluation
results~\citep*{altenkirch2001,balat2004} that also established
decidability of equivalence in the non-empty case using the
categorical structure introduced in \citet*{fiore1999}.
Decidability in the presence of the empty type is still open -- we
propose a proof.

Finally, the only known completeness result for set-theoretic models
is \citet*{dougherty2000}; it only holds for non-empty sums, and
relies in an essential way on infinite sets.
The finite model property is only conjectured -- we propose a proof,
including in presence of the empty type.

\subsection{Focusing}

Focusing~\citep*{andreoli1992} is a general technique that uses the
notion of invertibility of inference rules to define a \emph{focused} subset of any
logic that is complete but removes some redundant proofs.

A recent branch of work on \emph{maximal
  multi-focusing}~\citep*{chaudhuri-miller-saurin-2008,chaudhuri2012}
has demonstrated that focused proofs can be further restricted to
become even more canonical: in each application to a specific logic,
the resulting representations are equivalent to existing representations
capturing the identity of proofs -- proof nets for linear logic,
expansion proofs for classical logic.

\citet*{scherer2015} applied focusing to the $\lambda$-calculus with
non-empty sums. Completeness of focusing there means that any term has
a $\betaeta$-equivalent focused term. Their counterpart of maximal
multi-focusing is \emph{saturation}: saturated terms of the focused
$\lambda$-calculus introduce and deconstruct neutral terms of sum type
as early as possible -- when they are \emph{new}, they were not
derivable in the previous saturation phase. Canonicity of this
representation gives yet another decision procedure for
$\betaeta$-equivalence of $\lambda$-terms with non-empty sums.

The present work extends saturated focusing to the full simply-typed
lambda calculus, with units and in particular the empty type. The
suitably extended notion of saturated form retains its canonicity
properties in the full system. This means that $\beta\eta$-equivalence
with empty types is decidable -- by converting terms to their
saturated focused form. From two distinct saturated forms, one can
furthermore build a well-typed context that distinguishes them. This
proves that contextual equivalence implies equality of normal forms,
and thus $\betaeta$-equivalence: any term is $\betaeta$-equivalent to
its normal form, so two terms with the same normal form are
$\betaeta$-equivalent. Our distinguishing context needs only
instantiate atomic types with finite sets, giving a finite model
property.

Extending saturated focusing to the empty type requires adding
a requirement that saturation phases be complete for provability: not
only must they introduce all new neutrals of positive type for use in
the rest of the term, but they must at least introduce one such
neutral for each type deducible in the current context. As
a consequence, we can prove a Saturation Consistency theorem which is
key to our canonicity: if two saturated terms are distinct, then their
context must be consistent.

\subsection{Contributions}

We establish the following results in the simply-typed
$\lambda$-calculus with atoms, functions, pairs, the unit type, sums
and the empty type $\LC{(\sysfull)}$:

\begin{itemize}
\item Saturated terms provide a notion of \emph{quasi}-normal form;
  equivalent quasi-normal forms are not necessarily
  $\alpha$-equivalent, but are related by a local, easily decidable
  relation of \emph{invertible commutation conversions}
  $\parel{\argeq\icc}$.
\item $\betaeta$-equivalence is decidable.
\item $\betaeta$-equivalence and contextual equivalence coincide,
  along with set-theoretic equivalence in all models where atomic
  types are interpreted by closed types.
\item The finite model property holds -- as closed types in this
  system have finitely many inhabitants.
\end{itemize}

\begin{version}{\Not\cameraversion}
  Our notion of $\betaeta$-equivalence is the \emph{strong}
  equivalence on sums. It corresponds to equality of morphisms in the
  free bi-cartesian closed category -- see \citet*{scherer-phd},
  Section 3.2.2.
\end{version}

\subsection{Plan}

\fullref{sec:full-calculus} introduces the full $\lambda$-calculus we
shall consider in this work, along with the various notions of
equivalence ($\betaeta$, contextual, set-theoretic) we will
discuss. We prove some elementary results: $\betaeta$-equivalence
implies contextual equivalence in all closed models, which coincides
with set-theoretic equivalence in all closed models.

\fullref{sec:focusing} presents focusing, before detailing the
focused $\lambda$-calculus extended with the unit type and empty
type. We formulate the computational counterpart of the completeness
theorem: any $\lambda$-term has a $\beta\eta$-equivalent focused term.

\fullref{sec:saturated-calculus} presents the saturated subset of the
focused $\lambda$-calculus as defined by a different system of
inference rules. Compared to the simpler system of
\citet*{scherer2015}, this saturated system is parametrized over
a \emph{selection function} that selects the neutrals to split over
during saturation. The saturated system is, again, computationally
complete with respect to focused terms.

\fullref{sec:saturation-consistency} establishes the main
meta-theoretic property of saturated terms in presence of the empty
type, namely that saturating an inconsistent context will always find
a proof of $\tempty$. In other words, if two saturated terms differ
after a saturation phase instead of both ending on an absurdity
elimination, we know that they differ in consistent contexts. This
result is key to extending the distinguishability result of
\citet*{dougherty2000} to a system with empty types.

Finally, \fullref{sec:canonicity} establishes the central result of
this work: if two saturated $\lambda$-terms $\ea \argneq\icc \eb$ are
syntactically distinct (modulo invertible commuting conversions), then
there exists a closed type model $\ma$ in which we have
a distinguishing context $\plug \Ca \ehole$ such
$\plug \Ca {\modapp \ma \ea}, \plug \Ca {\modapp \ma \eb}$ are
closed booleans ($\sum \tunit \tunit$), one of them equal to $\etrue$
and the other to $\efalse$. By contraposition, this proves that two
terms contextually equivalent in all models have the same saturated
normal forms -- giving decidability -- and in particular are
$\betaeta$-equivalent.

\begin{version}{\Not\cameraversion}
For lack of space, our statements do not come with their proofs, but
proof outlines are given in \fullref{ann:proofs}.

\begin{quotation}
  \textit{Sections 1 to 5, in particular the presentations of focusing
    and saturated normal forms, correspond to content that was
    presented in the thesis manuscript \citet*{scherer-phd}. In the
    present article, definitions and proofs in these sections are
    given with minimal amount of details for economy of space; they
    are presented in full in the manuscript. In contrast, Section 6,
    which presents the key canonicity result enabling the
    contributions of this article, is entirely new, and presented in
    more detail.}
\end{quotation}
\end{version}

\begin{version}{\cameraversion}
  \begin{quotation}
    \textit{See \url{https://arxiv.org/abs/1610.01213} for a less
      space-constained version of this work, with proof arguments in
      appendices, and the manuscript \citet*{scherer-phd} for
      a very detailed exposition of the first five sections.}
  \end{quotation}
\end{version}

\section{Equivalences in the full $\lambda$-calculus}
\label{sec:full-calculus}

\subsection{Typing rules and $\betaeta$-equivalence}

\begin{smallmathparfig}
  {fig:LCfull}
  {Full simply-typed lambda-calculus \small{$\LCfull$}}
  \ta, \tb, \tc \gramdef \xa, \xb, \xc
    \mid \fun \ta \tb
    \mid \prodfam \ta \mid \tunit
    \mid \sumfam \ta \mid \tempty

  \ea, \eb, \ec \gramdef
    \begin{array}{l@{~}l}
      \eva, \evb, \evc &
      \mid \lam \eva \ea \mid \app \ea \eb
      \mid \pairfam \ea \mid \proj \ii \ea \mid \eunit \\
      &
      \mid \inj \ii \ea \mid \fdesumfam \ea \eva \eb \mid \eabsurd \ea
    \end{array}
  \\
  \infer
  {\dernat {\ca, \var \ev \ta} \ea \tb}
  {\dernat \ca {\lam \ev \ea} {\fun \ta \tb}}

  \infer
  {\dernat \ca \ea {\fun \ta \tb}\\ \dernat \ca \eb \ta}
  {\dernat \ca {\app \ea \eb} \tb}
  \\
  \infer
  {\dernat \ca \ea \ta\\ \dernat \ca {\ob \eb} \tb}
  {\dernat \ca {\ob {\pair \ea \eb}} {\prod \ta \tb}}

  \infer
  {\dernat \ca \ea {\prodfam \ta}}
  {\dernat \ca {\ob {\proj i \ea}} {\ta_i}}
  \\
  \infer
  {\dernat \ca \ea {\ta_i}}
  {\dernat \ca {\ob {\inj i \ea}} {\sumfam \ta}}
  \quad
  \infer
  {\dernat \ca \ea {\sumfam \ta}\\
   \famx{\dernat {\ca, \ob {\eva_\ii} : {\ta_\ii}} {\ob {\eb_\ii}} \tc}{\ii}}
  {\dernat \ca {\ob {\fdesumfam \ea \eva \eb}} \tc}
  \\
  \infer
  { }
  {\dernat \ca {\ob \eunit} \tunit}

  \infer
  { }
  {\dernat {\ca, \var \ev \ta} \ev \ta}

  \infer
  {\dernat \ca \ea \tempty}
  {\dernat \ca {\ob {\eabsurd \ea}} \ta}
\end{smallmathparfig}

\begin{smallmathparfig}
  {fig:LCfull-betaeta}
  {$\betaeta$-equivalence for $\LCfull$}

  \app {(\lam \eva \ea)} \eb
  \argred \beta
  \subst \ea \eva \eb

  \annot \ea {\fun \ta \tb}
  \argred \eta
  \lam {\annot \eva \ta} {\app \ea \eva}
  \\
  \proj \ii {\pairfam \ea}
  \argred \beta
  \ea_\ii

  \annot \ea {\prodfam \ta}
  \argred \eta
  \pairlam {\proj \idx \ea}
  \\
  \fdesumfam {\inj \jj \ea} \eva \eb
  \argred \beta
  \subst {\eb_\jj} {\eva_\jj} \ea

  \subst \ea \eva {\eb : \sumfam \ta}
  \argred \eta
  \fdesum \eb {\inj \ii {\evb_\ii}} {\subst \ea \eva {\inj \ii {\evb_\ii}}} \ii
  \\
  \infer
  {\dernat \ca \ea \tunit}
  {\dernat \ca
    {\ea
     \argred \eta
     \eunit}
   \tunit}

  \infer
  {\dernat \ca \eb \tempty\\
   \dernat {\ca, \var \eva \tempty} \ea \ta}
  {\dernat \ca
    {\subst \ea \eva \eb
     \argred\eta
     \eabsurd \eb}
    \ta}
  \\
  \text{Derived rules:}

  \infer
  { }
  {\dernat \ca {\ea_1 \argeq\eta \ea_2} \tunit}

  \infer
  {\dernat \ca \eb \tempty}
  {\dernat \ca {\ea_1 \argeq\eta \ea_2} \ta}
\end{smallmathparfig}

\fullrefnoname{fig:LCfull} gives the grammar and typing rules for the
full simply-typed $\lambda$-calculus $\LCfull$. By symmetry with the
pair projections $\proj \ii \ea$, we use $\inj \ii \ea$ for sum
injection. We use $\famx \dots \iI$ for a family of objects indexed by
$\iI$. The common indexing family, when dealing with syntactic binary
operations, is $\{1,2\}$, and we will most of the time leave it
implicit. Finally, $\fdesumfam \ea \eva \eb$ is a compact syntax for
our full sum-elimination syntax, $\pddesumfam \ea \eva \eb$.

\begin{definition}\label{def:closed-type}
  A \emph{closed} type does not contain atomic types.\Long{\footnote{A
      reviewer remarks that the terminology ``atomic type'' is
      awkward; if ``atom'' means ``indecomposable'', then unit types
      $\tunit$, $\tempty$ could arguably be considered atomic. In
      retrospect, we agree that ``type variable'', as naturally used
      in polymorphic type systems, would be a better terminology, and
      plan to use it in the future.}}{}
\end{definition}

\begin{definition}
  We write $\dernatded \ca \ta$ when $\ea$ exists such that $\dernat \ca \ea \ta$.
\end{definition}

We define $\betaeta$-equivalence as the smallest congruent relation
$\parel{\argeq\betaeta}$ closed by the union of the $\beta$-reduction
relation $\parel{\argred\beta}$ and the $\eta$-expansion relation
$\parel{\argred\eta}$ of \fullrefnoname{fig:LCfull-betaeta}. We have
not explicited the full typing assumptions, but only type-preserving
rewrites are considered. The derived rules are not primitives, they
are derivable from $\eta$-expansion at unit and empty type.

\begin{version}{\Long}
  If $\ea_1, \ea_2$ are of type $\tunit$, then they both rewrite to
  $\eunit$ and are thus equal. The derivation for equality under
  absurdity is less direct: if the current typing context $\ca$ is
  inconsistent, that is there is a derivation $\annot \eb \tempty$,
  then any term of \emph{any} type $\annot \ea \ta$ can be seen as the
  result of the substitution $\subst \ea \ev \eb$ for a variable $\ev$
  that does not appear in $\ea$, and is thus equal to $\eabsurd \eb$;
  in particular, any two terms of the same type are equal.
\end{version}

\begin{theorem}[Strong normalization]
  $\beta$-reduction is strongly normalizing in the full simply-typed
  $\lambda$-calculus $\LCfull$.
\end{theorem}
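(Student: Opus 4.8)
The plan is to prove strong normalization by Tait's method of reducibility candidates, in the form popularized by Girard. Since the $\eta$-rules of \fullrefnoname{fig:LCfull-betaeta} are \emph{expansions} and play no role in normalization, the reduction relation at issue is the congruent closure of the three $\beta$-rules: beta for application, projection of a pair, and case analysis of an injection, $\fdesumfam{\inj \jj \ea}{\eva}{\eb} \argred\beta \subst{\eb_\jj}{\eva_\jj}{\ea}$. Note that the unit and empty types carry no $\beta$-redex of their own: $\eunit$ and $\eabsurd \ea$ are never head-reduced. First I would fix the vocabulary: a term is \emph{neutral} if it is not an introduction form (not a $\lambda$, a pair, an injection, or $\eunit$), and a \emph{reducibility candidate} is a set $\mathcal{C}$ of well-typed terms of a fixed type satisfying Girard's three conditions --- every member is strongly normalizing (CR1); $\mathcal{C}$ is closed under reduction (CR2); and a neutral term all of whose one-step reducts lie in $\mathcal{C}$ itself lies in $\mathcal{C}$ (CR3). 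In particular variables and stuck eliminations, being neutral, belong to every candidate.

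Next I would define, by induction on the type $\ta$, a candidate $\mathcal{R}_\ta$ of reducible terms. For an atomic type, $\mathcal{R}_\ta$ is the set of all strongly normalizing terms; likewise for $\tunit$ and for $\tempty$. For $\fun \ta \tb$, a term $\ea$ is reducible when $\app \ea \eb \in \mathcal{R}_\tb$ for every $\eb \in \mathcal{R}_\ta$, and for a product $\prodfam \ta$ when each projection $\proj \ii \ea$ is reducible. The delicate case is the sum $\sumfam \ta$: the naive ``reduces to some reducible injection'' fails to be a candidate because of neutral terms of sum type, so I would instead take $\mathcal{R}_{\sumfam \ta}$ to be the least candidate containing $\inj \ii \eb$ for every $\ii$ and every $\eb \in \mathcal{R}_{\ta_\ii}$ --- equivalently, the CR3-closure of those injections. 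One checks routinely that each $\mathcal{R}_\ta$ is indeed a candidate.

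The core is then the adequacy lemma: for every derivation $\dernat \ca \ea \ta$ and every substitution sending each variable of $\ca$ to a reducible term of the right type, the instantiated term lies in $\mathcal{R}_\ta$. This goes by induction on the typing derivation. The variable, unit, pairing, projection, and function rules are the standard Tait cases (the function case uses that $\subst \ea \eva \eb \in \mathcal{R}_\tb$ for all reducible $\eb$ suffices, via CR3 and closure under reduction, to place the $\lambda$ in $\mathcal{R}_{\fun \ta \tb}$). The absurdity rule is easy: if the scrutinee $\ea \in \mathcal{R}_\tempty$ is strongly normalizing, then $\eabsurd \ea$ is neutral and each reduct has the form $\eabsurd{\ea'}$ with $\ea \argred\beta \ea'$, so an induction on the reduction measure of $\ea$ with CR3 places $\eabsurd \ea$ in $\mathcal{R}_\ta$. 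Taking the identity substitution --- legitimate since variables are neutral with no reducts, hence reducible by CR3 --- every well-typed term is reducible, and CR1 gives strong normalization.

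I expect the sum-elimination rule to be the main obstacle. Here we must show $\fdesumfam \ea \eva \eb \in \mathcal{R}_\tc$ knowing $\ea \in \mathcal{R}_{\sumfam \ta}$ and that each branch $\eb_\ii$ is reducible under any extension of the substitution sending $\eva_\ii$ to a reducible term of type $\ta_\ii$. Because $\mathcal{R}_{\sumfam \ta}$ is defined as a least closure rather than concretely, I would prove this by the closure's induction principle, run simultaneously with inductions on the strong-normalization measures of $\ea$ and of the branches: when $\ea$ is an injection $\inj \jj{\ea'}$ the governing $\beta$-rule fires and reducibility of the contractum $\subst{\eb_\jj}{\eva_\jj}{\ea'}$ follows from the branch hypothesis applied to the reducible $\ea'$; in every other case the whole case-term is neutral, its reducts are covered by the inner inductions, and CR3 closes the argument. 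Making these nested inductions fit together --- in particular tailoring the closure definition of $\mathcal{R}_{\sumfam \ta}$ so that this elimination lemma is provable --- is the one genuinely non-routine part of the development; everything else follows the textbook simply-typed pattern.
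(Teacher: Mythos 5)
Your proposal is correct, but there is nothing in the paper to compare it against: the paper states strong normalization as a classical, off-the-shelf fact about $\LCfull$ and provides no proof for it, not even in the appendix of proof outlines (the result predates the paper's contributions, which begin with focusing and saturation). Your reconstruction is the standard Tait--Girard reducibility argument, and you have correctly identified the two points where it departs from the purely negative textbook case: sums, handled by defining $\mathcal{R}_{\sumfam \ta}$ as the least candidate containing injections of reducible terms and then using its induction principle (together with inner inductions on the strong-normalization measures of the scrutinee and branches) to prove the case-elimination lemma; and the empty type, which as you note requires no new idea since $\eabsurd{\wild}$ has no $\beta$-rule, so CR3 plus an induction on the scrutinee's reduction measure suffices. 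You also correctly scoped the reduction relation: the paper's $\parel{\argred\beta}$ consists of exactly the three contraction rules and does \emph{not} include permutative (commuting) conversions for sums -- had it included them, the proof would be substantially harder, and your argument as written would not cover it. Two minor informalities, neither a gap: the parenthetical claim that ``stuck eliminations belong to every candidate'' holds only for those whose reducts all lie in the candidate (e.g.\ $\proj 1 \eva$, which has none), not for arbitrary stuck terms with non-normalizing subterms; and in the sum case you should note explicitly that the least candidate exists because the set of all strongly normalizing terms of that type is a candidate containing the generators, so the defining intersection is nonempty.
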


\begin{theorem}[Confluence]
  $\beta$-reduction is confluent for the full simply-typed
  $\lambda$-calculus $\LCfull$: each term has a unique $\beta$-short
  normal form.
\end{theorem}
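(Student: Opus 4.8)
The plan is to reduce confluence to \emph{local} confluence and then invoke the strong normalization theorem just established, via Newman's Lemma. Since $\beta$-reduction on $\LCfull$ is strongly normalizing, it suffices to show local confluence: whenever $\ea \argred\beta \eb_1$ and $\ea \argred\beta \eb_2$, there is a common $\ec$ to which both $\eb_1$ and $\eb_2$ reduce in zero or more steps. Newman's Lemma upgrades this to full confluence, and strong normalization already guarantees that every term has a normal form; confluence then makes that normal form unique, which is exactly the $\beta$-short normal form asserted by the statement.

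I would establish local confluence by analysing the relative positions of the two contracted redexes. The three $\beta$-rules of \fullrefnoname{fig:LCfull-betaeta} --- for application, projection, and sum elimination --- are left-linear and have pairwise distinct head eliminator/introduction pairs, so no two redex patterns overlap. This leaves three configurations. If the redexes occupy disjoint subterms, contracting the other redex on each side reaches the same term in one step. If they sit at the same position, the two contracta coincide, since each redex has a unique right-hand side. The only delicate case is when one redex is nested strictly inside the other.

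In the nested case the subtlety is that $\beta$-contraction at the outer redex substitutes, and hence may duplicate or erase the inner redex. Concretely, reducing $\app{(\lam\eva\ea)}{\eb}$ to $\subst\ea\eva\eb$ copies $\eb$ once per free occurrence of $\eva$ in $\ea$ (the sum rule behaves likewise), whereas reducing $\proj\ii{\pairfam\ea}$ to $\ea_\ii$ discards the sibling component entirely. Each subcase closes with two routine substitution lemmas: that $\subst{\ea}\eva\eb$ reduces to $\subst{\ea}\eva{\eb'}$ in zero or more steps whenever $\eb \argred\beta \eb'$ (to catch up the duplicated copies), and that substitution commutes with reduction in the body, i.e. $\subst{\ea}\eva\eb \argred\beta \subst{\ea'}\eva\eb$ whenever $\ea\argred\beta\ea'$. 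In the erasing subcase the discarded inner redex simply vanishes, so contracting the outer redex alone already joins the two reducts.

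The main obstacle is precisely this duplication/erasure bookkeeping; everything else follows immediately from left-linearity and non-overlap. Crucially, because Newman's Lemma only asks for local confluence --- not the diamond property --- I never need to contract several copies simultaneously in a single step: the reflexive-transitive closure in the joining condition absorbs the multiplicity, which is what lets the substitution lemma above do its work. I would also keep all terms inside $\LCfull$ by noting that the considered relation is type-preserving by construction, although confluence of these orthogonal rules in fact does not depend on typing.
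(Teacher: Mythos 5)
Your proposal is correct, but note that the paper itself offers no proof to compare against: Confluence, like Strong Normalization just before it, is stated as a classical background result, and neither theorem has an entry in the appendix of proof outlines. Judged on its own merits, your argument is sound and is one of the two standard routes. The redex analysis is right: the three $\beta$-rules (application, projection, sum elimination) are left-linear with pairwise non-overlapping patterns, so there are no critical pairs; disjoint redexes commute in one step each, identical redexes yield identical contracta, and the nested case closes with exactly the two substitution lemmas you cite, the reflexive-transitive closure in the joining condition absorbing the duplication caused by substitution and the erasure caused by projection and by the untaken branches of a sum elimination. With local confluence in hand, Newman's Lemma plus the strong normalization theorem gives confluence, hence existence and uniqueness of $\beta$-normal forms.

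One caveat concerns your closing parenthetical. The proof you actually give \emph{does} depend on typing: Newman's Lemma requires strong normalization, which fails for untyped terms (the usual self-application term already diverges under the application rule alone). The typing-independent statement you allude to is obtained by the \emph{other} standard route --- Tait--Martin-L\"of parallel reduction, or more generally confluence of orthogonal rewrite systems --- which establishes full confluence directly, with no termination hypothesis. That distinction is worth keeping straight: orthogonality gives confluence of the rules as such; your Newman argument gives it only on the well-typed terms where SN holds, which is all the paper needs, but is not a typing-free result. As written, the proof is correct provided you keep all terms well-typed via subject reduction, exactly as you propose.
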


\begin{lemma}[Inversion]\label{lem:lc-inversion}
  A closed $\beta$-normal form (in an empty context) starts with an
  introduction form.
\end{lemma}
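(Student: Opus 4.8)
The plan is to prove a strengthened statement by structural induction on the term, then instantiate it to the empty context. Call a $\beta$-normal form \emph{neutral} when its head --- the variable reached by following the principal premises of its elimination forms --- is free. I would show that every $\beta$-normal $\ea$ with $\dernat \ca \ea \ta$ is \emph{either} an introduction form ($\lam \eva \ea$, $\pairfam \ea$, $\eunit$, or $\inj \ii \ea$) \emph{or} neutral with its head variable occurring in $\ca$. The Inversion Lemma is then the special case where $\ca$ is empty: a closed normal form has no free variable available to be a head, so it cannot be neutral and must be an introduction form.

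The introduction cases and the variable case are immediate: the four introduction forms satisfy the first disjunct by definition, while a lone variable $\eva$ is neutral with head $\eva$, which must occur in $\ca$ by well-typedness.

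The heart of the argument is the four elimination cases, each combining $\beta$-normality with the typing derivation. For $\app {\ea'} \eb$, normality forbids $\ea'$ from being a $\lambda$-abstraction (otherwise we would have a $\beta$-redex), and typing forces $\ea'$ to have a function type; by the induction hypothesis $\ea'$ is an introduction form or neutral, but the only introduction form at a function type is a $\lambda$-abstraction, which is excluded, so $\ea'$ is neutral and $\app {\ea'} \eb$ inherits its head. The projection case $\proj \ii {\ea'}$ and the sum-elimination case $\fdesumfam {\ea'} \eva \eb$ are identical in shape: normality rules out the matching introduction (a pair, respectively an injection), typing forces the principal premise $\ea'$ to have product, respectively sum, type, and the unique introduction form at that type is precisely the excluded one --- so $\ea'$ is neutral and its head is inherited. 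The empty-elimination case $\eabsurd {\ea'}$ is in fact the simplest: typing gives $\dernat \ca {\ea'} \tempty$, and since the grammar provides \emph{no} introduction form at the empty type, the induction hypothesis yields directly that $\ea'$ is neutral.

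The step I expect to require the most care is the bookkeeping on where the head variable lives, so that the free-variable argument survives binders. In $\fdesumfam {\ea'} \eva \eb$ the branch binders $\eva_\ii$ scope only over the $\eb_\ii$, while the scrutinee $\ea'$ is typed in the \emph{unchanged} context $\ca$; hence the head variable delivered by the induction hypothesis on $\ea'$ genuinely occurs in $\ca$ and remains free in the whole term, and similarly applications, projections, and absurdities bind nothing along the spine. Once this invariant is maintained, specializing to the empty context closes the proof; as a byproduct the statement also records that there is no closed $\beta$-normal form at the empty type, since such a term would have to be an introduction form and none exists.
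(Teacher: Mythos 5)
Your proof is correct, and it is the standard argument: strengthen the statement to ``every well-typed $\beta$-normal form is either an introduction form or a neutral term whose head variable is bound in the context,'' prove it by induction over the term structure using typing to pin down which introduction form could appear at each principal premise, and specialize to the empty context. For comparison, note that the paper states this lemma as background (alongside strong normalization and confluence) and provides no proof of it, not even an outline in the appendix, so there is no competing approach; your handling of the one case specific to this calculus is the right one --- since $\tempty$ has no introduction form, the induction hypothesis directly forces the argument of an absurdity elimination to be neutral, which also yields your (correct) byproduct that there is no closed normal inhabitant of $\tempty$.
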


\subsection{Contextual equivalence}

A common definition of contextual equivalence, for System F for
example, is that two terms $\dernat \ca {\ea \bocomma \eb} \ta$ are
contextually equivalent if there exists no separating context
$\dernat \emptyset {\plug \Ca {\dernat \ca \ehole \ta}} {\sum \tunit
  \tunit}$ such that $\plug \Ca \ea \argeq\beta \inj 1 \eunit$ and
$\plug \Ca \eb \argeq\beta \inj 2 \eunit$. For a trivial example, if
$\ca \eqdef
(\var \eva {\sum \tunit \tunit},
 \var \evb {\sum \tunit \tunit})$
, then
$\plug \Ca \ehole
\eqdef
\app {\app
  {\plam \eva {\lam \evb \ehole}}
  {\pinj 1 \eunit}}
  {\pinj 2 \eunit}$
separates $\eva$ and $\evb$.

This definition is too weak in presence of atomic types. Consider the
context $\ca \eqdef (\var \eva \xa, \var \evb \xa)$ and the terms
$\dernat \ca {\bo{\eva \bocomma \evb}} \xa$. We want a definition of
contextual equivalence that declares these terms inequivalent, but
there is no distinguishing context in the sense above as we have no
information on $\xa$, and thus no way to provide distinct values for
$\eva$ and $\evb$. The variables $\eva$ and $\evb$ \emph{could} be
distinct depending on what is the unknown type represented by the
abstract type $\xa$.

\begin{definition}\label{def:model}
  A \emph{model} $\ma$ is a mapping from atomic types to closed
  types.
\end{definition}

If $x$ is some syntactic object containing types, we write
$\modapp \ma x$ for the result of replacing each atomic type in $x$ by
its image in the model $\ma$. If $\dernat \ca \ea \ta$ holds, then it
is also the case that
$\dernat {\modapp \ma \ca} \ea {\modapp \ma \ta}$ holds; we may write
the term $\modapp \ma \ea$ as well, to emphasize that we look at its
typing in the model.

\begin{definition}\label{def:conteq}
  If $\dernat \ca {\ea \bocomma \eb} \ta$ and for a given model $\ma$, we say
  that $\ea$ and $\eb$ are \emph{contextually equivalent in $\ma$},
  written $\ea \argeq{\conteqmod \ma} \eb$, if
  \begin{smallmathpar}
    \forall \Ca,\ %
    \dernat \emptyset
      {\plug \Ca {\dernat {\ma(\ca)} \ehole {\ma(\ta)}}}
      {\sum \tunit \tunit}
    \Rightarrow
    \plug \Ca {\ma(\ea)}
    \argeq\beta
    \plug \Ca {\ma(\eb)}
  \end{smallmathpar}%
  We say that $\ea$ and $\eb$ are \emph{contextually equivalent},
  written $\ea \argeq\conteq \eb$, if they are contextually equivalent
  in all models.
\end{definition}

\subsection{Semantic equivalence}

\begin{definition}[Semantics of types]
  For a closed type $\ta$ we define the set of \emph{semantic values}
  of $\ta$, written $\closedtypesem \ta$, by induction on $\ta$ as
  follows:
  \begin{mathpar}
    \begin{array}{l@{\quad\eqdef\quad}l}
      \closedtypesem{\fun \ta \tb}
      & \text{total functions from $\closedtypesem \ta$ to $\closedtypesem \tb$} \\
      \closedtypesem {\prod \ta \tb}
      & \{ (\va, \vb) \mid \va \in \closedtypesem \ta, \vb \in \closedtypesem \tb \} \\
      \closedtypesem \tunit
      & \{ \star \} \\
      \closedtypesem {\sum \ta \tb}
      & \{ (1, \va) \mid \va \in \closedtypesem \ta \}
        \uplus
        \{ (2, \vb) \mid \vb \in \closedtypesem \tb \} \\
      \closedtypesem \tempty & \emptyset \\
    \end{array}
  \end{mathpar}

  For an arbitrary type $\ta$ we write $\typesem \ta \ma$ for
  $\closedtypesem {\modapp \ma \ta}$.
\end{definition}

We remark that $\typesem \ta \ma$ is always a finite type, whose elements
can be enumerated. It is thus decidable whether two elements of
$\typesem \ta \ma$ are equal as mathematical objects; at function
types, one compares two functions pointwise on their finite domain.

\begin{definition}[Semantics of environments]
  For a closed typing environment $\ca$, we define the set
  $\closedenvsem \ca$ of \emph{semantic valuations},
  functions from the domain of $\ca$ to semantic values such that:
  \begin{mathpar}
    \closedenvsem \ca

    \eqdef

    \{ \vca\ \mid\ \forall {\var \eva \ta} \in \ca,\quad
        \vca(\eva) \in \closedtypesem \ta \}
  \end{mathpar}%
  We write $\envsem \ca \ma$ for $\closedenvsem {\modapp \ma \ca}$.
\end{definition}

\begin{definition}[Semantics of typing judgments]
  We write $\closedtypingsem \ca \ta$ for the set of functions from
  semantic valuations of $\ca$ to semantic values in $\ta$:
  $
    \closedtypingsem \ca \ta
    \eqdef
    \closedenvsem \ca
    \to
    \closedtypesem \ta
  $
  .
  We write $\typingsem \ca \ta \ma$ for
  $\closedtypingsem {\modapp \ma \ca} {\modapp \ma \ta}$.
\end{definition}

\begin{definition}[Semantics of terms]
  \label{def:term-semantics}
  For a term $\dernat \ca \ea \ta$ in a judgment with closed types, we
  write $\closedtermsem \ea$ for the set-theoretic semantics of $\ea$,
  as an object of $\closedtypingsem \ca \ta$. The (natural) definition
  is given in full in \cameraversion{the long
    version}{\fullref{def:annex-term-semantics}}, but for example
  \begin{smallmathpar}
    \begin{array}{l@{\eqdef}l}
      \closedtermsem {\lam \ev \ea} (\vca) & (\va \mapsto \closedtermsem \ea (\vca, \ev \mapsto \va))
    \end{array}
  \end{smallmathpar}%
  We write $\termsem \ea \ma$ for $\closedtermsem {\modapp \ma \ea}$.
\end{definition}

\begin{definition}[Semantic equivalence]\label{def:semeq}
  For any terms $\dernat \ca {\ea \bocomma \eb} \ta$ and model $\ma$, we say
  that $\ea$ and $\eb$ are \emph{semantically equivalent in $\ma$},
  written $\ea \argeq{\semeqmod \ma} \eb$, if their semantics are
  (pointwise) equal.
  \begin{smallmathpar}
    \ea \argeq{\semeqmod \ma} \eb
    \ %
    \eqdef
    \ %
    (\forall \vca \in \envsem \ca \ma,\ %
    \termsem \ea \ma (\vca) = \termsem \eb \ma (\vca) \in \typesem \ta \ma)
  \end{smallmathpar}%
  We say that $\ea$ and $\eb$ are \emph{semantically equivalent},
  written $\ea \argeq\semeq \eb$, if they are semantically equivalent
  in all models $\ma$.
\end{definition}

\subsection{Easy relations between equivalences}

\begin{theorem}[$\betaeta$ is semantically sound]
  \label{thm:betaeta-implies-semeq}
  If $\ea \argeq\betaeta \eb$ then $\ea \argeq \semeq \eb$.
\end{theorem}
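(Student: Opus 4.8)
The plan is to exploit the fact that $\parel{\argeq\betaeta}$ is, by definition, the \emph{smallest} congruence containing the $\beta$-reduction and $\eta$-expansion rules of \fullrefnoname{fig:LCfull-betaeta}. It therefore suffices to show that semantic equivalence is itself a congruence that contains each of those generating rules; minimality then yields the inclusion $\parel{\argeq\betaeta} \subseteq \parel{\argeq\semeq}$. I would fix a model $\ma$ throughout and reason about the set-theoretic interpretation $\closedtermsem{-}$ model by model, since $\parel{\argeq\semeq}$ quantifies over all models and each property below holds uniformly in $\ma$. Because semantic equality is symmetric, it does not matter that the rules are oriented as reductions and expansions.

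The congruence property is immediate from the compositional definition of $\closedtermsem{-}$ (\fullref{def:term-semantics}): each term former is interpreted by a fixed set-theoretic operation applied to the interpretations of its immediate subterms, so replacing a subterm by one with equal semantics leaves the whole interpretation unchanged, in every model. The one genuine tool I need is a \emph{substitution lemma}, proved by a routine induction on $\ea$:
\[
  \termsem{\subst\ea\eva\eb}{\ma}(\vca) = \termsem\ea\ma(\vca, \eva\mapsto\termsem\eb\ma(\vca)).
\]
With this in hand, validating the rules is mechanical.

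For the $\beta$-rules I would compute both sides and appeal to the substitution lemma: for functions, $\termsem{\app{(\lam\eva\ea)}\eb}{\ma}(\vca)$ unfolds to $\termsem\ea\ma(\vca,\eva\mapsto\termsem\eb\ma(\vca))$, which is $\termsem{\subst\ea\eva\eb}{\ma}(\vca)$; the product and sum cases are analogous, the sum case reducing a case-split over an injection to its matching branch. For the $\eta$-rules I would use the defining shape of each semantic domain: function $\eta$ follows from extensionality of set-theoretic functions; product $\eta$ from surjective pairing; unit $\eta$ from $\closedtypesem\tunit = \{\star\}$ being a singleton, so that every value equals $\star$; and sum $\eta$ from the observation that any element of $\closedtypesem{\sum\ta\tb}$ is of the form $(1,\va)$ or $(2,\vb)$, so the semantic case-split selects exactly the branch in which the original injection has been substituted back, matched to the syntactic rule via the substitution lemma.

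The case I expect to require the most care is the empty type. The derived rule equates \emph{any} two terms of the same type in a context that proves $\tempty$, and more primitively $\subst\ea\eva\eb \argred\eta \eabsurd\eb$ when $\dernat\ca\eb\tempty$. Here I would argue by \emph{vacuity}: if $\dernat\ca\eb\tempty$ then for any $\vca \in \envsem\ca\ma$ we would have $\termsem\eb\ma(\vca) \in \closedtypesem{\modapp\ma\tempty} = \emptyset$, which is impossible, so $\envsem\ca\ma = \emptyset$. Both sides are then functions out of the empty set, hence equal, and the universally-quantified equality of \fullref{def:semeq} holds trivially. Collecting these cases, semantic equivalence is a congruence closed under every generating rule, so it contains $\parel{\argeq\betaeta}$, which is the claim.
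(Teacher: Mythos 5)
Your proposal is correct and takes essentially the same route as the paper: the paper's proof is a one-line induction on the evidence for $\ea \argeq\betaeta \eb$, observing that semantic equivalence is a congruence and that the $\beta$/$\eta$ generators are identities in the semantic domain, which is exactly your minimality argument. The substitution lemma and the vacuity argument for the empty-type rule (an inconsistent context has an empty set of valuations, so the equality holds trivially) are precisely the details the paper leaves implicit.
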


\begin{theorem}[Semantic equivalence implies contextual equivalence]
  \label{thm:semeq-implies-conteq}
  If $\ea \argeq\semeq \eb$ then $\ea \argeq\conteq \eb$.
\end{theorem}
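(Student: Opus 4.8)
The plan is to unfold both definitions and reduce the statement to a soundness-and-faithfulness argument on the set-theoretic semantics. Fix a model $\ma$ and a context $\Ca$ with $\dernat \emptyset {\plug \Ca {\dernat {\modapp \ma \ca} \ehole {\modapp \ma \ta}}} {\sum \tunit \tunit}$; since every type occurring is then closed, I work throughout with the closed semantics $\closedtermsem{-}$. The goal is to show $\plug \Ca {\modapp \ma \ea} \argeq\beta \plug \Ca {\modapp \ma \eb}$, i.e.\ that two closed terms of boolean type are $\beta$-equal. I would obtain this from two facts: (i) the semantics is \emph{compositional}, so that plugging semantically equal terms into $\Ca$ yields semantically equal results, and (ii) the semantics is \emph{faithful at closed boolean type}, so that two closed terms of type $\sum \tunit \tunit$ with equal denotation are $\beta$-equal.

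For (i), I would prove by induction on the context $\Ca$ the following compositionality lemma: if $\dernat \ca {\ea \bocomma \eb} \ta$ satisfy $\closedtermsem \ea = \closedtermsem \eb$ as elements of $\closedtypingsem \ca \ta$, then for the resulting judgment $\dernat \cb {\plug \Ca \ea \bocomma \plug \Ca \eb} \tc$ we again have $\closedtermsem {\plug \Ca \ea} = \closedtermsem {\plug \Ca \eb}$. Each case follows because the denotation of a term former is a fixed function of the denotations of its immediate subterms, as read off \fullref{def:term-semantics}. The one point requiring care is holes occurring under binders, e.g.\ $\Ca = \lam \eva {\Ca'}$: there the inductive hypothesis is needed for $\Ca'$ in the extended context $\ca, \var \eva \tb$, so I first record that $\closedtermsem \ea = \closedtermsem \eb$ over $\ca$ implies the same equality over any weakening $\ca, \var \eva \tb$ --- immediate, since a valuation of the larger context restricts to one of $\ca$ and the added component is ignored by $\ea$ and $\eb$. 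Applying this lemma with the hypothesis $\ea \argeq{\semeqmod \ma} \eb$ (an instance of $\ea \argeq\semeq \eb$) gives $\closedtermsem {\plug \Ca {\modapp \ma \ea}} = \closedtermsem {\plug \Ca {\modapp \ma \eb}}$, a single element of $\closedtypesem {\sum \tunit \tunit} = \{(1,\star),(2,\star)\}$.

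For (ii), let $\ec$ be a closed term of type $\sum \tunit \tunit$. By strong normalization and confluence it has a unique $\beta$-normal form $\ec'$, and by \fullref{lem:lc-inversion} this normal form starts with an introduction form, hence is $\inj 1 {\ea''}$ or $\inj 2 {\ea''}$ with $\ea''$ a closed normal term of type $\tunit$; a further application of the lemma forces $\ea'' = \eunit$, so $\ec'$ is $\inj 1 \eunit$ or $\inj 2 \eunit$. A direct computation gives $\closedtermsem {\inj 1 \eunit} = (1,\star)$ and $\closedtermsem {\inj 2 \eunit} = (2,\star)$, which are distinct, while $\ec \argeq\betaeta \ec'$ yields $\closedtermsem \ec = \closedtermsem {\ec'}$ by \fullref{thm:betaeta-implies-semeq}. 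Hence the denotation of a closed boolean term determines its normal form, so two such terms with equal denotation share a normal form and are therefore $\beta$-equal.

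Combining (i) and (ii): the two plugged terms $\plug \Ca {\modapp \ma \ea}$ and $\plug \Ca {\modapp \ma \eb}$ have equal denotation, hence are $\beta$-equal, which is exactly $\ea \argeq{\conteqmod \ma} \eb$; as $\ma$ was arbitrary, $\ea \argeq\conteq \eb$. I expect the main obstacle to be the compositionality lemma of step (i) --- not its individual cases, which are routine, but setting it up cleanly so that the induction traverses binders: this is where the weakening-stability of semantic equality is essential, and where an imprecise statement of the lemma would fail to go through.
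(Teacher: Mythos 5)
Your proposal is correct and follows essentially the same route as the paper's own proof: the paper also fixes a model and a context, uses congruence of the set-theoretic semantics (its one-hole context semantics $\contsem \Ca \ma$ playing the role of your compositionality lemma), applies \fullref{lem:lc-inversion} together with strong normalization to get normal forms $\inj \ii \eunit$ and $\inj \jj \eunit$, and uses \fullref{thm:betaeta-implies-semeq} to force $\ii = \jj$. The only difference is one of detail: you prove by induction on $\Ca$ (with the weakening observation for binders) the compositionality step that the paper simply asserts as ``semantic equality is a congruence.''
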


\begin{version}{\Long}
  \paragraph{On models using closed types \textit{(Long version)}}

  Our definition of a model $\ma$ allows to instantiate atomic types
  with any closed type; this instantiation happens in the world of
  syntax, before we interpret types as sets. It is more common, when
  giving set-theoretic semantics, to instantiate atoms only in the
  semantics, defining models as mapping from atomic types to
  arbitrary \emph{sets}.

  We are fortunate that our grammar of closed types is expressive
  enough to describe any finite set; if we did not have units
  ($\tunit$ and $\tempty$), for example, we could not do this. Our
  more syntactic notion of model can be shared by our definition of
  contextual and semantic equivalence, which is very pleasing.

  Note that, as a consequence, the finite model property is sort of
  built into our definition of the semantic and contextual
  equivalences: we may only distinguish atoms by instantiating them
  with \emph{finite} sets, not arbitrary sets. One may wonder, then,
  whether a notion of semantic equivalence that allows to
  instantiate atoms with \emph{infinite} sets would behave
  differently -- did we really prove the finite model property, or
  just enforce it by stifling our notion of equivalence?

  The coincidence of the finite and infinite interpretations is
  a consequence of the later results of this work on the coincidence
  of semantic and $\beta$-equivalence. If we added the ability to
  instantiate atoms by infinite sets, we would distinguish more: we
  could not prove more terms equal, but would only prove more terms
  \emph{different}. But any pair of terms equal in the finite-set
  semantics is $\betaeta$-equivalent, and
  \fullref{thm:betaeta-implies-semeq} seamlessly extends to infinite
  sets -- those terms must still be equal in an infinite-set
  semantics.
\end{version}

\subsection{Fun-less types and reification}
\label{subsec:funless}

To prove that contextual equivalence implies semantic
equivalence, we build a reification procedure $\reify \va \ma$ that goes
from $\typesem \ta \ma$ to closed terms in $\modapp \ma \ta$ and
satisfies the following properties:
\begin{smallmathpar}
  \forall \va,\ \termsem {\reify \va \ma} \ma = \va

  \forall (\dernat \emptyset \ea \ta),\ %
    \reify {\termsem \ea \ma} \ma \argeq\betaeta \ea
\end{smallmathpar}%

This is difficult in the general case, as it corresponds to
a normalization-by-evaluation procedure -- when you can furthermore
prove that $\reify \va \ma$ is always a normal term. The reification
of finite sums and products is straightforward, but function types are
delicate; intuitively, to reify a function one builds a decision tree
on its input, which requires an enumeration procedure for the input
type~\citep*{altenkirch2004} which may itself be a function, etc.

In the present case, however, the fact that we work with closed types
(no atoms) enables a useful hack: we can eliminate function types by
rewriting them in isomorphic types expressed in
$\LC{(\sysprodu, \syssumu)}$ only. This was inspired by an idea of
Danko Ilik (see for example \citet*{Ilik2015}), which removes sum types
rather than function types. In presence of atomic or infinite types,
neither sum nor function types can be fully removed. In absence of
atoms, function types can be fully removed, but sum types cannot --
there is no type isomorphic to $\tbool$ in $\LC{(\sysfun,\sysprodu)}$.

\fullref{fig:dataty-nofun} defines the fun-less $\dataty \ta$ for each
closed type $\ta$. Its definition is structurally recursive, and uses
an auxiliary definition $\datatyarr {\fun \ta \tb}$ that takes
a function type whose left-hand side $\ta$ is fun-less, and is
structurally recursive on this left-hand side.
We also define pair of transformations $\defun \wild \ta$ from $\ta$
to $\dataty \ta$ and $\refun \wild \ta$ from $\dataty \ta$ to $\ta$,
on both terms and semantic values. On semantic values they are
inverse; on terms they are inverse modulo $\betaeta$-equivalence. The
(natural) definitions are given in full in \cameraversion{the long
  version}{\fullref{def:annex-fun-less-types}}. It uses auxiliary
definitions $\defunarr \wild {\fun \ta \tb}$ and
$\refunarr \wild {\fun \ta \tb}$, and has for example
$\defun \va {\fun \ta \tb}
\eqdef
\defunarr {\vb \mapsto \defun {v (\refun \vb \ta)} \tb} {\fun \ta \tb}$
and
$\defunarr \ea {\fun {\prodfam \ta} \tb}
\eqdef
\defunarr
  {\lam {\eva_1}
    {\defunarr
      {\lam {\eva_2} {\app \ea {\pairfam \eva}}}
      {\fun {\ta_2} \tb}}}
  {\fun {\ta_1} {\pfun {\ta_2} \tb}}$\
.

Finally, we also have that the isomorphisms on semantic values and
ground terms commute.

\begin{smallmathparfig}
  {fig:dataty-nofun}
  {Fun-less data types}
  \dataty {\fun \ta \tb} \eqdef \datatyarr {\fun {\dataty \ta} {\dataty \tb}}

  \dataty {\prodfam \ta} \eqdef \prodlam {\dataty {\ta_\idx}}

  \dataty \tunit
  \eqdef \tunit

  \dataty {\sumfam \ta}
  \eqdef \sumlam {\dataty {\ta_\idx}}

  \dataty \tempty
  \eqdef \tempty
  \\
  \begin{array}{l@{\quad\eqdef\quad}l}
    \datatyarr{\fun {(\prodfam \ta)} \tc}
    & \datatyarr{\fun {\ta_1} {\datatyarr{\fun {\ta_2} \tc}}}
    \\ \datatyarr{\fun \tunit \tb}
    & \tb
    \\ \datatyarr{\fun {(\sumfam \ta)} \tb}
    & \datatyarr{\fun {\ta_1} \tb} \times \datatyarr{\fun {\ta_2} \tb}
    \\ \datatyarr{\fun \tempty \tb}
    & \tunit
  \end{array}
  \\
  \forall \va, \refun {\defun \va \ta} \ta = \va

  \forall \ea, \refun {\defun \ea \ta} \ta \argeq\betaeta \ea

  \forall \ea,\quad
    \defun {\closedtermsem \ea} \ta = \closedtermsem {\defun \ea \ta}
    \ \wedge\ %
    \refun {\closedtermsem \ea} \ta = \closedtermsem {\refun \ea \ta}
\end{smallmathparfig}

\begin{theorem}[Reification]
  \label{thm:reification}
  For each semantic value $\va$ in $\closedtypesem \ta$ we can define
  a closed term $\closedreify \va$ such that
  $\closedtermsem {\closedreify \va} = \va$.
\end{theorem}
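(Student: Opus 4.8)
The plan is to offload the genuinely hard part of reification -- function types -- onto the fun-less machinery of \fullref{fig:dataty-nofun}, reifying only values of fun-less type directly and then transporting along the isomorphisms $\defun{\wild}{\ta}$ and $\refun{\wild}{\ta}$. Concretely, I first define reification for \emph{fun-less} closed types, then extend it to an arbitrary closed type $\ta$ by setting
\[
  \closedreify{\va} \eqdef \refun{\closedreify{\defun{\va}{\ta}}}{\ta},
\]
where the inner occurrence of $\closedreify$ is the fun-less reification applied to $\defun{\va}{\ta} \in \closedtypesem{\dataty{\ta}}$, a value of the fun-less type $\dataty{\ta}$.

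The fun-less reification is defined by structural induction on the closed type, which by construction involves only the cases $\tunit$, $\tempty$, products and sums -- no function case ever arises. The unique value $\star$ at type $\tunit$ reifies to $\eunit$; a value $(\va, \vb)$ at a product type reifies to $\pair{\closedreify{\va}}{\closedreify{\vb}}$; a value $(\ii, \va)$ at a sum type reifies to $\inj{\ii}{\closedreify{\va}}$; and the empty type $\tempty$ contributes nothing, since $\closedtypesem{\tempty} = \emptyset$. That these are closed terms of the right type is immediate, and a routine induction unfolding \fullref{def:term-semantics} at the empty valuation gives $\closedtermsem{\closedreify{\va}} = \va$ in each case: for instance the pair case reduces, by the clause for pairs, to the two induction hypotheses.

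Correctness of the general definition then follows by chaining the three properties listed at the bottom of \fullref{fig:dataty-nofun}:
\[
  \closedtermsem{\closedreify{\va}}
  = \closedtermsem{\refun{\closedreify{\defun{\va}{\ta}}}{\ta}}
  = \refun{\closedtermsem{\closedreify{\defun{\va}{\ta}}}}{\ta}
  = \refun{\defun{\va}{\ta}}{\ta}
  = \va,
\]
using in turn the commutation of $\refun{\wild}{\ta}$ with the semantics, the correctness of the fun-less reification just established, and the fact that $\refun{\wild}{\ta}$ and $\defun{\wild}{\ta}$ are mutually inverse on semantic values. Well-definedness is clear, since $\defun{\va}{\ta}$ lands in a fun-less type and the fun-less reification recurses only on strict subtypes.

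The real difficulty is not in this argument but in the supporting infrastructure it assumes: defining $\dataty{\wild}$ together with $\defun{\wild}{\ta}$, $\refun{\wild}{\ta}$, and establishing that they form an isomorphism of semantic values and a $\betaeta$-isomorphism of terms, commuting with the interpretation. The one place where this genuinely constrains the result is the closedness hypothesis: eliminating function types in favour of products and sums is possible only because $\ta$ contains no atomic (or infinite) types, so that $\datatyarr{\fun{\ta}{\tb}}$ can unfold the finite domain $\ta$ into finitely many factors. Once that infrastructure is in place, reification proper is, as above, a short structural induction followed by transport -- which is exactly why isolating the fun-less fragment is the key move.
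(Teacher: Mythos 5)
Your proposal is correct and follows essentially the same route as the paper: the paper likewise defines $\closedreify \va \eqdef \refun {\closedreifyp {\defun \va \ta}} \ta$ via an auxiliary reification on fun-less types (where inversion is immediate, there being no function case), and concludes using the commutation of $\closedtermsem \wild$ with $\defun \wild \ta$, $\refun \wild \ta$ and the fact that these are mutually inverse on semantic values. Your write-up just spells out the fun-less cases and the chain of equalities that the paper leaves implicit.
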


\fullref{thm:reification} establishes that semantic inhabitation and
provability coincide at closed types.

\begin{corollary}[Inhabited or inconsistent]
  \label{cor:inhabited-or-inconsistent}
  For $\ca, \ta$ closed, if $\dernatded \ca \ta$ then either
  $\dernatded \emptyset \ta$ or $\dernatded \ca \tempty$.
\end{corollary}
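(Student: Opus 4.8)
The plan is to reduce the statement to the characterization recorded just after \fullref{thm:reification}, namely that at closed types provability and semantic inhabitation coincide: for a closed type $\tc$, one has $\dernatded \emptyset \tc$ if and only if $\closedtypesem \tc \neq \emptyset$. The implication from inhabitation to provability is exactly \fullref{thm:reification}, which turns a semantic value into a closed witness; conversely, the semantics $\closedtermsem \eb$ of any closed witness $\dernat \emptyset \eb \tc$, evaluated at the unique valuation of the empty context, lands in $\closedtypesem \tc$ and so witnesses its non-emptiness. With this equivalence in hand, the corollary becomes a short case analysis.

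Given $\dernatded \ca \ta$, I would fix a witness $\dernat \ca \ea \ta$ and split on whether the set of semantic valuations $\closedenvsem \ca$ is empty. Since $\closedenvsem \ca$ is the product of the semantics $\closedtypesem \tb$ ranging over the hypotheses $(\var \eva \tb) \in \ca$, and this product is finite, it is empty exactly when some hypothesis type has empty semantics; this is the dichotomy that will separate the two disjuncts.

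In the non-empty case, $\closedenvsem \ca \neq \emptyset$, I would pick any valuation $\vca \in \closedenvsem \ca$ and evaluate the witness, obtaining $\closedtermsem \ea (\vca) \in \closedtypesem \ta$. This shows $\closedtypesem \ta \neq \emptyset$, so the coincidence above yields $\dernatded \emptyset \ta$ --- the left disjunct. In the empty case, $\closedenvsem \ca = \emptyset$, some hypothesis $(\var \eva \tb) \in \ca$ has $\closedtypesem \tb = \emptyset$. The key observation is that the closed type $\fun \tb \tempty$ is then semantically inhabited: $\closedtypesem{\fun \tb \tempty}$ is the set of total functions from $\emptyset$ to $\emptyset$, which contains the empty function. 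Hence the coincidence gives a closed witness $\dernat \emptyset \ec {\fun \tb \tempty}$, and applying it to the hypothesis yields $\dernat \ca {\app \ec \eva} \tempty$, so $\dernatded \ca \tempty$ --- the right disjunct.

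Given reification, this argument is essentially routine, so the genuine content lives in \fullref{thm:reification} itself. Within the corollary the only step requiring a moment's thought is the inconsistent case: one must recognize that an uninhabited hypothesis type $\tb$ has to be converted into an \emph{explicit} proof of $\tempty$, and that reification supplies this precisely by realizing the empty function of type $\fun \tb \tempty$ as a closed term that can then be applied to the offending variable. The remaining moves are a direct appeal to the term-semantics definition and to the product description of $\closedenvsem \ca$.
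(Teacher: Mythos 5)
Your proposal is correct and takes exactly the route the paper intends: the corollary is stated immediately after \fullref{thm:reification} with the remark that reification makes provability and semantic inhabitation coincide at closed types, and the paper leaves the remaining details implicit. Your case split on whether $\closedenvsem \ca$ is empty, with the inconsistent case handled by reifying the empty function at type $\fun \tb \tempty$ and applying it to the offending variable, is precisely the natural way to fill in those details.
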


\begin{lemma}
  \label{lem:reification-inverse}
  For any closed term of closed type $\dernat \emptyset \ea \ta$
  we have $\closedreify {\closedtermsem \ea} \argeq\betaeta \ea$.
\end{lemma}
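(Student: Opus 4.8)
The plan is to unfold the reification procedure of \fullref{thm:reification} through the fun-less types of \fullref{subsec:funless}. Recall that $\closedreify{\va}$ is built by transporting $\va \in \closedtypesem{\ta}$ to the fun-less type, reifying there, and transporting back: writing $r$ for the (straightforward) reification of fun-less semantic values --- no function type occurs in $\dataty{\ta}$, only finite sums, products, $\tunit$ and $\tempty$ --- we have $\closedreify{\va} = \refun{r(\defun{\va}{\ta})}{\ta}$, which is exactly what makes $\closedtermsem{\closedreify{\va}} = \va$ hold via the commutation laws of \fullref{fig:dataty-nofun}. Given $\ea$ with $\va \eqdef \closedtermsem{\ea}$, the commutation law $\defun{\closedtermsem{\ea}}{\ta} = \closedtermsem{\defun{\ea}{\ta}}$ identifies $\defun{\va}{\ta}$ with the semantics of the fun-less term $\ec \eqdef \defun{\ea}{\ta}$, so the whole lemma reduces to a sub-claim about fun-less terms, composed afterwards with $\refun{\wild}{\ta}$.

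The sub-claim I would isolate is: \emph{for every closed term $\ec$ of fun-less closed type, $r(\closedtermsem{\ec}) \argeq\betaeta \ec$}, proved by structural induction on the (fun-less) type. At $\tunit$, $r(\closedtermsem{\ec}) = \eunit$, which is $\betaeta$-equal to $\ec$ by $\eta$-expansion at the unit type. At a product type, compositionality of the semantics on projections gives $r(\closedtermsem{\ec}) = \pair{r(\closedtermsem{\proj{1}{\ec}})}{r(\closedtermsem{\proj{2}{\ec}})}$, which by the induction hypothesis is $\betaeta$-equal to $\pair{\proj{1}{\ec}}{\proj{2}{\ec}}$, hence to $\ec$ by $\eta$ at products. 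At a sum type I would use \fullref{lem:lc-inversion}: the $\beta$-normal form of the closed term $\ec$ (unique by strong normalization and confluence) starts with an introduction, so it is an injection $\inj{i}{\eb}$; then $\ec \argeq\betaeta \inj{i}{\eb}$, and by \fullref{thm:betaeta-implies-semeq} together with compositionality $\closedtermsem{\ec} = (i, \closedtermsem{\eb})$, whence $r(\closedtermsem{\ec}) = \inj{i}{r(\closedtermsem{\eb})} \argeq\betaeta \inj{i}{\eb} \argeq\betaeta \ec$ by the induction hypothesis and congruence. The case of $\tempty$ is vacuous: by \fullref{lem:lc-inversion} no closed $\beta$-normal form can have type $\tempty$, since there is no introduction form for it, and by strong normalization every closed term reduces to one, so there is no closed term of type $\tempty$.

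Combining the pieces, with $\va = \closedtermsem{\ea}$ and $\ec = \defun{\ea}{\ta}$, the sub-claim applied to $\ec$ gives $r(\defun{\va}{\ta}) = r(\closedtermsem{\ec}) \argeq\betaeta \ec$. Since $\refun{\wild}{\ta}$ is a term transformation assembled from the term formers (as in the clauses defining it in \fullref{fig:dataty-nofun}), it plugs its argument into applicative and abstraction positions and is therefore compatible with $\betaeta$-equivalence; hence $\closedreify{\va} = \refun{r(\defun{\va}{\ta})}{\ta} \argeq\betaeta \refun{\defun{\ea}{\ta}}{\ta}$, and this last term is $\betaeta$-equal to $\ea$ by the term-level inverse law $\refun{\defun{\ea}{\ta}}{\ta} \argeq\betaeta \ea$ of \fullref{fig:dataty-nofun}. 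This yields $\closedreify{\closedtermsem{\ea}} \argeq\betaeta \ea$.

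The \textbf{main obstacle} is the sum case of the sub-claim, where which injection a value comes from cannot be read off the term syntactically; I resolve this by appealing to inversion on the $\beta$-normal form rather than to the sum $\eta$-rule. The remaining delicate points are bookkeeping: checking that $\refun{\wild}{\ta}$ respects $\betaeta$ (it is built from $\betaeta$-congruent contexts), and that the $\tunit$ and $\tempty$ cases are discharged by the corresponding $\eta$-rules of \fullref{fig:LCfull-betaeta}.
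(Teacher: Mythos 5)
Your proof is correct and takes essentially the same route as the paper's: reduce to fun-less types via the commutation laws of \fullref{fig:dataty-nofun}, prove the fun-less statement by an induction that uses \fullref{lem:lc-inversion} on closed $\beta$-normal forms (closedness of subterms being preserved precisely because fun-less types involve no binders), then transport back through $\refun{\wild}{\ta}$ and the term-level inverse law. The only cosmetic difference is that you induct on the type and work modulo $\betaeta$ throughout, whereas the paper first normalizes and proves the syntactic equality $\closedreifyp{\closedtermsem{\ea}} = \ea$ on normal forms before concluding by commutation.
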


\begin{theorem}[Contextual equivalence implies semantic equivalence]
  \label{thm:conteq-implies-semeq}
  If $\ea \argeq\conteq \eb$ then $\ea \argeq\semeq \eb$.
\end{theorem}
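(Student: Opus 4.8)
The plan is to argue by contraposition: assuming that $\ea$ and $\eb$ are \emph{not} semantically equivalent, I will exhibit a model together with a well-typed boolean context that distinguishes them, contradicting contextual equivalence. So suppose there is a model $\ma$ and a valuation $\vca \in \envsem \ca \ma$ with $\termsem \ea \ma (\vca) \neq \termsem \eb \ma (\vca)$ in the finite set $\typesem \ta \ma$. The first step is to \emph{close} the open terms without disturbing their value under $\vca$. Using \fullref{thm:reification}, reify each component $\vca(\eva)$ of the valuation into a closed term $\closedreify{\vca(\eva)}$ of the appropriate closed type in $\modapp \ma \ca$, and let
$\Ca_0 \eqdef \app{\cdots\app{(\lam {\eva_1}{\cdots\lam{\eva_n}{\ehole}})}{\closedreify{\vca(\eva_1)}}\cdots}{\closedreify{\vca(\eva_n)}}$
be the context that abstracts the variables of $\modapp \ma \ca$ and applies them to these reified arguments. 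By compositionality of the set-theoretic semantics (the clauses of \fullref{def:term-semantics} for abstraction and application, with $\closedtermsem{\closedreify \va} = \va$), the closed terms $\plug{\Ca_0}{\modapp \ma \ea}$ and $\plug{\Ca_0}{\modapp \ma \eb}$ of closed type $\modapp \ma \ta$ have semantics exactly $\termsem \ea \ma (\vca)$ and $\termsem \eb \ma (\vca)$, hence remain distinct.

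Next I reduce to a \emph{fun-less} type, which is the heart of the argument. Function types are the obstacle: one cannot directly observe a semantic value at a function type, since a function can only be probed by supplying arguments. I sidestep this exactly as in \fullref{subsec:funless}: using $\defun \wild {\modapp \ma \ta}$ from $\modapp \ma \ta$ to the fun-less type $\dataty{\modapp \ma \ta}$, packaged as a single-hole context $\Ca_1 \eqdef \app{(\lam z {\defun z {\modapp \ma \ta}})}{\ehole}$ (binding the hole to a variable $z$ so that the duplications occurring inside $\defun$ act on a variable, not on the hole). By the commutation property of \fullref{fig:dataty-nofun} we have $\closedtermsem{\plug{\Ca_1}{\ec}} = \defun{\closedtermsem \ec}{\modapp \ma \ta}$, so composing with $\Ca_0$ yields closed terms of the fun-less type $\dataty{\modapp \ma \ta}$ with semantics $\defun{\termsem \ea \ma (\vca)}{\modapp \ma \ta}$ and $\defun{\termsem \eb \ma (\vca)}{\modapp \ma \ta}$; these are still distinct because $\defun \wild {\modapp \ma \ta}$ and $\refun \wild {\modapp \ma \ta}$ are mutually inverse on semantic values.

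It then remains to discriminate two distinct semantic values $w_1 \neq w_2$ at a closed type built only from $\prodfam \ta$, $\tunit$, $\sumfam \ta$ and $\tempty$, which I do by a straightforward structural induction on the type paired with the two values. The cases $\tempty$ and $\tunit$ are vacuous (there are no two distinct values). At a sum type, if $w_1$ and $w_2$ sit in different injections a single case-analysis of the hole, returning $\etrue$ on the left branch and $\efalse$ on the right, already separates them; if they sit in the same injection I recurse on the underlying distinct values, casing the hole and discriminating the bound variable in the relevant branch. At a product type $w_1$ and $w_2$ differ in some component $i$, and I recurse under $\proj i {\ehole}$. This produces a context $\Ca_2$ sending a closed term of fun-less type to a closed boolean whose value is determined by the term's semantics and separates $w_1$ from $w_2$, one being $\etrue$ and the other $\efalse$.

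Composing the three contexts ($\Ca_0$ innermost, then $\Ca_1$, then $\Ca_2$) gives a closed context $\Ca$ of type $\sum \tunit \tunit$ such that $\plug \Ca {\modapp \ma \ea}$ and $\plug \Ca {\modapp \ma \eb}$ are closed booleans with distinct semantics. Since the semantics is sound for $\beta$ (\fullref{thm:betaeta-implies-semeq}) and every closed $\beta$-normal boolean is an injection of $\eunit$ (\fullref{lem:lc-inversion}), two closed booleans with distinct semantics cannot share a $\beta$-normal form, so they are not $\beta$-equivalent, one reducing to $\etrue$ and the other to $\efalse$. This is a separating context witnessing that $\ea \argeq{\conteqmod \ma} \eb$ fails, hence $\ea \argeq\conteq \eb$ fails; by contraposition $\ea \argeq\conteq \eb$ implies $\ea \argeq\semeq \eb$. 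The main obstacle, and the point where closedness of types is essential, is the elimination of function types via $\defun$/$\dataty$: without it, building a discriminator at a function type would require enumerating its input and amount to a full normalization-by-evaluation, whereas the fun-less encoding replaces this by a trivial recursion over the positive connectives.
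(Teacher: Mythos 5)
Your proof is correct, and it takes a genuinely different route from the paper's. The paper also argues by contraposition and builds the separating context by induction, but its induction is type-directed over \emph{all} connectives: at a function type $\fun \ta \tb$ it picks an input value $\vb$ on which the two semantic functions differ and wraps the hole in an application to the reified argument $\reify \vb \ta$ (via \fullref{thm:reification}), at sum types it case-splits with a dummy boolean in the unused branch, and so on; the closing of the free variables of $\ea, \eb$ is left implicit in its outline. You instead close the terms explicitly by reifying the distinguishing valuation, then eliminate all function types at once by packaging the isomorphism between $\modapp \ma \ta$ and $\dataty {\modapp \ma \ta}$ of \fullref{subsec:funless} as a single-hole context, and finish with a trivial first-order induction over products, sums, $\tunit$ and $\tempty$. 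Both proofs rest on the same two pillars, reification and fun-less types, but deploy them differently: the paper uses fun-lessness only inside the proof of \fullref{thm:reification} and then probes functions argument by argument, whereas you apply fun-lessness directly to the compared terms (much as the paper later does to saturated terms in \fullref{sec:canonicity}), so that the discrimination step never meets a function type; you also make the environment-closing step explicit, which the paper's outline omits. One point you should make precise: the term-level $\defun \wild \ta$ of \fullref{def:annex-fun-less-types} is defined by recursion on term structure (pairs, injections, unit) except at function types, so $\defun z \ta$ for a variable $z$ is not literally covered by those clauses; your context $\app {(\lam z {\defun z {\modapp \ma \ta}})} \ehole$ therefore needs the standard type-directed internalization of the isomorphism as a closed term of type $\fun {\modapp \ma \ta} {\dataty{\modapp \ma \ta}}$, which exists because every rewrite in \fullref{fig:dataty-nofun} is a $\lambda$-definable type isomorphism -- your parenthetical about binding the hole to a variable shows you anticipated the duplication issue, but this definability point deserves a sentence of its own.
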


\section{Focusing}
\label{sec:focusing}

Consider the usual description of $\beta$-normal forms in the purely
negative fragment of the simply-typed $\lambda$-calculus,
$\LC{(\xa, \sysfun, \sysprodu)}$.
\begin{smallmathpar}
  \begin{array}{llll}
    \text{values} & \ea
    & \gramdef
    & \lam \eva \ea
      \mid \pairfam \ea
      \mid \eunit
      \mid \ena
    \\
    \text{neutrals} & \ena
    & \gramdef
    & \app \ena \ea
      \mid \proj i \ena
      \mid \eva
  \end{array}
\end{smallmathpar}
Values, the story goes, are a sequence of \emph{constructors} applied to
a neutral, which is a sequence of \emph{destructors} applied to
a variable. It is even possible and easy to capture the set of
$\beta$-short $\eta$-long normal forms by adding a typing restriction
to this grammar, asking for the first neutral term $\ena$ found in
a value to be of atomic type $\annot \ena \xa$.

Unfortunately, adding sum types to this picture shatters it
irreparably. If $\inj i \wild$ is a constructor, it should go in
values, and $\matchwith \wild \dots$ is a destructor, it should be
a neutral term former. Adding $\inj i \ea$ to the grammar of values
seems innocuous, but adding \texttt{match} to neutrals raises
a question: should we ask the branches to be neutrals
$\fdesumfam \ena \eva \ena$ or values $\fdesumfam \ena \eva \ea$?
Neither choices work very well.

Asking branches to be neutrals means that the term
\begin{smallmathpar}
  \dernat
    {\var \eva {\sum \xa \xb}}
    {\pddesum \eva \evb {\inj 2 \evb}
                  \evb {\inj 1 \evb}}
    {\sum \xb \xa}
\end{smallmathpar}
is not a valid normal form, and in fact has no valid normal form! We
cannot force all constructors to occur outside branches, as in this
example we fundamentally need to choose a different constructor in
each branch -- committing to either $\inj 1 \wild$ or $\inj 2 \wild$
before matching on $\eva$ would make us stuck, unable to complete our
attempt with a well-formed term.

On the other hand, letting branches be any value introduces normal
forms that really should not be normal forms, such as
$\proj 1 {\pfdesumlam \eva {\evb_\idx} {\pair \ena {\evb_\idx}}}$, clearly
equivalent, for any value of $\eva$, to the neutral $\ena$.

The solution to this problem comes from logic. Logicians remark that
some inference rules are \emph{invertible} and some are
\emph{non-invertible}. A rule is invertible when, used during
goal-directed proof search, it preserves provability: if the
conclusion was provable (maybe using another rule), then applying this
rule results in premises that are also provable. For example, consider
the implication and disjunction introduction rules:
\begin{smallmathpar}
  \infer
  {\derseq {\ca, \ta} \tb}
  {\derseq \ca {\fun \ta \tb}}

  \infer
  {\derseq \ca {\ta_\ii}}
  {\derseq \ca {\sumfam \ta}}
\end{smallmathpar}
Implication introduction is invertible -- this can be proved by
inverting the rule, showing a derivation of $\derseq {\ca, \ta} \tb$
with open premise $\derseq \ca {\fun \ta \tb}$. Disjunction
introduction is not: if one decides to prove, say, $\ta_1$, one may
get stuck while having chosen to prove $\ta_2$ would have worked. Or
maybe one needs to delay this choice until some hypothesis of the
contexts is explored -- which is the heart of our
$\derseq {\sum \xa \xb} {\sum \xb \xa}$ example.

Andreoli's focusing~\citep*{andreoli1992} is a technique to restrict
a logic, make its proof term more canonical, by imposing additional
restrictions based on the invertibility of rules. One easy restriction
is that invertible rules, when they can be applied to a judgment,
should be applied as early as possible. The more interesting
restriction is that when one starts applying non-invertible rules,
focusing forces us to apply them as long as possible, as long as the
formula introduced in premises remain at a type where a non-invertible
rule exists. For a complete reference on focusing in intuitionistic
logic, see \citet*{liang2007}.

In programming terms, the fact that the right implication rule is
invertible corresponds to an \emph{inversion principle} on values:
without loss of generality, one can consider that any value of type
$\fun \ta \tb$ is of the form $\lam \eva \ea$. Any value of type
$\prodfam \ta$ is of the form $\pairfam \ea$. This is strictly true
for closed values in the empty context, but it is true \emph{modulo
  equivalence} even in non-empty contexts, as is witnessed by the
$\eta$-expansion principles. If a value $\annot \ea {\fun \ta \tb}$ is
not a $\lambda$-abstraction, we can consider the equivalent term
$\lam \eva {\app \ea \eva}$.

But it is \emph{not} the case that any value of type $\sum \ta \tb$ is
of the form $\inj \ii \ea$, as our example
$\derseq {\sum \xa \xb} {\sum \xb \xa}$ demonstrated. Inspired by
focusing we look back at our grammar of $\betaeta$-normal forms: it is
not about constructors and destructors, it is about term-formers that
correspond to invertible rules and those that do not. To gracefully
insert sums into this picture, the non-invertible $\inj \ii \wild$
should go into the \emph{neutrals}, and case-splitting should be
a \emph{value}. \citet*{scherer2015} introduce focusing in more
details, and present a grammar of \emph{focused} normal forms is,
lightly rephrased, as follows:
\begin{smallmathpar}
  \begin{array}{lll@{~}l}
    \text{values} & \ea & \gramdef
    & ~ \lam \eva \ea
      \mid \pairfam \ea
      \mid \eunit
      \mid \efa
    \\ & & &
      \mid \fdesumfam \eva \evb \ea
    \\
    \text{choice terms} & \efa & \gramdef
    & ~ \annot \ena \xa
      \mid \letin {\var \eva {\sum \ta \tb}} \ena \ea
      \mid \epa
    \\
    \text{negative neutrals} & \ena
    & \gramdef
    & ~ \app \ena \epa
      \mid \proj i \ena
      \mid \eva
    \\
    \text{positive neutrals} & \epa
    & \gramdef
    & ~ \inj i \epa
      \mid \annot \ea \tna
  \end{array}
\end{smallmathpar}
The type $\tna$ on the last line denotes a \emph{negative} type,
defined as a type whose head connective has an invertible
right-introduction rule: $\fun \ta \tb$ or $\prod \ta \tb$ or
$\tunit$. This means that if the argument of an injection
$\inj \ii \wild$ is itself of sum type, it must be of the form
$\inj \jj \wild$ as well; this enforces the focusing restriction that
non-invertible rules are applied as long as the type allows.

It is interesting to compare this grammar to bidirectional type
systems -- when used to understand canonical forms rather than for
type inference. Focusing generalizes the idea that some parts of the
term structure (constructors) are canonically determined by type
information, while some parts (neutrals) are not. It generalizes
bidirectional typing by taking the typing environment into account as
well as the goal type (variables of sum type are split during the
inversion phase), and refines the application syntax $\app \ena \ea$
into the sharper $\app \ena \epa$ where both sub-terms have a neutral
spine.

Our work builds on this focused representation, easily extended with
an empty type $\tempty$. Our presentation of the type system is
different, farther away from the standard $\lambda$-calculi and closer
to recent presentation of focused systems, by using polarized syntax
for types with explicit shifts -- it clarifies the structure of
focused systems. Instead of distinguishing positive and negative types
based on their head connectives, we define two disjoint syntactic
categories $\tpa$ and $\tna$, with explicit embeddings $\shiftp \tna$,
$\shiftn \tpa$ to go from one to the other. In particular, atoms are
split in two groups, the positive atoms of the form $\xpa$ and the
negative atoms $\xna$ -- there is a global mapping from atoms $\xa$ to
polarities, a given atom is either always positive or always
negative. Sometimes we need to consider either types of a given
polarity or atoms of any polarity; we use $\txpa$ for positive types
or negative atoms, and $\txna$ for negative types of positive atoms.

We present this focused $\lambda$-calculus in
\fullref{fig:focused-lambda-calculus}. The focusing discipline is
enforced by the inference rules which alternate between four different
judgments:
\begin{itemize}
\item The \emph{invertible} judgment
  $\dertyinvpp \cxna \cpa \ea \tna \txpb$ corresponds to invertible
  phases in focused proof search. $\cxna$ is a typing environment
  mapping variables to negative types $\tna$ or positive atoms
  $\xpa$. $\cpa$ contains only positive types; it is the part of the
  context that must be decomposed by invertible rules before the end
  of the phase. The two positions $\tna \betweensuccedent \txpb$ in
  the goal are either formulas or empty $(\emptyset)$, and exactly one
  of them is non-empty in any valid judgment. If the goal is
  a negative formula $\tna$, it has yet to be introduced by invertible
  rules during this phase; once it becomes atomic or positive it moves
  to the other position $\txpb$.
\item The \emph{negative focus} judgment $\dertydown \cxna \ena \tna$
  corresponds to a non-invertible phase focused on a (negative)
  formula in the context.
\item The \emph{positive focus} judgment $\dertyup \cxna \epa \tpa$
  corresponds to a non-invertible phase focused on a (positive)
  formula in the goal.
\item The \emph{choice-of-focusing} judgment
  $\dertyfoc \cxna \efa \txpa$ corresponds to the moment of the proof
  search (reading from the conclusion to the premises) where the
  invertible phase is finished, but no choice of focus has been made
  yet. Focusing on the goal on the right uses a positive neutral to
  prove a positive type -- \Rule{foclc-concl-pos}. Focusing on the
  left uses a negative neutral. If the neutral has a positive type, it
  is \texttt{let}-bound to a variable and the proof continue with an
  invertible phase -- \Rule{foclc-let-pos}. If it has a negative
  atomic type, then it must be equal to the goal type and the proof is
  done -- \Rule{foclc-concl-neg}.
\end{itemize}

\begin{smallmathparfig}
  {fig:focused-lambda-calculus}
  {Cut-free focused $\lambda$-terms}
  \begin{array}{ll@{~}l@{~}l}
  \text{negative types}
  &
  \tna, \tnb & \gramdef &
    \xna, \xnb, \xnc
    \mid \fun \tpa \tna
    \mid \prodfam \tna \mid \tunit
    \mid \shiftn \tpa
  \\
  \text{positive types}
  &
  \tpa, \tpb & \gramdef &
    \xpa, \xpb, \xpc
    \mid \sumfam \tpa \mid \tempty
    \mid \shiftp \tna
    \\
  \end{array}

  \txpa, \txpb \gramdef \tpa, \tpb \mid \xna, \xnb
  \quad
  \txna, \txnb \gramdef \tna, \tnb \mid \xpa, \xpb

  \begin{array}{ll@{~}l@{~}l}
  \text{invertible terms}
  &
  \ea, \eb, \ec & \gramdef &
    \lam \eva \ea
    \mid \eunit
    \mid \pairfam \ea
    \mid \annot \efa \tpa
    \\ & & &
    \mid \eabsurd \eva
    \mid \fdesumlam \eva \eva {\ea_\idx}
  \\

  \text{focusing terms}
  &
  \efa, \efb & \gramdef &
    \letin {\annot \eva \tpa} \ena \ea
    \mid {\annot \ena \xna}
    \mid \epa
  \\
  \text{negative neutrals}
  &
  \ena, \enb & \gramdef &
    {\annot \eva \tna}
    \mid \app \ena \epa
    \mid \proj \ii \ena
  \\
  \text{positive neutrals}
  &
  \epa, \epb & \gramdef &
    \inj \ii \epa
    \mid \annot \eva \xpa
  \end{array}
  \\
  \infer
  {\dertyinvpp{\cxna}{\cp, \var \ev \tpa}{\ea}{\tna}{\emptyset}}
  {\dertyinvpp{\cxna}{\cp}{\lam \ev \ea}{\fun \tpa \tna}{\emptyset}}

  \infer
  {\famx{\dertyinvpp{\cxna}{\cp}{\ea_\ii}{\tna_\ii}{\emptyset}}\ii}
  {\dertyinvpp{\cxna}{\cp}{\pairfam \ea}{\prodfam \tna}{\emptyset}}

  \infer
  {\famx{\dertyinvpp{\cxna}{\cp, \var \ev {\tpb_\ii}}{\ea_\ii}{\tna}{\txpa}}{\ii}}
  {\dertyinvpp{\cxna}{\cp, \var \ev {\sumfam \tpb}}
    {\fdesumlam \eva \eva {\ea_\idx}}{\tna}{\txpa}}

  \infer
  { }
  {\dertyinvpp{\cxna}{\cp, \var \ev \tempty}{\eabsurd \ev}{\tna}{\txpa}}

  \infer
  { }
  {\dertyinvpp{\cxna}{\cp}{\eunit}{\tunit}{\emptyset}}

  \infer[foclc-inv-foc]
  {\dertyfoc {\cxna, \cxnap} \efa {\mergeformulas{\txpa}{\txpb}}}
  {\dertyinvpp{\cxna}{\shiftpat{\cxnap}}{\efa}{\shiftnat \txpa}{\txpb}}

  \infer[foclc-concl-pos]
  {\dertyup \cxna \epa \tpa}
  {\dertyfoc \cxna \epa \tpa}

  \infer[foclc-concl-neg]
  {\dertydown {\cxna} \ena {\xna}}
  {\dertyfoc {\cxna} \ena {\xna}}
  \quad
  \infer[foclc-let-pos]
  {\dertydown \cxna \ena {\shiftn \tpa}\\
   \dertyinvpp \cxna {\ob \ev : \tpa} \ea {\emptyset} \txpb}
  {\dertyfoc \cxna {\letin \ev \ena \ea} \txpb}

  \infer
  { }
  {\dertydown {\cxna, \var \ev \tna} \ev \tna}

  \infer
  { }
  {\dertyup {\cxna, \var \ev \xpa} \ev \xpa}

  \infer
  {\dertydown \cxna \ena {\prodfam \tna}}
  {\dertydown \cxna {\proj i \ena} {\tna_i}}

  \infer
  {\dertyinvpp \cxna \emptyset \ea \tna {\emptyset}}
  {\dertyup \cxna \ea {\shiftp \tna}}

  \infer
  {\dertydown \cxna \ena {\fun \tpa \tna}\\
   \dertyup \cxna \epa \tpa}
  {\dertydown \cxna {\app \ena \epa} \tna}

  \infer
  {\dertyup \cxna \epa {\tpa_i}}
  {\dertyup \cxna {\inj i \epa} {\sumfam \tpa}}
  \\
  \text{shift-or-atom notations}

  \begin{array}{ll}
    \shiftpat \tna \eqdef \shiftp \tna
    &
    \shiftpat \xpa \eqdef \xpa
    \\
    \shiftnat \tpa \eqdef \shiftn \tpa
    &
    \shiftnat \xna \eqdef \xna
  \end{array}
\end{smallmathparfig}

The notation $\shiftpat \wild$ takes a negative-or-atomic type $\txna$
and returns a positive type. It is used in the rule
\Rule{foclc-inv-foc} that concludes an invertible phase and starts the
choice-of-focusing phase. It may only be applied when the positive
context $\cpa$ is of the form $\shiftpat \cxnap$ for some $\cxnap$,
that is, when it only contains negative or atomic formulas -- it has
been fully decomposed.

Notice that the sum-elimination rule in the invertible judgment
eliminates a variable $\eva$, and not an arbitrary term, and
re-introduces variables with the same name $\eva$, shadowing the
previous hypothesis of sum type: there is no need to refer to it
anymore as we learned its value. This cute trick is not fundamental
for a focused calculus, but it corresponds to the intuition of
the corresponding sequent-calculus rule, and let us actually remove
positive types from the context to have a negative-or-atomic context
at the end of the phase.

For any judgment, for example $\dertyup \cxna \epa \tp$, we use the
version without a term position, for example $\dernatup \cxna \tp$, as
the proposition that there exists a well-typed term:
$\exists \epa,\ \dertyup \cxna \epa \tp$. This is also an invitation
to think of the derivation as a logic proof rather than a typed
program.

Our focused terms are \emph{cut-free} in the sense that they contain
no $\beta$-redexes, even modulo commuting conversions. The rule
\Rule{foclc-let-pos} does look like a cut, proving
$\dernatfoc \cxna \txpb$ from $\dertydown \cxna \ena {\shiftn \tpa}$
and $\dertyinvpp \cxna {\var \eva \tpa} \ea \emptyset \txpb$, but
notice that substituting the negative neutral $\ena$ inside the
invertible proof $\ea$ would not create a $\beta$-redex: we know that
$\eva$ is matched over during the invertible phase, but $\ena$ cannot
start with a constructor so $\matchwith \ena \dots$ cannot reduce. If
you are interested in focused systems that do have cuts and
interesting dynamic semantics, then the abstract machine calculi of
\citet*{curien2016} are a better starting point.

\subsection{(Non-)canonicity}\label{subsec:non-canonicity}

When we look at the purely negative fragment of our calculus
$\LC{(\xna, \sysfun, \sysprodu)}$, we can prove that the focused
$\lambda$-terms correspond exactly to the usual notion of
$\beta$-short $\eta$-long normal forms. For example, consider the
valid terms for the judgment
$\dertyinvpp
  {\var \ev {\fun \xpb {\prodfam \xnc}}} \emptyset
  \what
  {\fun \xpb {\prodfam \xnc}} \emptyset
$
. Neither $\eva$ nor $\lam \evb {\app \eva \evb}$, that would be
well-typed for the corresponding un-focused judgment, are valid
according to our inference rules. There is exactly one valid
derivation in our system, for the term
$\lam \evb {\pairlam {\proj \idx {\papp \eva \evb}}}$ which is the
$\eta$-long normal form of $\eva$ at this type.

A consequence of this result is that the focused $\lambda$-calculus is
canonical for the purely negative fragment (or, in fact, the purely
positive fragment): if we have
$\dertyinvpp \cna \emptyset {\ea \bocomma \eb} \tna \xna$ with
$\ea \neq_\alpha \eb$, then $\ea \argneq\betaeta \eb$ and
$\ea \argneq\conteq \eb$ -- these are known to be equivalent in the
negative fragment.

Focusing is not canonical anymore in mixed-polarity settings. The
first source of non-canonicity is that there may a free choice of
ordering of invertible rules in a phase; consider the judgment
$\dertyinvpp \cxna
  {\var \eva {\sumfam \tpa}} \what {\fun \tpb \tna} \emptyset
$
for example, one may either do a case-split on $\sumfam \tpa$ or
introduce a $\lambda$-abstraction for the function type
$\fun \tpb \tna$: $\fdesumlam \eva \eva {\lam \evb \what_\idx}$ or
$\lam \evb {\fdesumlam \eva \eva {\what_\idx}}$ are both valid term
prefixes. This is solved by declaring that we do not care about the
ordering of invertible rules within a single phase.

\begin{definition}
  We define the equivalence relation $\parel{\argeq\icc}$ as allowing
  well-typed permutations of two adjacent invertible rules. For
  example we have
  $
    \eabsurd \eva
    \argeq\icc
    \lam \evb {\eabsurd \eva}
  $.
\end{definition}

From now on any notion of normal form being discussed should be
understood as a \emph{quasi}-normal form, a normal form modulo
invertible commuting conversions $\parel{\argeq\icc}$. This is
a reasonable approximation of the idea of normal form, as it is easily
decidable. Indeed, while in general commuting conversions may relate
very different terms, they can be easily decided inside a single
invertible phase, for example by imposing a fixed ordering on
invertible rules. By definition of invertibility, any ordering
preserves completeness.

The other more fundamental source of non-canonicity is that two
non-invertible phases may be independent from each other, and thus be
ordered in several possible ways, giving distinct but equivalent
terms. For example,
$\letin {\eva_1} {\ena_1} {\letin {\eva_2} {\ena_2} {\pairfam \eva}}$
and
$\letin {\eva_2} {\ena_2} {\letin {\eva_1} {\ena_1} {\pairfam \eva}}$
are equivalent at type $\prod \xna \xnb$ if
$\eva_1 \notin \ena_2, \eva_2 \notin \ena_1$. This source of
redundancy is non-local -- the permutable \texttt{let}-bindings may be
miles away inside the term. It requires a global approach to recover
canonicity, which we discuss in \fullref{sec:saturated-calculus}.

\subsection{Computational completeness}
\label{subsec:computational-completeness}

\label{def:polerase}
We can define a \emph{depolarization} operation $\polerase \wild$ that
takes polarized types and erases polarity information. The definition
is given in full in \cameraversion{the long
  version}{\fullref{def:annex-polerase}}, but for example we have
$\polerase \xna \eqdef \xa$,
$\polerase {\fun \tpa \tna}
 \eqdef
 \fun {\polerase \tpa} {\polerase \tna}
$,
and $\polerase {\shiftp \tna} \eqdef \polerase \tna$.

\label{def:focerase}
This erasure operation can be extended to a \emph{defocusing}
operation $\focerase \wild$ on focused terms that preserves typing modulo depolarization. For
example, if $\dertyinvpp \cxna \cpa \ea \tna \txpb$ holds, then
in the un-focused system
$\dernat
  {\polerase \cxna, \polerase \cpa}
  {\focerase \ea}
  {\mergeformulas {\polerase \tna} {\polerase \txpb}}
$
holds -- with $\mergeformulas \ta \emptyset \eqdef \ta$ and
conversely. This operation is defined on terms as a direct mapping on
all $\lambda$-term formers, except the \texttt{let}-definition form
which does not exist in the unfocused calculus and is substituted
away:
$\focerase {\letin \eva \ena \ea}
 \eqdef
 \subst {\focerase \ea} \eva {\focerase \ena}$.

Going from a focused system to a system with less restrictions is
easy. The more interesting statement is the converse, that for any
un-focused term $\ea$ there exists an equivalent focused term $\eap$.

\begin{theorem}[Completeness of focusing]
  \label{thm:focusing-complete}
  \begin{smallmathpar}
\dernat
  {\polerase \cxna, \polerase \cpa}
  \ea
  {\mergeformulas {\polerase \tna} {\polerase \txpb}}
\\
\implies

\exists \eap,\qquad
  \focerase \eap \argeq \betaeta \ea
  \quad\wedge\quad
  \dertyinvpp \cxna \cpa \eap \tna \txpb
  \end{smallmathpar}
\end{theorem}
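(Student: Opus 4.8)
The plan is to prove completeness by induction on the unfocused typing derivation, after isolating the two structural lemmas that make the focused calculus closed under the operations of the unfocused one: an identity-expansion principle and a cut-admissibility principle. The shape of the statement is already the right generalized induction hypothesis: carrying both a pending negative goal $\tna$ and a positive-or-atomic goal $\txpb$, exactly one of which is non-empty, is what lets the invertible phase run to completion before a focus is chosen, so I would keep this two-position judgment throughout.

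First I would establish identity ($\eta$) expansion: for every type and every variable hypothesis there is a focused derivation whose defocusing is $\betaeta$-equivalent to the bare variable. A negative hypothesis $\var \eva \tna$ is $\eta$-expanded along its head connective --- into $\lambda$-abstractions, pairs, $\eunit$, or a shift --- down to atomic or positive leaves, while a positive hypothesis of type $\tpa$ sitting in $\cpa$ is decomposed by the invertible rules (case-split on a sum, absurdity elimination on $\tempty$). This is a direct structural induction on the type, and it supplies the base cases where the unfocused proof uses a variable at a non-atomic type.

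The heart of the argument is cut admissibility: from a focused derivation proving some type and a focused derivation that uses a hypothesis of that type, I would build a single focused derivation whose defocusing is $\betaeta$-equal to the corresponding substitution. Since focused terms are cut-free, naively plugging one derivation into another can create redexes and break the focusing discipline, so the lemma must renormalize; I would prove it by a lexicographic induction on the cut type and the two derivations, in the style of hereditary substitution. The positive connectives are the delicate cases: substituting a positive neutral into a position that is later case-split forces commuting the surrounding invertible phase outward --- this is exactly where the quotient by invertible commuting conversions $\parel{\argeq\icc}$ is used --- and a cut against a proof of $\tempty$ must collapse its continuation through the derived absurdity rule. I expect this lemma, specifically maintaining the $\betaeta$ equation while reordering invertible and focusing phases around sums and the empty type, to be the main obstacle.

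With both lemmas available the theorem follows by induction on the unfocused derivation. Each introduction former ($\lambda$, pairing, $\eunit$, injection) maps to the corresponding invertible or positive-focus rule applied to the focused sub-derivations given by the induction hypotheses; each elimination former (application, projection, sum elimination, absurdity) is obtained by focusing on the relevant hypothesis and gluing the focused sub-derivations with the cut lemma, inserting $\eta$-expansions at variable leaves. A $\beta$-redex in $\ea$, such as an application of a $\lambda$-abstraction, simply becomes a cut that the cut lemma eliminates, which is why the statement can speak of arbitrary terms and not only normal forms. At each step the defocusing of the constructed term is $\betaeta$-equal to $\ea$ by the equations recorded in the two lemmas, using that $\focerase \wild$ substitutes away the $\letin \wild \wild \wild$ bindings; and since everything is taken modulo $\parel{\argeq\icc}$, the residual freedom in ordering invertible rules within a phase is immaterial.
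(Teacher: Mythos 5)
Your sketch is sound, but you should know that the paper does not actually prove this theorem: its ``proof'' is a deferral to the literature, remarking that completeness of focusing is a non-trivial result that is independent of the paper's contributions and extends gracefully to the empty type, and pointing to \citet*{liang2007, ahmad2010, simmons2011}. What you have written is, in substance, a reconstruction of the argument found in those references -- most closely Simmons's structural focalization, which proceeds by exactly your two lemmas: identity expansion ($\eta$-expanding hypotheses into focused derivations) and cut admissibility established by a lexicographic, hereditary-substitution-style induction on the cut type and the two derivations, followed by an induction over the unfocused typing derivation in which introduction forms map to invertible or positive-focus rules and elimination forms (as well as any $\beta$-redexes in the source term) become admissible cuts. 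You also correctly isolate the genuinely delicate point, namely cuts at positive types, where substituting into a later case-split forces invertible phases to be reorganized and a cut against a proof of $\tempty$ collapses its continuation. The trade-off between the two routes is clear: the paper's citation keeps the exposition focused on its novel material (saturation and canonicity in the presence of $\tempty$), while your route would make the development self-contained at the cost of a large body of standard but non-trivial machinery. One small caution: the theorem asserts the existence of an actual focused term $\eap$, so you cannot literally work ``modulo $\parel{\argeq\icc}$'' in the construction; inside cut admissibility you must commit to some concrete ordering of invertible rules (any fixed ordering works, precisely because the rules are invertible), and the quotient by $\parel{\argeq\icc}$ is only needed later, for the canonicity results, not for completeness.
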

\begin{proof}
  Completeness of focusing is a non-trivial result, but it is
  independent to the contributions of the current work, and in
  particular extends gracefully to the presence of an empty type. See
  for example proofs in \citet*{liang2007, ahmad2010, simmons2011}.
\end{proof}

\subsection{Choices of polarization}
\label{subsec:polarization-choices}

We mentioned that a given un-polarized atom $\xa$ must either appear
always positively $\xpa$ or always negatively $\xna$ in our
judgments. Violating this restriction would break completeness, as for
example $\derseq \xpa \xna$ is not provable -- they are considered
distinct atoms. But the global choice of polarization of each atom is
completely free: completeness holds whatever choice is made. Those
choices influence the operational behavior of proof search:
\citet*{chaudhuri-pfenning-price-2008} shows that using the
negative polarization for all atoms corresponds to backward proof
search, whereas using the positive polarization corresponds to forward
proof search.

\begin{version}{\Not\cameraversion}
Similarly, there is some leeway in insertion of the polarity shifts
$\shiftp \wild$ and $\shiftn \wild$; for example, $\sum \tempty \xpa$
and $\sum \tempty {\shiftp {\shiftn \xpa}}$ depolarize to the same
formula, but admit fairly different terms -- the double-shifting
allows an invertible phase to start right after $\pinj 2 \wild$. When
transforming a non-polarized type into a polarized type, two
strategies for inserting shifts are notable. One is to insert as few
shifts as possible; the terms inhabiting the minimally-shifted
judgment are in one-to-one correspondence with terms of the un-focused
system that ``respect the focusing restriction''. The other is to
insert double-shifts under each connective; the terms inhabiting these
double-shifted judgments are in one-to-one correspondence with
unfocused sequent terms -- \citet*{zeilberger2013} relates this to
double-negation translations from classical to intuitionistic logic.
\end{version}

\section{Saturated focused $\lambda$-calculus}
\label{sec:saturated-calculus}

In \fullref{subsec:non-canonicity} we explained that the essential
source of non-canonicity in focused term systems is that distinct
non-invertible phases may be independent from each other: reordering
them gives syntactically distinct terms that are observably equivalent
in a pure calculus. Being such a reordering of another term is
a highly global property, that cannot be decided locally like
invertible commuting conversions.

Logicians introduced \emph{maximal
  multi-focusing}~\citep*{chaudhuri-miller-saurin-2008} to quotient
over those reorderings, and \citet*{scherer2015} expressed this in a
programming setting as \emph{saturation}. The idea of maximal
multi-focusing is to force each non-invertible phase to happen
\emph{as early as possible} in a term, in parallel, removing the
potential for reordering them. However, in general there is no
goal-directed proof search (or term enumeration) procedure that
generates only maximally multi-focused derivations, as one cannot
guess in advance what non-invertible phases will be useful in the rest
of the term -- to introduce them as early as possible. Saturation is
a technique specific to intuitionistic logic: when a non-invertible
phase starts, instead of trying to guess which non-invertible phases
would be useful later, one saturates the context by performing
\emph{all} the possible left-focused phases, \texttt{let}-binding all
the neutrals that might be used in the rest of the term. One can think
of a neutral of positive type $\annot \ena \tpa$ as an
\emph{observation} of the current environment: we are saturating by
performing all possible observations before making a choice -- a right
focusing phase. Note this strategy would be invalid in an effectful
language, or a resource-aware logic where introducing unused
sub-derivations can consume necessary resources and get you stuck.

In general there may be infinitely many distinct observations that can
be made in a saturation phase -- consider the context
$(\var z \xpa, \var s {\fun \xpa {\shiftn \xpa}})$, and a type system
that would enforce complete saturation would then have to admit
infinite terms. Instead, \citet*{scherer2015} relax the definition of
saturation by allowing saturation phases to introduce only a finite
subset of deducible neutrals $\annot \ena \tpa$. They prove that
canonicity holds (in a system without the empty type) in the following
sense: if two saturated terms made the same saturation choices, then
they are equivalent if and only if they are syntactically the same --
modulo $\parel{\argeq\icc}$. In their work, the notion of equivalence
is $\betaeta$-equivalence of the defocused forms.

\subsection{The saturated type system}

The saturated type system is given in
\fullref{fig:saturated-lambda-calculus}. The neutral judgments are
identical to the focused type system of
\fullref{fig:focused-lambda-calculus}, and most of the invertible
rules are also identical. The only change is that the rule
\Rule{sinv-sat} moving from the invertible phase to the focused phase,
instead of merging the two contexts $\cxna; \cxnap$ in a single
context position as \Rule{foclc-inv-foc}, now keeps them separate.

\begin{smallmathparfig}
  {fig:saturated-lambda-calculus}
  {Cut-free saturated focused type system}
  \infer
  {\dertysinv \cxna {\cp, \var \ev \tpa} \ea \tna \emptyset}
  {\dertysinv \cxna \cp {\lam \ev \ea} {\fun \tpa \tna} \emptyset}
  \qquad
  \infer
  {\famx{\dertysinv \cxna \cp {\ea_\ii} {\tna_\ii} \emptyset} \ii}
  {\dertysinv \cxna \cp {\pairfam \ea} {\prodfam \tna} \emptyset}

  \infer
  { }
  {\dertysinv \cxna \cp \eunit \tunit \emptyset}

  \infer
  { }
  {\dertysinv \cxna {\cp, \var \ev \tempty} {\eabsurd \ev} \tna \txpb}

  \infer
  {\famx{\dertysinv \cxna {\cp, \var \ev {\tpa_\ii}}{\ea_\ii} \tna \txpb}{\ii}}
  {\dertysinv \cxna {\cp, \var \ev {\sumfam \tpa}}
    {\fdesumlam \eva \eva {\ea_\idx}} \tna \txpb}
  \\
  \infer[sinv-sat]
  {\dertysat \cxna \cxnap \efa {\mergeformulas \txpa \txpb}}
  {\dertysinv \cxna {\shiftpat \cxnap} \efa {\shiftnat \txpa} \txpb}

  \infer[sat]
  {(\bar{\ob\en}, \bar{\tp}) \eqdef
   \satselect{\cxna, \cxnap}{\txpb}{\left\{ (\ob\en, \tp) \mid
     \begin{array}{c}
      (\dertysdown {\cxna, \cxnap} \en {\shiftn \tp}) \\
      \wedge\ \reluses {\ob\en} \cxnap
     \end{array} \right\}}
   \\\\
   \dertysinv{\cxna, \cxnap}{\ob{\bar \ev} : \bar{\tp}}\ea \emptyset \txpb
  }
  {\dertysat \cxna \cxnap {\letin {\bar\ev} {\bar\en} \ea} \txpb}
  \\
  \infer[sat-up]
  {\dertysup \cxna \epa \tp}
  {\dertysat \cxna \emptyset \epa \tp}

  \infer[sat-down]
  {\dertysdown \cxna \en \xna}
  {\dertysat \cxna \emptyset \en \xna}
  \\
  \infer
  {\dertysinv \cxna \emptyset \ea \tna \emptyset}
  {\dertysup \cxna \ea {\shiftp \tna}}
  \quad
  \infer
  { }
  {\dertysup {\cxna, \var \ev \xpa} \ev \xpa}
  \quad
  \infer
  { }
  {\dertysdown {\cxna, \var \ev \tna} \ev \tna}
  \\
  \infer
  {\dertysdown \cxna \ena {\prodfam \tna}}
  {\dertysdown \cxna {\proj i \ena} {\tna_i}}

  \infer
  {\dertysup \cxna \epa {\tpa_i}}
  {\dertysup \cxna {\inj i \epa} {\sumfam \tpa}}

  \infer
  {\dertysdown \cxna \ena {\fun \tpa \tna}\\
   \dertysup \cxna \epa \tpa}
  {\dertysdown \cxna {\app \ena \epa} \tna}
\end{smallmathparfig}%

This second position $\cxnap$ represents the fragment of the context
that is \emph{new} during the following saturation phase. The
saturation rule \Rule{sat} requires that any introduced neutral $\ena$
use at least one variable of this new context
($\exists \eva \in \cxnap,\ \eva \in \ena$). This guarantees that
a single neutral term cannot be introduced twice by distinct
saturation phases: the second time it will not be new anymore.

This new context is also used to know when saturation stops: if an
instance of the \Rule{sat} rule does not introduce any new neutral,
then on the next saturation phase the new context $\cxnap$ will be the
empty context $\emptyset$, allowing saturation to proceed to prove the
goal with one of the two other choice-of-focusing rules.

This aspect of the saturation judgment is reused, unchanged, from
\citet*{scherer2015}. On the other hand, the formulation of the
saturation rule \Rule{sat} is different. We pass the (potentially infinite)
set $\sa$ of new introducible neutrals to a selection function
$\satselect \cxna \txpa \sa$, which returns a finite subset of
neutrals to introduce in a given context.
$\satselect \cxna \txpa \sa$ may not return any subset, we give the
requirement for a selection function to be valid in
\fullref{subsec:selection-function}.

The notation $\letin {\bar \eva} {\bar \ena} \ea$ denotes simultaneous
binding of a (finite) set of neutral terms -- our notion of syntactic
$\alpha$-equivalence is considered to test the (decidable) set
equality.

\subsection{Strong positive neutrals}
\label{subsec:strong-positive-neutrals}

To understand and formalize saturation it is interesting to compare
and contrast the various notions of deductions (seeing our type
systems as logics) at play; how to prove $\ta$ in a context $\ca$?
\begin{itemize}
\item The more general notion of deduction is the unfocused notion of
  proof $\dernatded \ca \ta$ -- proof terms have no restriction. In
  the focused system, it would correspond to looking for a proof of
  an invertible judgment $\dernatinv \emptyset \ca \ta \emptyset$.

\item The neutral judgments $\dertydown \ca \ena \tn$ and
  $\dertyup \ca \epa \tp$ correspond to a less expressive notion of
  ``simple deduction step'', which are iterated by saturation. For
  example, $\dernatup {\shiftn{\sum \xa \xb}} {\sum \xb \xa}$ does
  \emph{not} hold, it requires more complex reasoning than a chain of
  eliminations from the context variables. Focusing decomposes
  a non-focused reasoning into a sequence of such simple deduction
  steps, separated by invertible phases of blind proof search.
\end{itemize}

One notion that is missing is the notion of what is ``already known by
the context''. With the usual non-focused logic, to know if a positive
formula $\tpa$ has been introduced before, we simply check if $\tpa$
is in the context. But the focusing discipline decomposes positive
formulas and removes them from the context.

One could use the judgment $\dernatup \cxna \tp$ instead --
$\dernatup {\xpa, \xpb} {\sum \xpa \xpb}$ as intended.
But $\dernatup \cxna \tp$ is too strong for the purpose of just
retrieving information from the context, as it calls the general
invertible judgment at the end of the focused phase.
$\dernatup \cxna {\shiftp \tna}$ holds whenever $\tna$ is provable
from $\cxna$, not only when $\tna$ is an hypothesis in $\cxna$.

To capture this idea of what can be ``retrieved'' from the context
without any reasoning, we introduce in \fullref{fig:strongup} the
\emph{strong} positive judgment $\dernatstrongup \cxna \tpa$.

\begin{smallmathparfig}
  {fig:strongup}
  {Strong positive judgment $\dertystrongup \cxna \epa \tpa$}
  \infer
  { }
  {\dertystrongup {\cxna, \var \eva \tna} \eva {\shiftp \tna}}

  \infer
  { }
  {\dertystrongup {\cxna, \var \eva \xpa} \eva \xpa}

  \infer
  {\dertystrongup \cxna \epa {\tpa_i}}
  {\dertystrongup \cxna {\inj i \epa} {\sumfam \tpa}}
\end{smallmathparfig}

Strong positive neutrals correspond to the positive patterns of
\citet*{zeilberger-phd}. Those patterns describe the spine of
a non-invertible phase, but they can also characterize invertible
phases: an invertible phase, presented in a higher-order style,
provides a derivation of the goal for any possible positive pattern
passed by the environment. The two following results witness this relation.

\begin{lemma}[Strong decomposition of invertible phases]
  \label{lem:strong-decomposition}
  Consider an invertible derivation
  $\dertysinv \cxna \cpa \ea \tna \txpb$: it starts with invertible
  rules, until we reach a (possibly empty) ``frontier'' of saturated
  subterms $\efa$ to which the rule \Rule{sinv-sav} is applied. Let
  $\famx {\dertysat \cxna {\cxnap_\kk} {\efa_\kk} {\txpbp_\kk}} \kK$
  be the family of such subterms. Then the $\cxnap_\kk$ are exactly
  the contexts such that
  \begin{smallmathpar}
    \forall \tpa \in \cpa,

    \forall \kk,\ \dernatstrongup {\cxnap_\kk} \tpa
  \end{smallmathpar}
\end{lemma}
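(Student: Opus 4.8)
The plan is to proceed by induction on the invertible prefix of the derivation, reading that prefix as a decomposition of the pair $(\cpa, \tna)$ and matching each branch of the decomposition — each place where the positive context has become negative-or-atomic and \Rule{sinv-sat} fires — with a tuple of strong positive derivations, one for each $\tpa \in \cpa$. First I would use invertibility, together with the quotient $\argeq\icc$, to normalise the order of the invertible rules: introduce the negative goal $\tna$ first (the function-, pairing- and unit-introduction rules never inspect the positive context, and pairing branches while leaving $\cp$ unchanged in both premises), and only then decompose the positive formulas. After this reordering, the only rules that can distinguish two frontier branches are the sum case-splits and the absurdity rule on $\tempty$; writing $\cpa'$ for the positive context reached once $\tna$ has been fully introduced (this is $\cpa$ enriched with the positive arguments extracted from $\tna$, and every $\tpa \in \cpa$ still occurs in $\cpa'$), the frontier contexts $\cxnap_\kk$ are exactly the sets of negative-or-atomic leaves collected when decomposing $\cpa'$.

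For the soundness direction I would show, by induction on $\cpa'$, that each frontier context $\cxnap_\kk$ satisfies $\dernatstrongup {\cxnap_\kk} \tpa$ for every $\tpa$ it decomposes. The base cases are the negative-or-atomic leaves $\shiftp \tna$ and $\xpa$, which are added verbatim to $\cxnap_\kk$ and for which the variable rules of \fullref{fig:strongup} apply directly. For a sum $\tpa = \sumfam \tpb$, the case-split rule yields, in branch $\ii$, a sub-derivation in which $\tpb_\ii$ has replaced $\tpa$; by the induction hypothesis the corresponding sub-frontier contexts strongly derive $\tpb_\ii$, and one application of the injection rule turns this into a strong derivation $\dertystrongup {\cxnap_\kk} {\inj \ii \epa} \tpa$. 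For $\tpa = \tempty$ the absurdity rule closes the branch and produces no frontier context at all, which is exactly consistent with the fact that $\dernatstrongup \Delta \tempty$ is never derivable (there is no rule for it). Specialising from $\cpa'$ to the sub-context $\cpa$ gives the stated implication $\forall \tpa \in \cpa,\ \forall \kk,\ \dernatstrongup {\cxnap_\kk} \tpa$.

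For exactness I would read a strong positive derivation as a recipe for the case-splits: a derivation $\dertystrongup \Delta \epa \tpa$ records, at each sum met inside $\tpa$, the injection it passes through, and terminates at a negative-or-atomic leaf that must be a variable of $\Delta$. Given a tuple $(\epa_\tpa)_{\tpa \in \cpa'}$ of such derivations, following the recorded injections branch by branch singles out one frontier of the invertible decomposition, whose new context is precisely the union of the leaf variables used. This construction is inverse to the soundness reading, so the frontier contexts $(\cxnap_\kk)_\kk$ that actually occur are exactly those generated by choosing one strong positive pattern per formula of $\cpa$ — the intended content of ``are exactly the contexts such that $\dernatstrongup {\cxnap_\kk} \tpa$''.

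The main obstacle is the bookkeeping around the negative goal. The function-introduction rule injects positive formulas of $\tna$ into the context, so the frontier genuinely depends on $\tna$ as well as on $\cpa$; pairing duplicates branches without altering their contexts, and a $\tunit$ leaf of $\tna$ terminates with $\eunit$ and produces \emph{no} frontier at all, so the family $\kK$ may be empty even when $\cpa$ admits strong patterns. I would therefore need to keep the two sources of positive formulas (the original $\cpa$ and the goal-extracted ones) separate in the induction, count the case-splits coming from each on the same footing, and check that the reordering licensed by $\argeq\icc$ and by invertibility leaves the family of frontier contexts unchanged — so that the clean bijection with strong positive patterns survives the interaction with goal decomposition.
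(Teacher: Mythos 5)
Your argument is sound, but there is nothing in the paper to compare it against: this lemma is one of the statements whose proof the paper defers entirely (the appendix of proof outlines skips it), and the only place it is invoked --- the proof of \fullref{thm:saturation} --- uses just the ``soundness'' half, namely that $\tpa \in \cpa$ implies $\dernatstrongup {\cxnap_\kk} \tpa$ for every frontier context $\cxnap_\kk$. Judged on its own, your proposal establishes that half correctly and goes further on the ``exactly'' half. Two remarks on the route. First, the preliminary reordering of the invertible phase (goal rules first, positive-context rules second) is an avoidable detour: a direct induction on the given derivation handles the interleaving with no normalization --- the $\lambda$-abstraction case strengthens the induction hypothesis to the enlarged positive context $\cpa, \var \eva \tpa$ (which only makes the conclusion for the sub-context $\cpa$ weaker), pairing takes unions of frontiers, the $\tunit$ and $\eabsurd \eva$ axioms have empty frontier so the claim is vacuous there, sum elimination is exactly your injection step, and \Rule{sinv-sat} is your base case via the two variable rules of \fullrefnoname{fig:strongup}. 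The direct induction also discharges for free the obligation you correctly flag but leave open: that the family of frontier contexts is invariant under $\parel{\argeq\icc}$ (true, because each \Rule{sinv-sat} leaf is determined by the branch choices rather than by the order of the invertible rules, but on your route it is a lemma you still owe). Second, your reading of ``exactly'' as a bijection between frontier contexts and tuples of strong positive patterns --- one pattern per formula of the extended positive context, necessarily avoiding $\tempty$, with the caveat that $\tunit$ leaves of the goal produce no frontier at all --- is the right formalization of the paper's informally stated claim, and it matches the paper's own Zeilberger-style discussion in \fullref{subsec:strong-positive-neutrals}; the paper never makes this half precise, so spelling it out is a genuine addition rather than a deviation.
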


\begin{lemma}[Strong positive cut]
  \label{lem:strong-cut}~\\
  If both $\dertystrongup \cxna \epa \tpa$ and
  $\dertysinv \cxna \tpa \ea \emptyset \txpb$ hold, then there exists
  a subterm $\efa$ of $\ea$ such that $\dertyfoc \cxna \efa \txpb$ holds.
\end{lemma}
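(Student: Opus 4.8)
The plan is to proceed by induction on the strong positive neutral $\epa$. By the rules of \fullref{fig:strongup}, $\epa$ is a nest of injections $\inj \ii \wild$ bottoming out in a variable of $\cxna$ of negative type (witnessing some $\shiftp \tna$) or of positive atom type $\xpa$, and this nesting mirrors the top-level positive structure of $\tpa$. The engine of the argument is an inversion on the invertible derivation $\ea$: since its negative goal is empty, no right-inversion rule can fire, and since its positive context is the single formula $\tpa$, the next rule is forced and is determined by the head of $\tpa$ --- a sum-elimination if $\tpa$ is a sum, an absurdity if $\tpa = \tempty$, and the phase-transition rule \Rule{sinv-sat} once $\tpa$ is irreducible (already of the form $\shiftpat \cxnap$). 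Because the head of $\epa$ matches the head of $\tpa$, the shape of $\epa$ tells us exactly which rule $\ea$ uses, and in the sum case which branch to descend into.

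In the inductive case $\tpa = \sumfam \tpb$ and $\epa = \inj \ii {\epa'}$ with $\dertystrongup \cxna {\epa'} {\tpb_\ii}$. Here $\tpa$ is not of the form $\shiftpat \cxnap$, so \Rule{sinv-sat} cannot apply and $\ea$ must begin with sum-elimination, $\ea = \fdesumlam \eva \eva {\ea_\idx}$, with premises $\dertysinv \cxna {\var \eva {\tpb_\ii}} {\ea_\ii} \emptyset \txpb$. I select the $\ii$-th branch and apply the induction hypothesis to $\epa'$ and $\ea_\ii$; this yields the required frontier subterm of $\ea_\ii$, hence of $\ea$. The case $\tpa = \tempty$ cannot arise, since \fullref{fig:strongup} admits no strong positive neutral at the empty type, so the hypothesis $\dertystrongup \cxna \epa \tpa$ is vacuous there.

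In the base case $\tpa$ is irreducible ($\shiftp \tna$ or a positive atom $\xpa$) and $\epa = \eva$ is a variable already present in $\cxna$ at that type. Now $\tpa$ is of the form $\shiftpat \cxnap$, so no positive-inversion rule applies and \Rule{sinv-sat} is forced, exposing the body of $\ea$ as a choice-of-focusing subterm $\efa$ at goal $\txpb$, living in $\cxna$ extended by the residual positive binding. This is where the cut discharges: that residual binding carries exactly the type of the leaf $\eva$ of $\epa$, so substituting $\eva$ for it reabsorbs the extension back into $\cxna$ and yields $\dertyfoc \cxna \efa \txpb$.

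The main obstacle is making this last transition watertight, and it has two facets. First, the inversion on $\ea$ must genuinely be forced at each step: this rests on the observation that, with an empty negative goal and a positive context reduced to the single formula $\tpa$, the choice between \Rule{sinv-sat} and the positive-inversion rules is settled entirely by whether $\tpa$ is already of the form $\shiftpat \cxnap$. Second, the substitution of $\epa$'s leaf for the residual positive binding must be type-correct and leave the ambient context unchanged --- which is precisely why \fullref{fig:strongup} constrains the leaves of strong positive neutrals to be genuine context variables rather than arbitrary subproofs. Once these are secured the rest is routine bookkeeping of the branch indices threaded through the case-splits, and the statement can be read as the constructive, term-level refinement of \fullref{lem:strong-decomposition}.
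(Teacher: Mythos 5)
Your proof is correct and follows essentially the same route as the paper, whose (very terse) argument is a simultaneous induction on $\epa$ and $\ea$ with a variable-for-variable substitution in the base case --- exactly your forced-inversion descent through the sum cases and your reabsorption of the residual binding at the leaf. Your elaboration of why the inversion is forced and why the leaves must be context variables (rather than arbitrary positive neutrals) matches the paper's own remark on why the lemma fails for the stronger judgment.
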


\begin{version}{\Not\cameraversion}
  This result would not be provable for the more expressive judgment
  $\dertyup \cxna \epa \tpa$: there is no obvious way to substitute
  a general $\var \eb \tna$ through $\ea$ that would respect the
  focusing structure -- and return a strict subterm.
  $\dernatstrongup \cxna \tpa \implies \dernatup \cxna \tpa$ is true
  but non-trivial, it relies on the compleness result -- identity
  expansion.
\end{version}

\subsection{Selection function}
\label{subsec:selection-function}

Contrarily to the simpler setting with sums but no empty type, not
all ways to select neutrals for saturation preserve canonicity in
presence of the empty type. Consider for example the focused terms
\begin{smallmathpar}
  \letin \eva {\app {\ob{f}} \eunit}
    {\fdesumlam \eva \eva {\inj 1 ()}}

  \letin \eva {\app {\ob{f}} \eunit}
    {\fdesumlam \eva \eva {\inj 2 ()}}
\end{smallmathpar}%
at the typing
$\dernatded
  {\var f {\fun {\shiftp \tunit} {\shiftn {\sum \tunit \tunit}}},
   \var g {\fun {\shiftp \tunit} {\shiftn \tempty}}}
  {\sum \tunit \tunit}
$.
The set of potential observations is
$\{\app {\ob{f}} \eunit, \app {\ob{g}} \eunit\}$, and both terms made
the same choice of observing only $\app {\ob{f}} \eunit$. The first
term always returns $\inj 1 \eunit$ and the second $\inj 2 \eunit$, so
they are syntactically distinct even modulo $\parel{\argeq\icc}$. Yet
they are $\betaeta$-equivalent as the context is inconsistent. Note
that if $\letin \evb {\app {\ob{g}} \eunit} \wild$ had been introduced
during saturation, the immediately following invertible phase would
necessarily have been $\eabsurd \evb$, and the two terms would thus be
syntactially equal.

To make saturation canonical again, we need a provability completeness
requirement: if there is a possible proof of $\tempty$, we want
saturation to find it. One could cheat, knowing that provability of
any formula is decidable in propositional logic, and test explicitly
for the absence of proof of $\tempty$; but saturation is already doing
proof search\footnote{Saturation synthetizes new sub-terms and can thus
  decide equivalence with $\tempty$, unlike previous methods that
  would only reorder the subterms of the compared terms.}, and
we can extend it gracefully to have this property.

We define our requirement on the selection function in
\fullref{fig:selection-function}. We require that, for any type $\tpa$
that is part of the deducible observations $S$ (by a neutral
$\annot \ena {\shiftn \tpa}$), either $\tpa$ is already retrievable
from the context $\cxna$ (no need to introduce it then) or it is the
type of a neutral $\enap$ selected by the function. We do not require
the same neutral $\ena$ to be selected: there may be infinitely many
different neutrals deducible at $\tpa$, but just having one of them in
the returned set suffices. This definition is not natural, it will be
validated by the results from \fullref{sec:saturation-consistency}.

Note that the types $\tpa$ that can be simply deduced from the context
$\dernatdown \cxna \tpa$ are subformulas of $\cxna$. We know by the
subformula property that they are also subformulas of the root
judgment of the global derivation. In particular, there is only
a finite number of such deducible types $\tpa$ -- this would not hold
in a second-order type system. Valid selection functions exist thanks
to this finiteness.

\begin{smallmathparfig}
  {fig:selection-function}
  {Specification of saturation selection functions}
   \infer[select-specif]
   {\forall \cxna, \sa, \tpa,\\
    \var \ena {\shiftn \tpa} \in \sa
    \quad
    \implies
    \quad
    \dernatstrongup \cxna \tpa
    \;\vee\;
    \exists (\var \enap {\shiftn \tpa}),\; \enap \in \satselect \cxna \tna \sa
   }
   {\satselect \wild \wild \wild \text{ is a valid selection function}}
\end{smallmathparfig}

\begin{version}{\Not\cameraversion}
  Note that two valid selection functions can be merged into a valid
  selection function, by taking the union of their outputs.
\end{version}

\subsection{Completeness of saturation}

Completeness of saturation is relative to a specific choice of
selection function. Indeed, consider the context
\begin{smallmathpar}
\cxna

\eqdef

(\var \eva \xpa, \var \evb {\fun {\shiftp \tunit} {\shiftn \xpa}})
\end{smallmathpar}
In this context, the only deducible positive formula is
$\xpa$, and it is already retrievable from $\cxna$. This means
that a selection function that would satisfy
$\satselect \cxna \xna \sa = \emptyset$ would be a valid selection
function. However, saturating with such a selection function is not
computationally complete: the saturated term $\var \eva \xpa$ has
a valid derivation, but $\letin \evc {\app \evb \eunit} \evc$ does
not -- nor does any equivalent term.

We can order selection functions by pointwise subset ordering:
a function $f$ is above $g$ if it selects at least all of $g$'s
neutrals for each context. A set of saturation functions is
upward-closed if, for any saturation function in the set, any function
above it is in the set.

\begin{theorem}[Completeness of saturation]
  \label{thm:saturation-complete}
  For any focused term $\dertyinvpp \cxna \cpa \ea \tna \txpa$, there
  exists an upward-closed set of selections functions such that
  $\dertysinv \cxna \cpa \eap \tna \txpa$ holds, for a (computable)
  saturated term $\eap$ such that
  $\focerase \ea \argeq\betaeta \focerase \eap$.
\end{theorem}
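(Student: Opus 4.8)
The plan is to read a base selection function $f_0$ off the focused term $\ea$, build the saturated term $\eap$ by saturating $\ea$ with $f_0$, and take the required family to be $\{f : f \geq f_0\}$. The first observation is that validity of selection functions --- the \Rule{select-specif} condition of \fullref{fig:selection-function} --- is \emph{itself} upward-closed: it only ever demands that \emph{at least one} neutral of each deducible positive type be selected (unless that type is already strongly retrievable, $\dernatstrongup \cxna \tpa$), and this demand survives any enlargement of the selection. Hence, once I exhibit a single valid $f_0$, every $f \geq f_0$ is automatically valid, and it suffices to show that saturating with any such $f$ reconstructs a term whose defocusing is $\betaeta$-equivalent to $\ea$.

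To construct $\eap$ I would proceed by induction along the focused derivation of $\dertyinvpp \cxna \cpa \ea \tna \txpa$. The invertible rules of \fullref{fig:focused-lambda-calculus} and \fullref{fig:saturated-lambda-calculus} are identical, so the induction follows $\ea$ rule for rule until it reaches a choice-of-focusing frontier $\dertyfoc{\cxna,\cxnap}{\efa}{\txpb}$, where the saturated system must instead pass through \Rule{sinv-sat} and \Rule{sat}. The central move is \emph{maximal multi-focusing}: I hoist every let-bound neutral occurring below the frontier up to the \emph{earliest} saturation phase at which it becomes derivable, namely the phase introducing the last variable of its new context on which it depends --- the side condition $\reluses{\ena}\cxnap$ of \Rule{sat} pins this phase down. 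I then define $f_0$ to return, at each signature $(\cxna,\cxnap,\txpb)$, the union over all frontiers of that signature of the neutrals hoisted there, augmented by one witness neutral of type $\tpa$ for each remaining deducible $\tpa$ with $\lnot\,\dernatstrongup{\cxna,\cxnap}{\tpa}$. The subformula property noted after \fullref{fig:selection-function} makes each such set finite and the number of distinct new contexts finite, so the new context empties after finitely many phases; $\eap$ is therefore a finite, computable term, and $f_0$ is valid by construction.

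It remains to check $\focerase\eap \argeq\betaeta \focerase\ea$ for every valid $f \geq f_0$, which requires absorbing two discrepancies. Hoisting reorders and regroups let-bindings; because focused terms are cut-free, defocusing a binding is a substitution that creates no $\beta$-redex, and independent bindings and invertible rules commute, each commutation being a $\betaeta$-equivalence on defocused terms (the conversions $\parel{\argeq\icc}$ together with the let-independence discussed in \fullref{subsec:non-canonicity}). Secondly, $f$ may introduce observations $\ea$ never used, both through the validity augmentation and through being strictly above $f_0$. When such an extra observation has positive type, the following invertible phase replicates the unchanged continuation across its cases, and the positive $\eta$-laws for sums, $\tunit$ and $\tempty$ in \fullref{fig:LCfull-betaeta} make an observation whose result is discarded invisible after defocusing. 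The delicate extra observation is one of type $\tempty$: validity forces it whenever a neutral of type $\shiftn\tempty$ is deducible, collapsing the term to an absurdity elimination; but such deducibility means the context $\polerase\cxna, \polerase\cxnap$ proves $\tempty$ in the unfocused calculus (defocus the neutral), so the derived ``equality under absurdity'' rule makes \emph{all} terms of the goal type equal there, and the equivalence holds a fortiori.

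I expect the main obstacle to be making the hoisting rigorous --- showing that the saturation frontiers of $\eap$ collect exactly $\ea$'s observations, placed at well-defined and mutually consistent positions. This is where \fullref{lem:strong-decomposition} is indispensable: it characterizes the new contexts reachable at the frontiers purely through strong provability, so that each of $\ea$'s neutrals is assigned to a unique phase independently of the order in which $\ea$ happened to introduce it, while \fullref{lem:strong-cut} ensures that an observation hoisted into a phase is genuinely consumable at a choice point of its continuation without disturbing the focusing structure. Two frontiers of the same signature but with different hoisted sets are reconciled by the union taken in the definition of $f_0$: the surplus neutrals are unused at each individual frontier and so are $\eta$-invisible by the argument above. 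With soundness thus established for every valid $f \geq f_0$, the set $\{f : f \geq f_0\}$ is the required upward-closed family, completing the plan.
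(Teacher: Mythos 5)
Your proposal is correct and takes essentially the same route as the paper: the paper's proof likewise builds $\eap$ by relocating $\ea$'s \texttt{let}-bound neutrals to the earliest saturation point at which they are well-typed (reusing the rewriting relation of \citet*{scherer2015}), augments that base selection with finitely many extra neutrals (by the subformula property) to meet the validity criterion, and obtains upward closure by the same dichotomy you give --- additional selected neutrals either leave the defocused term unchanged or expose a neutral proof of $\tempty$, under which all terms of the goal type become $\betaeta$-equal. Your write-up merely spells out in more detail (hoisting via the newness condition, union over frontiers of equal signature, the role of \fullref{lem:strong-decomposition} and \fullref{lem:strong-cut}) what the paper compresses into an outline.
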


\begin{version}{\Not\cameraversion}
  Note that, given \emph{two} focused term $\ob{\ea_1}, \ob{\ea_2}$,
  we can merge the selection functions used in the theorem above to
  get a single selection function for which there exists saturated
  terms for both $\ob{\ea_1}$ and $\ob{\ea_2}$.
\end{version}

\section{Saturation consistency}
\label{sec:saturation-consistency}

In this section, we prove the main result of this extension of
saturation to the empty type: if a context is inconsistent, then the
saturation phase will eventually introduce a variable of the empty
type $\tempty$ in the context. This is key to obtaining a canonicity
result -- if saturation sometimes missed proofs of $\tempty$, it could
continue with distinct neutral terms and result in distinct but
equivalent saturated terms.

The informal view of the different ways to deduce a positive formula
presented in \fullref{subsec:strong-positive-neutrals} (general proof,
simple deduction, retrieval from context) gives a specification of
what saturation is doing. From a high-level or big-step point of view,
saturation is trying all possible new simple deductions iteratively,
until all the positives deducible from the context have been added to
it. The following characterization is more fine-grained, as it
describes the state of an intermediary saturation judgment
$\dertysat \cxna \cxnap \efa \txpa$.

The characterization is as follows: any formula that can be ``simply
deduced'' from the old context $\cxna$ becomes ``retrievable'' in the
larger context $\cxna, \cxnap$. This gives a precise meaning to the
intuition that $\cxna$ is ``old''. What we mean when saying that
$\cxnap$ is ``new'' can be deduced negatively: it is the part of the
context that is still fresh, its deductions are not stored in the
knowledge base yet.

\begin{theorem}[Saturation]
  \label{thm:saturation}
  If a saturated proof starts from a judgment of the form
  $
    \dertysat \emptyset {\cxna_0} \efa \txpb
  $
  or
  $
    \dertysinv \emptyset {\cpa_0} \ea \tna \txpb
  $
  then for any sub-derivation of the form
  $
    \dertysat \cxna \cxnap \efa \txpb
  $
  we have the following property:
  \begin{mathpar}
    \forall \tpa,

    \dernatdown \cxna {\shiftn \tpa}

    \implies

    \dernatstrongup {\cxna, \cxnap} \tpa
  \end{mathpar}
\end{theorem}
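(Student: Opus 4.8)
The plan is to induct on the number of saturation phases (applications of \Rule{sat}) between the root and the given sub-derivation $\dertysat\cxna\cxnap\efa\txpb$. First I would record the structural invariant that makes the induction go through: the negative context is never altered by an invertible rule, so when a phase with old context $\Delta$ and new context $\Delta'$ passes through \Rule{sat} --- introducing the selected positive types $\bar\tp$ over the combined context $\Delta, \Delta'$ --- and then through the ensuing invertible phase down to each \Rule{sinv-sat} frontier, every resulting child phase has old context exactly $\Delta, \Delta'$. Thus a non-initial sub-derivation as in the statement satisfies $\cxna = \Delta, \Delta'$ for its unique enclosing phase, and its new context $\cxnap$ arises as one of the frontier contexts of that parent's invertible phase. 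For the base case, the first phase reached from either root form has $\cxna = \emptyset$; since no neutral is derivable in the empty context, $\dernatdown\emptyset{\shiftn\tpa}$ never holds and the implication is vacuously true.

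For the inductive step, fix $\tpa$ with $\dernatdown\cxna{\shiftn\tpa}$, witnessed by some $\dertysdown\cxna\en{\shiftn\tpa}$, and write $\cxna = \Delta, \Delta'$ as above. I would split on whether $\en$ mentions a variable of the new part, that is on $\reluses\en{\Delta'}$. If it does not, a strengthening argument on the neutral derivation gives $\dertysdown\Delta\en{\shiftn\tpa}$, hence $\dernatdown\Delta{\shiftn\tpa}$; the induction hypothesis applied to the parent phase then yields $\dernatstrongup{\Delta,\Delta'}\tpa$, which is $\dernatstrongup\cxna\tpa$, and weakening by $\cxnap$ delivers the goal $\dernatstrongup{\cxna,\cxnap}\tpa$.

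If instead $\reluses\en{\Delta'}$ holds, then $(\en, \tpa)$ lies in the candidate set handed to the selection function in the parent's \Rule{sat} step, whose context is precisely $\Delta, \Delta' = \cxna$. The selection specification \Rule{select-specif} then leaves two outcomes. Either $\tpa$ is already strongly retrievable, $\dernatstrongup\cxna\tpa$, and I weaken by $\cxnap$ as before; or some neutral of type $\shiftn\tpa$ is selected, so $\tpa$ occurs among the introduced types $\bar\tp$ that constitute the positive context of the parent's invertible phase. In the latter case I would invoke \fullref{lem:strong-decomposition}: each frontier new context of that invertible phase strongly retrieves every formula of its positive context, and $\cxnap$ is one such frontier, so $\dernatstrongup\cxnap\tpa$ holds; weakening by $\cxna$ closes the goal.

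The step I expect to be the main obstacle is the invariant of the first paragraph: verifying that the old context of a child phase is exactly the combined context of its parent, and that $\cxnap$ genuinely arises as a \Rule{sinv-sat} frontier of the parent's invertible phase so that \fullref{lem:strong-decomposition} applies --- this requires care with the branching introduced by sum-splits, and with the $\tempty$-branches, which simply produce no frontier and need not be considered. Once the invariant is secured, the selection specification and the strong-decomposition lemma supply exactly the two facts the cases require --- a candidate positive is either already known to the old context or is freshly selected, and freshly selected positives become strongly retrievable once decomposed --- leaving only routine strengthening and weakening for the neutral and strong-positive judgments.
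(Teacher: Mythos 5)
Your proof is correct and follows essentially the same route as the paper's: induction on the derivation with a vacuous base case at the empty context, then in the inductive step using the selection-function specification \Rule{select-specif} and \fullref{lem:strong-decomposition} to handle formulas that only become deducible via new neutrals. The only cosmetic difference is that you split on whether the witnessing neutral uses the new context $\Delta'$ while the paper splits on whether $\tpa$ is deducible from the old context alone (the two splits coincide up to routine strengthening), and you spell out the weakening steps and the first disjunct of the selection specification that the paper leaves implicit.
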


\begin{definition}
  \label{def:saturated-context}
  $\cxna$ is \emph{saturated} if $\dernatdown \cxna {\shiftn \tpa}$
  implies $\dernatstrongup \cxna \tpa$.
\end{definition}

\begin{corollary}[Saturation]
  \label{cor:saturation}
  If a saturated proof starts from a judgment of the form
  $
    \dertysat \emptyset {\cxna_0} \efa \txpb
  $
  or
  $
    \dertysinv \emptyset {\cpa_0} \ea \tna \txpb
  $
  then for any sub-derivation of the form
  $
    \dertysat \cxna \emptyset \efa \txpb
  $
  the environment $\cxna$ is saturated.
\end{corollary}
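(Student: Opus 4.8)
The plan is to derive this corollary as an immediate specialization of \fullref{thm:saturation}, with no additional reasoning required. The key observation is that the corollary's hypothesis --- a sub-derivation of the form $\dertysat \cxna \emptyset \efa \txpb$ --- is precisely a sub-derivation of the form $\dertysat \cxna \cxnap \efa \txpb$ in which the ``new'' context position $\cxnap$ has been instantiated to the empty context. Since the saturated proof is assumed to start from one of the two initial judgments $\dertysat \emptyset {\cxna_0} \efa \txpb$ or $\dertysinv \emptyset {\cpa_0} \ea \tna \txpb$, the theorem applies verbatim to this sub-derivation.

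First I would invoke \fullref{thm:saturation} with $\cxnap \eqdef \emptyset$. Its conclusion then reads: for all $\tpa$, if $\dernatdown \cxna {\shiftn \tpa}$ then $\dernatstrongup {\cxna, \emptyset} \tpa$. Next I would observe that appending the empty context is the identity on contexts, so $\cxna, \emptyset$ is literally $\cxna$; the strong positive judgment $\dernatstrongup {\cxna, \emptyset} \tpa$ is therefore the same judgment as $\dernatstrongup \cxna \tpa$, with no monotonicity or weakening lemma needed. Substituting gives exactly the implication $\dernatdown \cxna {\shiftn \tpa} \implies \dernatstrongup \cxna \tpa$ for every $\tpa$, which is by definition the statement that $\cxna$ is saturated (\fullref{def:saturated-context}).

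There is no genuine obstacle here: all the mathematical content lives in the fine-grained characterization of \fullref{thm:saturation}, whose inductive analysis of the interleaving of invertible rules and \Rule{sat} phases is where the real work is done. The corollary is pure bookkeeping that repackages the $\cxnap = \emptyset$ case of the theorem into the self-contained notion of a saturated context, which is the form in which the result will be consumed by the canonicity argument of \fullref{sec:canonicity}.
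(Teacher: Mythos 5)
Your proposal is correct and matches the paper's treatment: the corollary is exactly \fullref{thm:saturation} instantiated with $\cxnap = \emptyset$, where $\cxna, \emptyset = \cxna$ makes the conclusion coincide with \fullref{def:saturated-context}, which is why the paper gives no separate proof for it. All the real work is indeed in the theorem, and your repackaging argument is the intended one.
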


\begin{lemma}[Saturated consistency]
  \label{lem:saturated-consistency}
  If $\cxna$ is saturated, then $\ndernatded \cxna \tempty$.
\end{lemma}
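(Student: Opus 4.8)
The plan is to argue by contradiction, reducing unfocused provability of $\tempty$ to the focused choice-of-focusing judgment and then exploiting that $\tempty$ has no right-introduction rule together with \fullref{lem:strong-cut}. Suppose toward a contradiction that $\dernatded \cxna \tempty$. Since $\cxna$ is a negative-or-atomic context and $\tempty$ is a positive type, an invertible phase starting from $\cxna$ with empty positive context, empty negative goal, and positive goal $\tempty$ has no applicable invertible rule and immediately reaches the choice-of-focusing phase via \Rule{foclc-inv-foc}; so by completeness of focusing (\fullref{thm:focusing-complete}) we have $\dernatfoc \cxna \tempty$. Among all focused proofs witnessing $\dernatfoc \cxna \tempty$, I fix one of minimal size.

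Next I determine which choice rule can conclude this judgment. The goal $\tempty$ is positive, so \Rule{foclc-concl-neg}, whose conclusion carries a negative-atomic goal, does not apply. Likewise \Rule{foclc-concl-pos} would require $\dertyup \cxna \epa \tempty$, but the positive-focus rules build only atoms, sums, and shifts, never $\tempty$, so this is impossible. Hence the minimal proof must end with \Rule{foclc-let-pos}: it has the shape $\letin \eva \ena \ea$, with premises $\dernatdown \cxna {\shiftn \tpa}$ for some positive $\tpa$ and an invertible continuation $\dertyinvpp \cxna {\var \eva \tpa} \ea \emptyset \tempty$. Because $\cxna$ is saturated (\fullref{def:saturated-context}), the first premise gives $\dernatstrongup \cxna \tpa$: the let-bound type is already strongly retrievable from $\cxna$.

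Now I strip the redundant let-binding. Applying \fullref{lem:strong-cut} (strong positive cut) to $\dertystrongup \cxna \epa \tpa$ and the invertible continuation $\dertyinvpp \cxna {\var \eva \tpa} \ea \emptyset \tempty$ yields a subterm $\efa$ of $\ea$ with $\dertyfoc \cxna \efa \tempty$. Since $\efa$ is a proper subterm of $\letin \eva \ena \ea$, this is a strictly smaller focused proof of the very same judgment $\dernatfoc \cxna \tempty$, contradicting minimality; therefore no such proof exists and $\ndernatded \cxna \tempty$. I expect the main obstacle to be the bookkeeping that justifies invoking \fullref{lem:strong-cut}: checking that the invertible continuation produced by \Rule{foclc-let-pos} is exactly an invertible derivation of the form the cut lemma consumes (the neutral and invertible judgments are shared between the focused and saturated systems, so this amounts to matching judgment shapes), and confirming that passing to a subterm genuinely decreases the well-founded measure driving the minimal-counterexample argument. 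Note that the argument does not need the auxiliary observation that $\dernatstrongup \cxna \tempty$ is itself never derivable; saturation and strong cut suffice to collapse each candidate proof to a smaller one.
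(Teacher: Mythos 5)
Your proposal is correct and follows essentially the same route as the paper's own proof: reduce to focused proofs via completeness of focusing, invert to see that \Rule{foclc-let-pos} is the only possible choice of focusing for the goal $\tempty$, then use saturation of $\cxna$ together with \fullref{lem:strong-cut} to extract a strictly smaller proof of the same judgment. Your minimal-counterexample phrasing is just the contrapositive packaging of the paper's infinite-descent conclusion, so the two arguments coincide.
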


\begin{theorem}[Inconsistent canonicity]
  \label{thm:inconsistent-canonicity}
  If $\dernatded \cxna \tempty$, then for any $\efa, \efap$ such that
  $\dertysat \emptyset \cxna {\efa \bocomma \efap} \txpa$ we have
  $\efa \argeq\icc \efap$.
\end{theorem}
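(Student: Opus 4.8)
The plan is to exploit two facts about the fixed, ambient selection function that parametrizes the whole saturated system: both $\efa$ and $\efap$ resolve every \Rule{sat} step by the same call $\satselect{\cxna}{\txpa}{\sa}$, and the passed set $\sa$ of deducible neutrals depends only on the current context and goal, not on the term. Hence the two derivations must make identical saturation choices, and the only possible divergence lies in the invertible phases that sit between saturation steps. Our derivations start from $\dertysat \emptyset \cxna \what \txpa$, which is exactly the shape required to invoke \fullref{cor:saturation}, so the meta-theory of saturation is available throughout.

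First I would establish a propagation lemma: inconsistency is preserved along every path of the derivation below the root $\cxna$. Extending a context by $\lambda$-, product-, unit- or $\letin$-bound variables can only increase provability, so it preserves $\dernatded \cdot \tempty$; the one nontrivial case is a sum case-split, which replaces $\var \ev {\sumfam \tpb}$ by $\var \ev {\tpb_\ii}$ in branch $\ii$. There I use the intuitionistic equivalence $\dernatded {\cxna, \var \ev {\sumfam \tpb}} \tempty \iff \bigwedge_\ii \dernatded {\cxna, \var \ev {\tpb_\ii}} \tempty$, whose forward direction is a cut against the injection of $\tpb_\ii$ into $\sumfam \tpb$, to conclude that both branch contexts stay inconsistent. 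So every context occurring in the derivation remains inconsistent.

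Next I would rule out every leaf other than an absurdity elimination. A saturated derivation is finite, so each maximal path ends either with an $\eabsurd \ev$ inside an invertible phase, or with a focusing rule \Rule{sat-up}/\Rule{sat-down}; the latter can only fire from a judgment of the form $\dertysat \cxna \emptyset \what \txpa$ with empty new context. By \fullref{cor:saturation} such a context is saturated, hence consistent by \fullref{lem:saturated-consistency}, contradicting the propagation lemma. Therefore every branch is forced to uncover a hypothesis of type $\tempty$ during an invertible phase and to close with $\eabsurd$. Note that in these continuation phases the negative goal position is empty, so $\lambda$-, pairing- and $\eunit$-rules cannot fire: the invertible phase can only case-split sums in the context and eliminate absurdities.

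It then remains to assemble $\efa \argeq\icc \efap$ by well-founded induction on the finite saturated derivation, proving the sharper statement that over an inconsistent context any two derivations of $\dertysat \cxna \cxnap \what \txpa$ are $\argeq\icc$-related. Each \Rule{sat} step contributes the same $\letin {\bar\ev} {\bar\en} \what$ prefix on both sides, since the selection is forced as above; the following invertible phase only interleaves context case-splits and absurdity eliminations, and its admissible reorderings are by definition exactly $\argeq\icc$, with each sum branch landing in an inconsistent context where the induction hypothesis applies. The hard part will be this last coincidence: when the decomposed context offers several empty-type hypotheses, or when an absurdity could be eliminated either before or after an intervening split, I must argue that all the resulting terms lie in a single $\argeq\icc$-class. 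The clean way to see this is that an invertible phase is determined by the family of positive patterns its context admits, and a context entailing $\tempty$ admits none, so the phase is canonically the empty one irrespective of which absurd hypothesis is named; reconciling this pattern-family view with the purely permutation-based definition of $\argeq\icc$ is the delicate step, while everything else is forced by invertibility and by the shared selection function.
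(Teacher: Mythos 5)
Your proposal is correct and takes essentially the same route as the paper's proof: the decisive step in both is that neither term can contain a \Rule{sat-up} or \Rule{sat-down} sub-derivation, since by \fullref{cor:saturation} its context would be saturated and hence consistent by \fullref{lem:saturated-consistency}, contradicting the (propagated) inconsistency of the root context, after which everything remaining is type-directed and so equal modulo $\parel{\argeq\icc}$. Your explicit inconsistency-propagation lemma and your careful discussion of invertible phases over inconsistent contexts (multiple $\tempty$ hypotheses, absurdity before or after a split) merely spell out what the paper compresses into the sentence that ``the invertible and saturation phases are purely type-directed.''
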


\section{Canonicity}
\label{sec:canonicity}

In this section we establish the main result of this article. If two
saturated terms $\dertysinv \cxna \cpa {\ea \bocomma \eap} \tna \txpb$
are not syntactically equivalent ($\ea \argneq\icc \eap$), then there
exists a model $\ma$ in which a context distinguishes $\ea$ from
$\eap$: they are not contextually equivalent.

(We build distinguishing contexts in the un-focused
$\lambda$-calculus, so technically we are distinguishing the defocused
forms $\focerase \ea, \focerase \eap$; the proof crucially relies on
the saturated structure of its inputs, but the code we generate for
computation and separation is more easily expressed unfocused.)

\subsection{Sketch of the proof}

\paragraph{Intuition}

It is helpful to first get some intuition of what a pair of
syntactically distinct normal forms looks like, and what the
corresponding distinguishing context will look like. Suppose we have
$\dertysinv \cxna \cpa {\ea \argneq\icc \eap} \tna \txpb$. We can
explore $\ea$ and $\eap$ simultaneously, until we find the source of
their inequality.

The source of inequality cannot be in an invertible phase, given that
the term formers in invertible phase are completely determined by the
typing (modulo invertible commuting conversions); for example, if
$\tna$ is $\prodfam \tna$, we know that $\ea$ is of the form
$\pairfam \ea$, and $\eap$ of the form $\pairfam \eap$, with
$\ea_\ii \argneq\icc \eap_\ii$ for some $\ii$ -- so we can continue
exploring $\ea_\ii \argneq\icc \eap_\ii$. Same thing if the term
starts with a sum elimination (modulo $\parel{\argeq\icc}$ one can
assume that they eliminate the same variable),
$\fdesumlam \eva \eva {\ea_\idx}
\argneq\icc
\fdesumlam \eva \eva {\eap_\idx}
$
: the subterms $\ea_\ii, \eap_\ii$ in at least one of the
two branches differ.

Similarly, the source of inequality cannot be in the saturation phase,
where both terms saturate on neutrals that are completely determined
by the typing context -- and the saturation selection function -- they
are both of the form $\letin {\bar \eva} {\bar \ena} \wild$ for the
same set of neutrals on each side. The end of this saturation phase is
also type-directed, so both terms stop saturating (they get an empty
$\cxnap$ context) at the same time. The difference must then be in
the neutrals used in the \Rule{sat-up} or \Rule{sat-down} rules,
$\ena \argneq\icc \enap$ or $\epa \argneq\icc \epap$. Note that we
cannot get a positive neutral on one side and a negative neutral on
the other, as usage of those rules is directed by whether the goal
type is a negative atom $\xna$ or a positive type $\tpa$.

Now, two neutrals $\ena \argneq\icc \enap$ or $\epa \argneq\icc \epap$
may differ because their \emph{spine} differ, or because their
sub-terms that are outside the non-invertible phase differ. In the
latter case, finding the source of inequality is a matter of
traversing the common structure towards the sub-terms that differ. The
former case is more interesting -- this pair of neutrals with distinct
spines is what we call \emph{source of inequality}.

In the positive neutral case, we end up on either $\inj \ii \epa$ and
$\inj \jj \epap$ with $\ii \neq \jj$, or distinct variables
$\eva \neq \evb$ of atomic type $\xpa$. In the negative neutral case,
one may encounter distinct neutrals with distinct structure, for
example $\app \ena \epa \neq \eva \neq \proj \ii \enb$ at the same
type $\tna$; negative neutrals should be looked ``upside down'', as in
System L~\citep*{curien2016}: either their head variables differ, or
the same variable is applied a different sequence of elimination
rules.

In the easy case where the source of inequality is a sum constructor
$\pinj \ii \wild \argneq\icc \pinj \jj \wild$, obtaining
a distinguishing context looks relatively easy: we need a context
$\plug \Ca \ehole$ that corresponds to the term traversal we performed
to reach this source of inequality. For example, if we had
$\pairfam \ea \argneq\icc \pairfam \eap$ because $\ea_2 \argneq\icc \eap_2$,
the context fragment corresponding to this reasoning step would be
$\proj 2 \ehole$. This is trickier in the sum-elimination case: if we have
$\fdesumlam \eva \eva {\ea_\idx}
\argneq\icc
\fdesumlam \eva \eva {\eap_\idx}
$
then we need our context to instantiate the variable $\eva$ with the
right value $\inj \ii \epa$ so that the branch we want is taken -- the
one with $\ea_\ii \argneq\icc \eap_\ii$. This is easy if $\eva$ is
a formal variable introduced by a $\lambda$-abstraction: at the point
where our context needs to distinguish the two $\lambda$-abstraction
$\lam \eva \ea \argneq\icc \lam \eva \eap$, we need to use an
application context of the form $\app \ehole {\pinj \ii \epa}$. But
$\eva$ may have been introduced by a left-focusing step
$\letin \eva \ena \ea \argneq\icc \letin \eva \ena \eap$; then we need
to instantiate the variables of the observation $\ena$ \emph{just so}
that we get the desired result $\inj \ii \eap$. When the source of
inequality is on negative neutrals with heads $\eva : \tna$,
$\evb : \tnb$ or positive variables $\eva \neq \evb : \xpa$, we need
the distinguishing context to pass values in the same way to get to
this source of inequality, and also to instantiate the variables
$\eva, \evb$ to get an inequality. If those variables are at an atomic
type, we must pick a model that replaces this atomic type by a closed
type, to instantiate them by distinguishable values at this closed
type.

\paragraph{Positive simplification}

In order to simplify the following arguments, we will suppose that the
context and types of the two saturated proof to distinguish do not use
negative atoms, only positive atoms. In other words, the results only
holds for the focused system in the fragment
$\LC{(\xpa,\sysfun,\sysprodu,\syssumu)}$.

This is a perfectly reasonable simplification in view of our goal,
which is to distinguish inequivalent non-focused term that have
distinct saturated normal forms: we know that any choice of
polarization for the non-focused atoms preserves the existence of
normal forms, so we can make them all positives -- see
\fullref{subsec:polarization-choices}.

In particular, the source of inequality (distinct neutrals whose spine
differs) is always a pair of neutrals $\epa \argneq\icc \epap$, who
contain a syntactic difference
($\inj \ii \wild \argneq\icc \inj \jj \wild$ with $\ii \neq \jj$, or
$\eva \neq \evb : \xpa$) before the end of the non-invertible
phase. Negative neutrals are only used during saturation.

\paragraph{Neutral model}

If $\sa$ is a finite set, let us write $\finty{\sa}$ for the type
$\sum \tunit {\sum \dots \tunit}$ that is in bijection with $S$ (same
number of elements), witnessed by embeddings
$\infin \wild : \sa \to \finty\sa$ and
$\outfin \wild : \finty\sa \to \sa$ such that
\begin{smallmathpar}
  \outfin {\infin x} = x \in \sa
  \uad\wedge\uad
  \dernat \emptyset {\infin {\outfin \ea} \argeq\betaeta \ea} {\finty\sa}
\end{smallmathpar}%
For any two distinct elements $x \neq y \in S$ there exists
a distinguishing context for $\infin x \argneq\conteq \infin y$.

\begin{definition}[Neutral model]
  \label{def:neutral-model}
  Given two syntactically distinct saturated terms
  $\dertyinvpp \cxna \cpa {{\eanz} \argneq\icc {\eanzp}} \tna \txpa$
  (with positive atoms only) with a source of inequality of the form
  \begin{smallmathpar}
  \dertyup \cxnap {\epa \argneq\icc \epap} \tpa
  \end{smallmathpar}
  we define the
  \emph{neutral model} $\nmod {\eanz} {\eanzp}$ (or just $\nmodlight$)
  by
  \begin{smallmathpar}
    \nmodea (\xpb)
    \uad
    \eqdef
    \uad
    \finty{\{ \eva \mid \annot \eva \xpb \in \cxnap \}}
  \end{smallmathpar}%
\end{definition}

We say that $\nmodlight(\xa)$ contains a \emph{code} for each atomic
variable bound at the source of inequality.

\paragraph{Distinguishing outline}

The general idea of our distinguishing context construction is to
instantiate variables just so that each variable of atomic type
$\annot \eva \xpa$ evaluates to its code $\infin \eva$. Thus, when we
end up on distinct neutrals $\epa \argneq\icc \epap$, we know that our
context will send them to distinguishables values.

There are two moments where building a distinguishing context requires
synthesizing a closed value of a type: to instantiate the open
variables in the context of the two terms, and when distinguishing
invertible terms at a function type $\fun \tpa \tna$, which we know to
be of the shape $\lam {\annot \eva \tpa} {\annot \wild \tna}$.

Synthesizing a value for a variable of atomic type $\annot \eva \xpa$
is obvious, we just pick $\infin \eva$ -- this guarantees that, under
this context, $\eva$ will reduce to $\infin \eva$ as expected. For
a variable of sum type $\sum \tpa \tpb$, we have to choose the value
to make sure that the correct branch of the two terms (the one
containing the source of inequality) will be explored, as previously
explained. For a variable $\eva$ of negative type $\tna$, we have to
make sure that any observation of $\eva$, any neutral term $\ena$
whose head variable is $\eva$ (in the syntax of \citet*{curien2016}
they are the $\machine \eva S$), will reduce to the value we need: if
we have $\letin {\var \evb \tpa} \ena \dots$, the neutral $\ena$
should reduce to $\infin \evb$. In other words, much in the spirit of
higher-order focusing \citet*{zeilberger-phd}, we specify the
instantiation of $\annot \eva \tna$ by a mapping over all the
observations over $\eva$ that are made in $\eanz, \eanzp$.

\paragraph{Example}

Consider for example:
\begin{smallmathpar}
\dernat {\var n {\fun {\psum \tunit \xpa} {\shiftn \xpa}}}
  {\begin{array}{c}
     \bletin z
       {\app n {\pinj 1 \eunit}}
       {\letin o {\app n {\pinj 2 z}}
               z}
     \\
     \argneq\icc
     \\
     \bletin z
       {\app n {\pinj 1 \eunit}}
       {\letin o {\app n {\pinj 2 z}}
               o}
   \end{array}}
  \xpa
\end{smallmathpar}

The shared context in this example is
\begin{smallmathpar}
\letin z
  {\app n {\inj 1 \eunit}}
  {\letin o {\app n {\pinj 2 z}} \ehole}
\end{smallmathpar}%
and the source of inequality is $\ob{z \argneq\icc o}$. The atomic
variables in the context at this point are $\ob{z}$ and $\ob{o}$, so
we have $\nmodlight(\xpa) = \finty{\{\ob z, \ob o\}}$. We have to
provide a value for the negative variable $\ob n$; its
``observations'', the arguments it is called with, are $\inj 1 \eunit$
and $\inj 2 z$, so we define it on these inputs following the general
scheme:
\begin{smallmathpar}
  \ob{\hat n} \eqdef \left\{
    \begin{array}{lll}
      \inj 1 \eunit & \mapsto & \infin {\ob z} \\
      \inj 2 {\infin {\ob z}} & \mapsto & \infin {\ob o} \\
    \end{array}
  \right.
\end{smallmathpar}

The value of $\ob{\hat n}$ on the last element of
$\modapp \nmodlight {\sum \tunit \xpa}$, namely
$\inj 2 {(\infin {\ob o})}$, is not specified; the return type is
inhabited so a value can be chosen, and the specific choice does not
matter.

It is easy to check that the context
$\plug \Ca \ehole \eqdef
 \ob{\app {\plam n \ehole} {\hat n}}
$
is such that plugging both terms will result in distinct closed
values of $\modapp \nmodlight \xpa$, namely
$\infin {\ob z}$ and $\infin {\ob o}$.
From there, building a distinguishing context returning a boolean is
trivial.

\paragraph{Technical challenge}

We outlined the general argument and demonstrated it on an
example. Unfortunately, we found that scaling it to a rigorous,
general proof is very challenging.

When we define instantiation choices for negative types as mappings
from observations to their results, we implicitly rely on the fact
that the observations are distinct from each other. This is obvious
when the domain of these observations is made of first-order datatypes
(no functions), but delicate when some of those observations have
function types -- consider observations against
$\annot \eva {\fun {\shiftp {\fun \tpa \tna}} {\shiftn \xpa}}$.

The natural idea is to inductively invoke the distinguishability
result: if
$\app \eva {\plam \evb \ea} \argneq\icc \app \eva {\plam \evb \eap}$,
then $\ea \argneq\icc \eap$ are distinguishable and $\hat \eva$ can
distinguish those two arguments by passing the right instantiation for
$\evb$. However, making this intuition precise gets us into a quagmire
of self-references: to define instantiation of the variable $\eva$, we
may need to instantiate any such variable $\evb$ whose scope it
dominates, but the argument for distinguishability is really on the
values that $\lam \evb \ea$, $\lam \evb \eap$ have reduced to by the
time they are passed to $\hat\eva$; but the natural way to denote
those values makes a reference to the instances passed for all the
variables in their scope, $\eva$ included...

We are convinced that there is a general inductive argument to be
found that would play in a beautiful way with general polarized type
structure. We have not found it yet, and shall for now use (a focused
version of) the function-removal hack from \fullref{subsec:funless}.

\subsection{Saturated inequivalence}

We argued that if we have $\ea \argneq\icc \eap$, then those terms
must be of the form $\plug \Cb \ena, \plug \Cb \enap$ or
$\plug \Cb \epa, \plug \Cb \epap$ where the neutrals
$\ena \argneq\icc \enap : \xna$ or $\epa \argneq\icc \epap : \tpa$
have distinct \emph{spines} (invertible phases), that is, they differ
even when ignoring their invertible sub-terms. With the simplifying
assumption that all atoms are positively polarized, only the case
$\epa \argneq\icc \epap$ may arise.

This structure is crucial in the proof of canonicity, so we introduce
in \fullref{fig:derneq} a precise inductive definition of this
decomposition as a \emph{saturated inequivalence} judgment
\begin{smallmathpar}
  \derneqinv \ca \cca
    {\plug \Cb {\derneqspine \cap \ccap \epa \epap \tpa}}
    \tna \txpa
\end{smallmathpar}
which represents constructive evidence of the fact that
$\plug \Cb \epa \argneq\icc \plug \Cb \epap$, where $\epa, \epap$ are
the source of inequality as witnessed by the new judgment
$\derneqspineup \cap \epa \epap \tpa$.

The new structure $\cca$ is a \emph{constraint environment}, a list of
equalities of the form $\epa = \eva : \tpa$ or $\eva = \ena : \tpa$
that correspond to knowledge that was accumulated during the
traversal of $\Cb$: under a binding $\letin \eva \ena \ea$ we remember
$\eva = \ena$, and when doing a case-split on a variable
$\eva : \sumfam \tpa$ we remember which branch leads to the source of
inequality by $\inj i \evap = \eva$. Together, $\ccquotient \cap \ccap$
form a \emph{constrained environment} as used in
\citet*{fiore1999}. Note that when decomposing the variable
$\var \eva {\sumfam \tpa}$ we mention it in the constraint
environment, so we also keep it in $\cap$ for scoping purposes: we use
general contexts with types of any polarity, not just negative or
atomic contexts $\cxna$.

Two side-conditions in this judgment capture the essential properties
of saturated terms required to obtain canonicity. In the saturation
case, the condition $\forall \ena \in \bar\ena,\ \ena \notin \cca$
enforces that \texttt{let}-bindings in the constraint environment
$\cca$ all bind distinct neutrals. At the source of inequality, the
condition $\ndernatded \ca \tempty$ enforces that the context is
consistent.

\begin{smallmathparfig}
  {fig:derneq}
  {Saturated inequivalence judgment}
  \text{constraint environment}

  \cca \gramdef \emptyset \mid \cca, \epa = \eva \mid \cca, \eva = \ena
  \\
  \fbox{$\derneqinv \ca \cca
           {\plug \Cb {\derneqspine \cap \ccap \epa \epap \tpap}} \tna \txpb$}
  \\
  \infer
  {\bderneqinv
    {\ca, \var \eva {\sumfam \tpa}, \var {\eva_\ii} \tpa_\ii}
    {\cca, \inj \ii {\eva_\ii} = \eva}
    {\plug \Cb {\derneqspine \cap \ccap \epa \epap \tpap}}
    \tna \txpb}
  {\bderneqinv {\ca, \var \eva {\sumfam \tpa}} \cca
    {\plug
      {\left({\bcasenoinj \eva
        {\inj \ii {\eva_\ii}} \Cb
        {\inj \jj {\eva_{\jj \neq \ii}}} \ea}\!\!\!\!\right)}
      {\derneqspine \cap \ccap \epa \epap \tpap}
    }
    \tna \txpb}

  \text{(other cases omitted for space)}
  \\
  \fbox{$\derneqfoc \ca \cca
          {\plug \Cb {\derneqspine \cap \ccap \epa \epap \tpap}} \txpb$}
  \\
  \infer
  {\forall \ena \in {\bar\ena},\ \ena \notin \cca\\
   \derneqinv
    {\ca, \var {\bar \eva} {\bar \tpa}}
    {\cca, \bar \eva = \bar \ena}
    {\plug \Cb {\derneqspine \cap \ccap \epa \epap \tpap}}
    \emptyset \txpb
  }
  {\derneqfoc \ca \cca
    {\plug {\pletin {\annot {\bar \eva} {\bar \tpa}} {\bar \ena} \Cb}
      {(\derneqspine \cap \ccap \epa \epap \tpap)}}
    \txpb}

  \infer
  {\derneqspineup \ca \epa \epap \tpa\\ \ndernatded \ca \tempty}
  {\derneqfoc \ca \cca
    {\plug \ehole {\derneqspine \ca \cca \epa \epap \tpa}} \tpa}
  \quad
  \text{(other cases omitted for space)}
  \\
  \fbox{$\derneqspineup \ca \epa \epap \tpa$}
  \\
  \infer
  {\eva \neq \evb}
  {\derneqspineup {\ca, \var \eva \xpa, \var \evb \xpa} \eva \evb \xpa}
  \quad
  \infer
  {(\ii \neq \jj)
   \uad\vee\uad
   \derneqspineup \ca \epa \epap {\tpa_{\ii = \jj}}}
  {\derneqspineup \ca {\inj \ii \epa} {\inj \jj \epap} {\sumfam \tpa}}

  \text{no case for $\derneqspineup \ca \ea \eap {\shiftp \tna}$}
\end{smallmathparfig}

\begin{definition}
  \label{def:valid-constraints}
  A constrained environment $\ccquotient \ca \cca$ is \emph{valid} if
  \begin{smallmathpar}
    \ena \in \cca \implies \dertydown \ca \ena \cca

    \epa \in \cca \implies \dertystrongup \ca \epa \cca

    (\epa = \eva) \in \cca, (\epap = \evb) \in \cca, \eva =_\alpha \evb
    \implies \epa =_\alpha \epap

    (\eva = \ena) \in \cca, (\evb = \enap) \in \cca, \ena =_\alpha \enap
    \implies \eva =_\alpha \evb
  \end{smallmathpar}
\end{definition}

\begin{lemma}[Saturated inequivalence]
  \label{lem:saturated-inequivalence}
  ~\\
  If $\dertyinvpp \emptyset \cpa {\ea \argneq\icc \eap} \tna \txpb$
  with positive atoms only,
  then there exists ${\plug \Cb \ehole}, \epa, \epap$ and a valid $\cca$ such that
  \begin{smallmathpar}
    \derneqinv \cpa \emptyset
      {\plug \Cb {\derneqspine \cxnap \cca \epa \epap \tpa}}
      \tna \txpb
    \\
    \ea \argeq\icc \plug \Cb \epa

    \eap \argeq\icc \plug \Cb \epap
  \end{smallmathpar}
  Let us write $\derdecompto \ea \eap \Cb \epa \epap$ when this
  relation holds.
\end{lemma}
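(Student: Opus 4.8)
The plan is to build the three mutually-recursive derivations of \fullref{fig:derneq} by a simultaneous structural induction on the two saturated derivations of $\ea$ and $\eap$, which proceed in lock-step because the saturated system is type-directed modulo $\parel{\argeq\icc}$. Concretely I would induct on the size of $\ea$ (equivalently on the shared typing derivation), maintaining as I descend the current context $\ca$, the partial distinguishing context $\plug \Cb \ehole$ already traversed, and the constraint environment $\cca$ that accumulates the knowledge $\eva = \ena$ gathered at \texttt{let}-bindings and $\inj \ii {\eva_\ii} = \eva$ gathered at case-splits. At each step I match on the phase the two terms are in and emit the corresponding rule of the saturated inequivalence judgment.

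In an invertible phase the head term-former is forced by the goal type (and by the positive variables of $\ca$) up to $\parel{\argeq\icc}$, so both terms share it. If the goal is $\tunit$ both terms are $\eunit$, and if some $\var \eva \tempty$ is available both are $\eabsurd \eva$; either way they are $\parel{\argeq\icc}$-equal, contradicting $\ea \argneq\icc \eap$, so these cases are vacuous. For $\fun \tpa \tna$ and $\prodfam \tna$ I peel off the common $\lambda$-abstraction or pair, extend $\plug \Cb \ehole$ accordingly, and recurse into a component where the two terms still differ. For a context variable $\var \eva {\sumfam \tpa}$ both terms case-split on $\eva$ (reordering invertible rules by $\parel{\argeq\icc}$ so the splits align); since the whole terms differ, at least one branch differs, and I descend into it, recording the constraint $\inj \ii {\eva_\ii} = \eva$ and filling the other branches of $\Cb$ with arbitrary well-typed terms, exactly as in the case-split rule of \fullref{fig:derneq}. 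When both terms reach the choice-of-focusing point they are of the form $\letin {\bar\eva} {\bar\ena} \wild$ binding the \emph{same} set of neutrals (we may assume a common selection function, by the merging remark following \fullref{thm:saturation-complete}); I extend $\Cb$ with these bindings, extend $\cca$ with $\bar\eva = \bar\ena$, and recurse. The side-condition $\forall \ena \in {\bar\ena},\ \ena \notin \cca$ of the saturation rule is discharged by the \emph{newness} discipline of \Rule{sat}: each saturation phase only introduces neutrals using a fresh variable of $\cxnap$, so a neutral already recorded in $\cca$ from an earlier phase is never re-bound.

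Once the new context $\cxnap$ is empty, both terms choose a focus; under the positive-atoms-only simplification this is always a positive focus, so I reach two positive neutrals $\epa \argneq\icc \epap$ at some type $\tpa$ and analyse them with $\derneqspineup$. If they are injections with distinct tags, or distinct variables $\eva \neq \evb$ at a positive atom, I have found the source of inequality and emit the corresponding spine rule; if they are same-tag injections I recurse on the argument; and if they live at a shifted type $\shiftp \tna$ then the difference lies inside the embedded invertible term, there is no spine rule, and I jump back to the invertible phase on a strict subterm --- which keeps the induction well-founded. At the source of inequality the relevant rule requires $\ndernatded \ca \tempty$; I obtain this from \fullref{thm:inconsistent-canonicity}: were $\ca$ inconsistent, the two saturated continuations would be $\parel{\argeq\icc}$-equal, contradicting that their spines differ, so $\ca$ must be consistent.

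It remains to check that the $\cca$ produced is valid in the sense of \fullref{def:valid-constraints}, which I would carry as an invariant of the induction. The typing conditions --- that every neutral recorded in $\cca$ is a well-typed negative neutral and every pattern recorded in $\cca$ is a strong positive neutral --- hold by construction, since saturation binds well-typed negative neutrals and case-splits record strong positive patterns $\inj \ii {\eva_\ii}$. The two injectivity conditions follow because distinct bound variables always carry distinct right-hand sides: case-splits introduce fresh branch variables, and the freshness side-condition guarantees distinct neutrals are bound to distinct \texttt{let}-variables. The main obstacle I expect is not any single case but the bookkeeping that makes the lock-step descent rigorous: justifying the $\parel{\argeq\icc}$-alignment of the two terms through the reordering of invertible rules and the set-valued \texttt{let}-bindings of saturation, ensuring the two terms genuinely split on the same variable and saturate on the same neutrals, and checking that the mutual recursion through shifts terminates while the validity and freshness invariants on $\cca$ are preserved at every phase boundary.
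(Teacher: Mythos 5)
Your overall architecture matches the paper's proof --- a direct induction on the evidence for $\ea \argneq\icc \eap$, descending in lock-step through the type-directed invertible and saturation phases, discharging the non-redundancy condition $\forall \ena \in {\bar\ena},\ \ena \notin \cca$ from the ``new''-ness discipline of \Rule{sat}, and carrying validity of $\cca$ as an invariant --- and those parts are sound. The genuine gap is in how you establish the other side-condition, $\ndernatded \ca \tempty$ at the source of inequality. You derive it by contraposition from \fullref{thm:inconsistent-canonicity}, but that theorem is stated for judgments of the form $\dertysat \emptyset \cxna {\efa \bocomma \efap} \txpa$: an inconsistent context that is entirely \emph{new}, i.e.\ a saturation phase starting from scratch. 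At the source of inequality the two continuations inhabit the opposite judgment $\dertysat \ca \emptyset {\epa \bocomma \epap} \tpa$: the context is entirely \emph{old}, saturation has just ended, and \Rule{sat-up} fires immediately. These two judgments admit very different terms, so the theorem does not apply where you invoke it. Worse, the statement you actually need --- ``if $\dernatded \ca \tempty$ and $\dertysat \ca \emptyset {\epa \bocomma \epap} \tpa$ then $\epa \argeq\icc \epap$'' --- is false for that judgment taken in isolation: with $\ca \eqdef (\var \eva \xpa, \var \evb \xpa, \var g {\fun {\shiftp \tunit} {\shiftn \tempty}})$, both $\eva$ and $\evb$ are derivable by \Rule{sat-up}, they are distinct spines, and $\ca$ is inconsistent. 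So the contrapositive you run has no valid theorem behind it at the point of use.

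What makes the side-condition true is not a property of the local judgment but the pedigree of the sub-derivation: since the whole proof is rooted at the saturated judgment $\dertysinv \emptyset \cpa {\ea \bocomma \eap} \tna \txpb$, \fullref{cor:saturation} guarantees that the context $\ca$ of any sub-derivation of the form $\dertysat \ca \emptyset \efa \txpb$ is \emph{saturated}, and \fullref{lem:saturated-consistency} then gives $\ndernatded \ca \tempty$ directly, with no detour through contraposition. This is exactly the paper's argument, and it also shows your citation runs the dependencies backwards: \fullref{thm:inconsistent-canonicity} is itself proved from these two facts. The repair is local --- replace the appeal to inconsistent canonicity by \fullref{cor:saturation} plus \fullref{lem:saturated-consistency} --- but as written your proof does not establish the consistency condition, which is precisely the ingredient that the empty type adds to this lemma.
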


\subsection{Focused fun removal}

In this section, we describe how to transform any pair of distinct
saturated terms $\eanz \argneq\icc \eanzp$ in
$\LC{(\xpa, \sysfun, \sysprodu, \syssumu)}$ into a pair of distinct
saturated terms in the restricted type system
$\LC{(\sysprodu, \syssumu)}$. We know that we can apply the
neutral model $\nmod{\eanz}{\eanzp}$ to get a pair of terms without
atoms -- $\LC{(\sysfun,\sysprodu,\syssumu)}$ -- and then the
bijections of \fullref{subsec:funless} give us terms in
$\LC{(\sysprodu,\syssumu)}$. But the impact of these transformations
on the focused and saturated term structure must be studied carefully.

\paragraph{Reduction to closed types}

When applying the neutral model $\nmod{\eanz}{\eanzp}$, we turn atomic
types $\xpa$ into strictly positive types $\modapp \nmodlight \xpa$ of the form
$\sum \tunit {\sum \dots \tunit}$. Consider a binding site of the form
$\plam {\annot \eva \xpa} \ea$, which becomes
$\lam {\annot \eva {\sum \tunit {\sum \dots \tunit}}} {\modapp \nmodlight \ea}$
after transformation. To recover a valid focused term, we need to
insert a big sum elimination to remove the positive variable $\eva$
from the context:
\begin{mathpar}
  \lam \eva
  {\fdesum \eva {\inj \ii \eva}
    {\subst \ea \eva {\inj \ii \eunit}} {\ii \in \modapp \nmodlight \xpa}}
\end{mathpar}
(the family notation here denotes a cascade of sum-eliminations, with
as many cases in total as elements in $\modapp \nmodlight \xpa$). We
also perform such an elimination on each variable in context at the
root.

The substitution of $\inj \ii \eunit$ for $\annot \eva \xpa$ in $\ea$
may not create $\beta$-redexes, as a variable $\eva$ of atomic type
may not occur in eliminated-sum position. By inspection of the typing
rules of our focused system, such a variable can only appear in
a judgment of the form $\dertyup \cxna \eva \xpa$; the focused structure
is preserved by replacing it by the derivation
$\dertyup \cxna {\inj \ii \eunit} {\sum \tunit {\sum \dots \tunit}}$.

For a focused term $\ea$, let us simply write $\modapp \nmodlight \ea$
for this transformed focused term, splitting over each variable of
type $\modapp \nmodlight \xpa$.

The \emph{saturated} structure, however, is not preserved by this
change. The problem is that replacing a new variable $\eva$ by
a closed term $\inj \ii \eunit$ may break the condition that only
``new'' neutrals are introduced during a saturation phase: any neutral
that would only use $\eva$ as the new variable introduced by the last
invertible phase is not new anymore.

However, we can show that the \emph{saturated inequivalence} structure
is preserved by this transformation. There is a shared context to
a source of inequality in the transformed terms, where the
\texttt{let}-bindings might not be new at introduction time, but they
are not redundant, as requested by the inequivalence structure. This
is the property of saturated proofs (besides saturation consistency)
that our distinguishability result relies on.

\begin{theorem}[Inequivalence in the model]
  \label{thm:inequivalence-in-the-model}
  Suppose we have saturated forms $\eanz, \eanzp$
  with positive atoms only. Let us define
  $\eano \eqdef \modapp {\nmod{\eanz}{\eanzp}} \eanz$ and
  $\eanop \eqdef \modapp {\nmod{\eanz}{\eanzp}} \eanzp$.

  If $\derdecompto \eanz \eanzp {\Cb_0} {\epanum{0}} {\epapnum{0}}$,
  then there exists $\Cb_1$, $\epanum{1}$, $\epapnum{1}$ such that
  $\derdecompto \eano \eanop {\Cb_1} {\epanum{1}} {\epapnum{1}}$
\end{theorem}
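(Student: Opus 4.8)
The plan is to transport the saturated-inequivalence derivation given by $\derdecompto \eanz \eanzp {\Cb_0} {\epanum 0} {\epapnum 0}$ (\fullref{lem:saturated-inequivalence}) through the neutral model, by induction on the traversal context $\Cb_0$ running from the root down to the source of inequality. I set $\Cb_1 \eqdef \modapp \nmodlight {\Cb_0}$, $\epanum 1 \eqdef \modapp \nmodlight {\epanum 0}$, $\epapnum 1 \eqdef \modapp \nmodlight {\epapnum 0}$, where $\modapp \nmodlight \wild$ is the focused transformation of the \emph{Reduction to closed types} paragraph: it rewrites each positive atom $\xpa$ into the closed type $\modapp \nmodlight \xpa$ and inserts, at each binding site of a formerly atomic variable $\eva$, a cascade of sum-eliminations whose distinguished branch substitutes the code $\infin \eva$ for $\eva$. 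Under this reading every invertible step of $\Cb_0$ becomes the corresponding invertible step of $\Cb_1$, possibly preceded by the extra case-splits coming from these insertions; each such inserted split is an instance of the case rule of \fullref{fig:derneq}, where the on-path branch is the one selected by the codes and the off-path branches carry the subterms shared by $\eano$ and $\eanop$. Each saturation step maps to a saturation step binding the transformed neutrals $\modapp \nmodlight {\bar \ena}$.

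At the source I treat the two spine cases of $\derneqspineup$. A constructor clash $\inj \ii \epa \argneq\icc \inj \jj \epap$ is preserved verbatim, as $\modapp \nmodlight \wild$ commutes with injections. An atomic clash $\derneqspineup {\ldots, \var \eva \xpa, \var \evb \xpa} \eva \evb \xpa$ with $\eva \neq \evb$ is sent to a clash between the codes $\infin \eva$ and $\infin \evb$ in $\modapp \nmodlight \xpa$. Since $\infin \wild$ is injective and $\modapp \nmodlight \xpa$ is a finite sum of units, two distinct codes differ at some injection index, so after peeling the common outer injections the sum case of the spine rule applies in its $\ii \neq \jj$ form. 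This is exactly the step that converts the inequality of two atomic variables into a structural inequality between the transformed terms.

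The first side condition to re-establish is non-redundancy: the saturation rule demands $\forall \ena \in \bar\ena,\ \ena \notin \cca$, and I must recover it for $\modapp \nmodlight {\bar\ena}$ against $\modapp \nmodlight \cca$. On the branch leading to the source, $\modapp \nmodlight \wild$ acts on neutrals as the substitution sending each atomic variable to its code. Because the context only grows along $\Cb_0$, every atomic variable occurring in an accumulated neutral already lives in the source context $\cxnap$ and hence has a code; because $\infin \wild$ is injective and the atomic-typed positive neutrals of the source calculus are \emph{exactly} variables (there is no closed constant of atomic type to collide with), this substitution is injective on the accumulated neutrals. Injectivity carries $\ena \notin \cca$ to $\modapp \nmodlight \ena \notin \modapp \nmodlight \cca$, so distinct \texttt{let}-bindings remain distinct. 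This realises the point flagged before the theorem: the transformation destroys \emph{newness} of the bound neutrals, but \emph{non-redundancy} — all that the inequivalence judgment requires — survives.

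The second side condition is consistency $\ndernatded {\cxnap} \tempty$ at the source, which must be re-established for the transformed source context. Sitting at a focusing judgment with an empty new context, $\cxnap$ is \emph{saturated} (\fullref{cor:saturation}), so by \fullref{def:saturated-context} every positive type it can simply deduce is strongly retrievable from its own variables. The only way $\modapp \nmodlight \wild$ can create emptiness is by collapsing an \emph{unused} atom (one with no variable in $\cxnap$) to $\tempty$; but strong retrievability forces any atom occurring in a forcing position of a deducible positive type to have a witnessing variable in $\cxnap$, so such a collapse can only ever happen in a non-forcing position. Transporting strong-retrieval witnesses through the model then shows, by induction on types, that every transformed context type stays inhabited, genuine atom-free emptiness being excluded by the consistency of the saturated $\cxnap$ itself; a closed context all of whose types are inhabited is consistent (\fullref{cor:inhabited-or-inconsistent}). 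With both conditions re-established the transported derivation witnesses the inequivalence judgment for $\eano, \eanop$, and since $\modapp \nmodlight \wild$ is a congruence for $\parel{\argeq\icc}$ commuting with plugging we obtain $\eano \argeq\icc \plug {\Cb_1} {\epanum 1}$ and $\eanop \argeq\icc \plug {\Cb_1} {\epapnum 1}$, i.e.\ $\derdecompto \eano \eanop {\Cb_1} {\epanum 1} {\epapnum 1}$. I expect the consistency preservation to be the main obstacle: ruling out that the $\tempty$-collapse of unused atoms ever makes a consistent context inconsistent is the genuinely new difficulty caused by the empty type, and it is precisely here that the saturation-consistency machinery of \fullref{sec:saturation-consistency} — unnecessary in the empty-free setting — becomes indispensable.
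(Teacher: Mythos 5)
Your transport of the inequivalence evidence is, for most of its length, the paper's own proof: induction along the traversal, the invariant that sub-terms of $\eano, \eanop$ are the sub-terms of $\eanz, \eanzp$ under the substitution of codes $\infin \eva$ for atomic variables, the choice of the code branch at each inserted case-split, non-redundancy of \texttt{let}-bindings via injectivity of that substitution, and preservation of spine clashes (injection clash verbatim, variable clash turned into a clash of distinct codes). All of that matches. The gap is in the step you yourself identify as the crux, preservation of the consistency side condition $\ndernatded \cxnap \tempty$, where your argument is not the paper's and does not hold up.

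You claim (i) that the only danger is collapsing a variable-less atom to $\tempty$, and that this is harmless because it happens only in a ``non-forcing position'' (a notion defined nowhere in the paper), and (ii) that every transformed hypothesis type stays inhabited, whence consistency. Claim (ii) is false, and the same example refutes (i). Take $\cxnap \eqdef \var \evb {\fun {\shiftp {(\fun \xpa {\shiftn \tempty})}} {\shiftn \tempty}}$ (a polarized double negation of $\xpa$) with goal $\sum \tunit \tunit$, and the two saturated forms that return $\inj 1 \eunit \argneq\icc \inj 2 \eunit$ after an empty saturation phase: nothing is simply deducible here, since applying $\evb$ requires an invertible proof of $\fun \xpa {\shiftn \tempty}$, i.e.\ a proof of $\tempty$ from $\cxnap, \var \ev \xpa$, which does not exist (that context is classically satisfiable); so every valid selection function selects nothing, $\cxnap$ is vacuously saturated, and it is consistent. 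Since no variable of type $\xpa$ is in scope at the source, $\nmodlight(\xpa) = \finty {\emptyset} = \tempty$, and the transformed hypothesis $\fun {\shiftp {(\fun \tempty {\shiftn \tempty})}} {\shiftn \tempty}$ is \emph{not} inhabited; worse, the transformed context is \emph{inconsistent}, because $\evb$ can now be applied to $\lam \ev {\eabsurd \ev}$. So emptying an ``unused'' atom is precisely not innocuous, inhabitation of hypotheses is not an available invariant, and your final inference (all types inhabited, hence consistent) has a false premise. Note also that your route never engages with the danger the paper itself illustrates (an unprovable atom becoming \emph{inhabited}); the paper's actual argument for this step is of a different nature and mentions no inhabitation of hypotheses: by \fullref{cor:saturation} the atoms with variables in $\cxnap$ are exactly the provable ones, and a hypothetical proof of $\dernatded {\modapp \nmodlight \cxnap} \tempty$ is pulled back to a proof of $\dernatded \cxnap \tempty$ by replacing each sub-derivation of an inhabited $\modapp \nmodlight \xpa$ by the $\nmodlight$-image of a proof of $\dernatded \cxnap \xpa$, contradicting consistency. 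Your proposal contains no counterpart of this pull-back. Be aware that the variable-less-atom corner above is delicate even for that argument (the pulled-back proof must also explain away absurdity eliminations at atoms mapped to $\tempty$), so it demands explicit treatment -- for instance interpreting variable-less atoms by $\tunit$ or $\tempty$ according to a Boolean valuation witnessing consistency (Glivenko), which is also the only way to make your inhabitation invariant true -- rather than the dismissal you give it.
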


\paragraph{Function elimination} To turn
$\eano \argneq\icc \eanop$ into terms at function-less types, we use
the transformations presented in \fullref{fig:focused-nofun}. They
correspond to focused versions of the term transformation
$\defun \wild \ta$ and $\defunarr \wild \ta$ of
\fullref{fig:dataty-nofun}, in two interdependent ways: their
definition assumes the transformed terms respect the focused
structures, which gives us rich information on term shapes, and their
result remain in focused form. The specification of these
transformations is as follows:
\begin{smallmathpar}
  \dertyinvpp \cxna \cpa \ea \tna \txpa
  \Rightarrow
  \dertyinvpp {\dataty \cxna} {\dataty \cpa}
    {\defun \ea {\mergeformulas \tna \txpa}}
    {\dataty \tna} {\dataty \txpa}
  \\
  \dertyinvpp \cxna \cpa \ea \tna \txpa
  \Rightarrow
    {\defunarr \ea {\mergeformulas \tna \txpa}}
    {\datatyarr \tna} {\datatyarr \txpa}
  \\
  \infer
  {\dertydown \cxna \ena \tna\\ \dots}
  {\dertydown \cxna \enb \tnap}
  \Rightarrow
  \exists \enbp,
  \infer
  {\dertydown {\datatyarr \cxna} {\defunarr \ena \tna} {\datatyarr \tna}\\ \dots}
  {\dertydown {\datatyarr \cxna} \enbp \tnap}
\end{smallmathpar}
The transformation of negative neutrals is more elegantly expressed in
the sequent-calculus syntax of \citet*{curien2016}: we transform
a command $\machinety \ena S \tna : \dernatded \cxna \tnb$ cutting on
a type $\tna$ into a command
$\defunarr {\machinety \ena S \tna} \tna : \dernatded {\datatyarr
  \cxna} \tnb$ cutting on $\datatyarr \tna$. We will write
$\plug \enb \ena$ when the neutral $\enb$ has head $\ena$ -- it is of
the form $\machine \ena S$.

\begin{smallmathparfig}
  {fig:focused-nofun}
  {Fun-less focused forms}

  \defun \ea \ta

  \text{applies $\defunarr \wild \ta$ on all subterms}
  \\
  \defunarr {\lam \eva {\bdesumlam \eva \eva {\ea_\idx}}} {\fun {\psumfam \tpa} \tna}
  \eqdef
  \vpairlam
    {\defunarr
      {\lam {\eva} {\ea_\idx}}
      {\fun {\tpa_\idx} \tna}}

  \app {\defunarr \ena {\fun {\psumfam \tpa} \tna}} {\pinj \ii \epa}
  \eqdef
  \app {\defunarr {\proj \ii \ena} {\fun {\tpa_\ii} \tna}} \epa
  \\
  \defunarr {\lam \eva {\eabsurd \eva}} {\fun \tempty \tna}
  \eqdef \eunit

  \app {\defunarr \ena {\fun \tempty \tna}} \epa \text{ impossible}
  \\
  \defunarr {\lam \eva \ea} {\fun {\shiftp \tunit} \tna}
  \eqdef \ea

  \app {\defunarr \ena {\fun {\shiftp \tunit} \tna}} \eunit
  \eqdef
  \ena
  \\
  \defunarr {\lam \eva \ea} {\fun {\shiftp {\prodfam \tna}} \tnb}
  \eqdef
  \defunarr
    {\lam {\eva_1}
      {\defunarr
        {\lam {\eva_2} {\subst \ea {\proj \ii \eva} {\eva_i}}^\ii}
        {\fun {\tna_2} \wild}}}
    {\fun {\tna_1} \wild}

  \app {\defunarr \ena {\fun {\shiftp {\prodfam \tna}} \tnb}} {\pairfam \ea}
  \eqdef
  \app
    {\defunarr
      {\app
        {\defunarr \ena {\fun {\tna_1} \wild}}
        {\ea_1}}
      {\fun {\tna_2} \wild}}
    {\ea_2}
  \\
  \defunarr {\lam \eva \ea} {\fun {\shiftp {\shiftn \tpa}} \tna}
  \eqdef
  \defunarr {\lam \eva \ea} {\fun \tpa \tna}

  \app
    {\defunarr \ena {\fun {\shiftp {\shiftn \tpa}} \tna}}
    {(\bo{\dertyinvpp \cxna \emptyset \epa {\shiftn \tpa} \emptyset})}
  \eqdef
  \app {\defunarr \ena {\fun \tpa \tna}} \epa
  \\
  \defunarr \eva \tna
  \eqdef
  \annot \eva {\datatyarr \tna {}}
\end{smallmathparfig}

The definition strongly relies on the inversion of term structure of
focused terms modulo $\parel{\argeq\icc}$. For example, when we define
$\defunarr {\lam \eva \ea} {\fun {\psumfam \tpa} \tna}$, we know that
$\var \eva {\sumfam \tpa}$ and can thus assume modulo
$\parel{\argeq\icc}$ that $\ea$ is of the form
$\fdesumlam \eva \eva {\ea_\idx}$. Similarly in the neutral case, we
know that any neutral
$\dertydown \cxna \ena {\fun {\psumfam \tpa} \tna}$ can only appear in
a term applied to an argument $\dertyup \cxna \epa {\psumfam \tpa}$ --
non-invertible phases are as long as possible, so they cannot stop on
a negative function type -- and we know that such an $\epa$ must be of
the form $\inj \ii \epap$.

The way the focused structure conspires to make this transformation
possible is, in fact, rather miraculous -- is it more than just
a hack? In the $\lam \eva \ea$ case of
$\fun {\shiftp {\prodfam \tna}} \tnb$, we know that a derivation of
the form $\dertydown \cxna \eva {\prodfam \tna}$ can only appear
inside a larger derivation
$\dertydown \cxna {\proj \ii \eva} {\tna_\ii}$, which we can replace
by $\dertydown {\datatyarr \cxna} {\eva_\ii} {\tna_\ii}$. In the
neutral case, a $\annot \ena {\fun {\shiftp {\prodfam \tna}} \wild}$
can only appear applied to an argument
$\dertyup \cxna \epa {\shiftp {\prodfam \tna}}$, but as this is
a shifted negative type we know that $\epa$ is in fact an invertible
form $\dertyinvpp \cxna \emptyset \ea {\prodfam \tna} \emptyset$, and
this must be a pair $\pairfam \ea$, exactly what we needed to define
the transformation. The same phenomenon occurs on the argument
$\dertyup \cxnap \epa {\shiftp {\shiftn \tpa}}$ of the double-shifted
case.

\begin{lemma}[Function elimination preserves inequivalence]
  \label{lem:inequivalence-nofun}
  \begin{smallmathpar}
    (\exists \Cb, \epa, \epap,\ %
      \derdecompto \ea \eap \Cb \epa \epap)
    \\
    \implies
    (\exists \Cb, \epa, \epap,\ %
      \derdecompto {\defun \ea {}} {\defun \eap {}} \Cb \epa \epap)
  \end{smallmathpar}
\end{lemma}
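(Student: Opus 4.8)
The plan is to proceed by induction on the saturated inequivalence derivation witnessing the hypothesis $\derdecompto \ea \eap \Cb \epa \epap$, that is, on the mutually recursive judgments $\derneqinv$, $\derneqfoc$ and $\derneqspineup$ of \fullrefnoname{fig:derneq}, transforming each fragment of the decomposition in lockstep with the type-directed clauses of $\defunarr \wild \tna$ from \fullrefnoname{fig:focused-nofun}. Along the traversal I would build a transformed hole context $\Cb'$, a transformed constraint environment $\cca'$ (applying $\dataty$ to the recorded types and $\defunarr$ to the recorded neutrals), and a transformed source of inequality $\epa', \epap'$, maintaining the invariant that $\defun {\plug \Cb \epa} {} \argeq\icc \plug {\Cb'} {\epa'}$ and symmetrically for the primed side. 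Together with the typing specification of $\defunarr$ this yields $\defun \ea {} \argeq\icc \plug {\Cb'} {\epa'}$ and $\defun \eap {} \argeq\icc \plug {\Cb'} {\epap'}$, which is exactly what $\derdecompto$ demands.

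Most cases of the invertible and focusing judgments are read off directly from the clauses of $\defunarr$, since the specification implications stated just before \fullrefnoname{fig:focused-nofun} guarantee that it preserves the focused invertible/focusing/spine structure: a $\lambda$-abstraction traversed at a sum- or product-argument type is uncurried into a pair- or iterated-projection context, and a case-split step is preserved verbatim because $\dataty$ commutes with $\sumfam$. The real content lies in the two side-conditions of \fullrefnoname{fig:derneq}. At the source of inequality I must preserve the consistency requirement $\ndernatded \ca \tempty$; this follows from the type isomorphisms of \fullrefnoname{fig:dataty-nofun}, since $\dataty \tempty = \tempty$ and $\refun {\defun \eb {}} {} \argeq\betaeta \eb$, so any proof of $\tempty$ in the transformed context $\dataty \ca$ would, via $\refun$, yield one in $\ca$, and consistency transfers by contraposition. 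In the saturation case I must preserve the non-redundancy condition $\forall \ena,\ \ena \notin \cca$, which reduces to showing that the neutral transform is injective modulo $\argeq\icc$: if $\ena \argneq\icc \enb$ at the same type, then $\defunarr \ena \tna \argneq\icc \defunarr \enb \tna$.

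This injectivity is the step I expect to be the main obstacle, and the reason the whole construction is delicate. The transform re-spines a negative neutral — uncurrying applications, and turning an application at a sum-argument type $\fun {\sumfam \tpa} \tna$ into a projection $\defunarr {\proj \ii \ena} {\fun {\tpa_\ii} \tna}$ keyed on the injection tag $\ii$ — and it is \emph{lossy} only in the clause $\app {\defunarr \ena {\fun {\shiftp \tunit} \tna}} \eunit \eqdef \ena$, which drops a unit argument. The key observation is that a dropped argument is always trivial (unique modulo $\argeq\icc$ at type $\tunit$), so it can never have carried a spine difference; hence two spines differing before the transform still differ after it. Making this precise, and threading it through the command-level (System-L-style) transformation of neutral heads so that the validity conditions of \fullref{def:valid-constraints} (typing of recorded neutrals and the two injectivity-modulo-$\alpha$ conditions on $\cca$) are re-established for $\cca'$, is where the bulk of the work sits. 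Finally, since these terms have already passed through the neutral model and are atom-free, the source of inequality $\derneqspineup \ca \epa \epap \tpa$ is always of injection-difference form $\inj \ii \epa \argneq\icc \inj \jj \epap$; as $\defun$ acts structurally on injections (again because $\dataty$ commutes with $\sumfam$), the tag inequality $\ii \neq \jj$ is preserved and the equal-tag subcase descends recursively into $\epa, \epap$, closing the induction.
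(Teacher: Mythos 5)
The paper never actually proves this lemma: it is stated in the main text and, unlike its neighbours (\fullref{lem:saturated-inequivalence}, \fullref{thm:inequivalence-in-the-model}, \fullref{cor:focused-equivalence-nofun}), it has no outline in \fullref{ann:proofs}, so there is no official argument to compare yours against. Judged on its own, your proposal is correct, and it is exactly the style of argument the paper uses for the sibling preservation result, \fullref{thm:inequivalence-in-the-model}: a simultaneous induction on the saturated-inequivalence evidence that rebuilds a decomposition for the transformed terms, with the mathematical content concentrated in the judgment's two side-conditions. Your handling of both is sound. For consistency at the source of inequality, the contraposition through the type isomorphism ($\dataty \tempty = \tempty$, substituting $\defun \eva \ta$ for each variable of $\dataty \ca$) is correct -- and is in fact \emph{simpler} than the corresponding step for the neutral model, where instantiating atoms can create inconsistency and the paper needs a proof-rewriting argument. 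For non-redundancy, reducing it to injectivity of the neutral transform modulo $\parel{\argeq\icc}$, and observing that the only discarded material is an argument at type $\shiftp \tunit$ (unique modulo $\parel{\argeq\icc}$, hence never the carrier of a spine difference, while heads are preserved and injection tags become projection indices), is the right key fact; the other apparently lossy clause, at $\fun \tempty \tna$, is harmless because its domain is a singleton modulo $\parel{\argeq\icc}$ and the consistency side-condition guarantees no source of inequality sits under such a binder. Two points you should make explicit in a write-up: first, $\defun \wild {}$ is only defined at closed (atom-free) types, so the lemma implicitly lives after the neutral model has been applied, as you assume; second, since decompositions are only given modulo $\parel{\argeq\icc}$, your invariant needs $\defun \wild {}$ to be well-defined on $\parel{\argeq\icc}$-classes -- this is what the inversion-based definition of \fullref{fig:focused-nofun} provides, but it is an invariant your induction must carry. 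Neither point is a gap; your proof fills an omission the paper leaves open.
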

\begin{version}{\Not\cameraversion}
  (The converse implication also holds, but we don't need it.)
\end{version}

\begin{lemma}\label{lem:semantics-nofun}
  \begin{smallmathpar}
    \closedtermsem {\focerase {\defun \ea \ta}}
    ~
    =
    ~
    \closedtermsem {\defun {\focerase \ea} {\polerase \ta}}
    ~
    =
    ~
    \defun {\closedtermsem {\focerase \ea}} {\dataty {\polerase \ta}}
    ~
    \in
    ~
    \closedtypesem {\polerase \ta}
  \end{smallmathpar}
\end{lemma}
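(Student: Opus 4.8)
The plan is to prove the two equalities separately: the first (leftmost) one by descending to a $\betaeta$-equivalence of the two underlying unfocused terms and transporting it through semantic soundness, and the second (rightmost) one by a direct appeal to the term/semantics commutation law already recorded in \fullref{fig:dataty-nofun}. Well-definedness of all three expressions comes for free from the specifications recalled just before the statement: the focused $\defun\wild$ sends focused derivations to focused derivations, $\focerase\wild$ preserves typing modulo depolarization, $\closedtermsem\wild$ is total on closed-type judgments (\fullref{def:term-semantics}), and the semantic $\defun\wild$ lands in the matching semantic set by \fullref{fig:dataty-nofun}. Since $\eanz,\eanzp$ carry no atoms at this point, $\polerase\ta$ is a closed type, so $\closedtermsem\wild$ genuinely applies.

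For the first equality I would first establish the purely syntactic statement
\[
  \focerase{\defun \ea \ta}
  \ \argeq\betaeta\
  \defun{\focerase \ea}{\polerase \ta},
\]
and then invoke \fullref{thm:betaeta-implies-semeq} ($\betaeta$ implies semantic equality) to turn it into equality of closed semantics. The syntactic statement is proved by structural induction on the focused term $\ea$, following clause-by-clause the definition of the focused fun-removal in \fullref{fig:focused-nofun} together with its auxiliary $\defunarr\wild{\fun\ta\tna}$. In each clause I defocus both sides, use that $\focerase\wild$ commutes with the ordinary term formers and substitutes \texttt{let}-bindings away, and match the outcome against the corresponding clause of the unfocused $\defun\wild$/$\defunarr\wild$ of \fullref{fig:dataty-nofun}. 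The equalities are only up to $\betaeta$, never on the nose: for instance the $\fun{\shiftp\tunit}\tna$ clause returns the body $\ea$ and discards the bound variable $\eva$ on the focused side, which matches the unfocused side $\subst{\focerase\ea}\eva\eunit$ only after an $\eta$-expansion of $\eva$ at $\tunit$, so the induction has to carry the $\betaeta$ slack throughout (and similarly a $\beta$-step appears wherever the focused clause drops an argument).

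The second equality is then immediate: it is exactly the commutation law $\closedtermsem{\defun \eb \tb}=\defun{\closedtermsem \eb}\tb$ of \fullref{fig:dataty-nofun}, instantiated at the unfocused term $\eb\eqdef\focerase\ea$ and the closed type $\tb\eqdef\polerase\ta$. No induction is needed here; it is a single rewrite once the left-hand side has been normalized into $\closedtermsem{\defun{\focerase\ea}{\polerase\ta}}$ by the first equality.

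I expect the main obstacle to be the induction behind the first equality, because the two fun-removals recurse along genuinely different principles. The focused $\defunarr\wild$ of \fullref{fig:focused-nofun} recurses on the \emph{focused structure} and relies on inversion modulo $\parel{\argeq\icc}$ to know the shape of each subterm (that an argument at $\fun{\sumfam\tpa}\tna$ must be an injection, that a value at $\fun{\shiftp{\prodfam\tna}}\tnb$ must be a pair, etc.), whereas the unfocused $\defunarr\wild$ of \fullref{fig:dataty-nofun} recurses on the \emph{type}. Reconciling them in the neutral cases forces one to read negative neutrals ``upside-down'' in the System~L style of \citet*{curien2016}, and to check that the transformation of a head-variable spine $\plug\enb\ena$ defocuses to precisely the elimination context that the type-directed unfocused $\defunarr$ builds; the bookkeeping for the $\datatyarr$ auxiliary under nested products and shifts is where the matching is most delicate. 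Once every clause is verified, the global $\betaeta$-equivalence follows by congruence, and the two citations of \fullref{thm:betaeta-implies-semeq} and \fullref{fig:dataty-nofun} close the lemma.
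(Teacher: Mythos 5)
The paper offers no proof of this lemma for you to be compared against: it is one of the few statements skipped even by the proof-outline appendix (\fullref{ann:proofs} jumps from \fullref{thm:inequivalence-in-the-model} straight to \fullref{cor:focused-equivalence-nofun}, whose proof merely cites the present lemma), the authors evidently treating it as a routine commutation of ``natural'' definitions. Judged on its own, your decomposition is sound and very likely the intended one: the rightmost equality is a literal instance of the commutation law asserted in \fullref{fig:dataty-nofun}, instantiated at $\focerase \ea$ and $\polerase \ta$, and reducing the leftmost equality to a syntactic equivalence $\focerase{\defun \ea \ta} \argeq\betaeta \defun{\focerase \ea}{\polerase \ta}$ followed by \fullref{thm:betaeta-implies-semeq} is legitimate, since at atom-free types semantic equivalence is exactly equality of $\closedtermsem \wild$.

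Two gaps in the induction you sketch need to be addressed. First, the $\betaeta$ slack arises \emph{inside arguments of the meta-level transformation}: in the sum clause, for instance, one must $\beta$-reduce $\app {(\lam \eva \wild)} {(\inj \ii \evb)}$ underneath a further $\defunarr \wild {\fun {\ta_\ii} \tb}$ before the two sides line up. Since $\defunarr \wild \ta$ is a transformation and not a term context, congruence of $\parel{\argeq\betaeta}$ does not apply to it; you need an auxiliary lemma, proved by induction on $\ta$, that the unfocused $\defun \wild \ta$ and $\defunarr \wild \ta$ send $\betaeta$-equivalent terms to $\betaeta$-equivalent results. Second, and more seriously, the unfocused transformation of \fullref{def:annex-fun-less-types} is defined only on constructor forms plus arbitrary terms at function type, whereas $\focerase \ea$ for a focused $\ea$ is in general an \emph{open} term containing case-splits on neutrals (defocusing substitutes \texttt{let}-bound neutrals into match positions) as well as $\eabsurd \wild$ forms; on those, ``the corresponding clause of the unfocused $\defun$'' that your clause-matching step wants to invoke simply does not exist. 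You must therefore either first extend the unfocused transformation (homomorphically on elimination forms, and on variables whose types are transformed), thereby repairing an under-specification you inherit from the paper, or bypass the middle expression altogether by proving $\closedtermsem {\focerase {\defun \ea \ta}} = \defun {\closedtermsem {\focerase \ea}} {\dataty {\polerase \ta}}$ directly by induction on the focused structure using the (total) semantic clauses -- which is all that \fullref{cor:focused-equivalence-nofun} actually uses.
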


\begin{corollary}[Fun-elimination preserves semantic equivalence]
  \label{cor:focused-equivalence-nofun}
  \begin{smallmathpar}
    \closedtermsem \ea = \closedtermsem \eap

    \iff

    \closedtermsem {\defun \ea \ta} = \closedtermsem {\defun \eap \ta}
  \end{smallmathpar}
\end{corollary}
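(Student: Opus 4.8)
The plan is to obtain the corollary immediately from \fullref{lem:semantics-nofun} together with the fact, recorded in \fullref{fig:dataty-nofun}, that the semantic-value transformation $\defun \wild {\polerase \ta}$ is injective. First I would read off from \fullref{lem:semantics-nofun} that the semantics of a defunctionalized term is the semantic-value defunctionalization of its semantics: for each side we have $\closedtermsem {\defun \ea \ta} = \defun {\closedtermsem \ea} {\polerase \ta}$ and $\closedtermsem {\defun \eap \ta} = \defun {\closedtermsem \eap} {\polerase \ta}$, where $\defun \wild {\polerase \ta}$ on the right now denotes the map on semantic values sending $\closedtypesem {\polerase \ta}$ into $\closedtypesem {\dataty {\polerase \ta}}$. (I read $\closedtermsem \ea$ on a focused term as $\closedtermsem {\focerase \ea}$, which absorbs the $\focerase$ that \fullref{lem:semantics-nofun} makes explicit.) Applying this to both $\ea$ and $\eap$ reduces the two sides of the stated equivalence to the single question of whether $\defun {\closedtermsem \ea} {\polerase \ta}$ and $\defun {\closedtermsem \eap} {\polerase \ta}$ coincide.

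Second I would use the equation $\refun {\defun \va \ta} \ta = \va$ of \fullref{fig:dataty-nofun}, which holds for every semantic value $\va \in \closedtypesem \ta$ at a closed type $\ta$. Instantiated at $\polerase \ta$, it exhibits $\refun \wild {\polerase \ta}$ as a left inverse of $\defun \wild {\polerase \ta}$ on semantic values, hence the latter map is injective. This is exactly what the biconditional needs: the forward direction $\closedtermsem \ea = \closedtermsem \eap \Rightarrow \closedtermsem {\defun \ea \ta} = \closedtermsem {\defun \eap \ta}$ is immediate, since equal inputs give equal outputs under any map; and the backward direction follows because an injective map reflects equality. Chaining this with the rewriting of the first step closes the proof. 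Note that full bijectivity of $\defun \wild {\polerase \ta}$ is available but not required here — the left inverse, i.e. injectivity, already suffices for both directions.

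The main point to stress is that no genuinely hard step is left at this stage: the substance lives in \fullref{lem:semantics-nofun} and in the commutation equations of \fullref{fig:dataty-nofun}, both of which I am assuming. What remains is bookkeeping that must nonetheless be carried out carefully. I would check that $\ta$ is a closed type, so that $\defun \wild \ta$ is total and the semantic bijection of \fullref{fig:dataty-nofun} is actually available; I would fix the convention that $\closedtermsem \wild$ on focused terms means the semantics of the defocusing, so that the explicit $\focerase$ of \fullref{lem:semantics-nofun} aligns with the bare $\closedtermsem \wild$ of the statement; and I would keep the source and target closed types of the value-level $\defun \wild {\polerase \ta}$ straight, since the subscript conventions differ slightly between the term-level and value-level transformations.
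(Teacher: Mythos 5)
Your proposal is correct and takes essentially the same route as the paper's proof, which likewise obtains the corollary immediately from \fullref{lem:semantics-nofun} together with the value-level inverse equations of \fullref{fig:dataty-nofun}. The only cosmetic difference is that you observe that injectivity of the value-level $\defun \wild \ta$ (via its left inverse $\refun \wild \ta$) already suffices, where the paper cites the full bijection.
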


\subsection{Canonicity}
\label{subsec:canonicity}

\begin{definition}
  Let a \emph{closed substitution} $\substa$ be a mapping from
  variables to closed terms of closed types (types without atoms).

  We write $\dersubstty \substa \ca$ when $\ca$ is a closed context
  and, for any $\annot \eva \ta \in \ca$ we have
  $\dernat \emptyset {\Subst \eva \substa} \ta$.

  If $\cca$ is a constraint environment, we say that
  $\dersubstcc \substa \cca$ holds when, for any equation $\ea = \eb$ in
  $\cca$, we have $\Subst \ea \substa \argeq\semeq \Subst \eb \substa$.

  We write $\dersubsttycc \substa \ca \cca$ if
  $\dersubstty \substa \ca$ and $\dersubstcc \substa \cca$ hold.
\end{definition}

\begin{theorem}[Canonicity]
  \label{thm:canonicity}
  Assume $\ca$ is consistent, $\cca$ is valid, and
  \begin{smallmathpar}
   \derneqinv \cap \emptyset {\plug \Cb {\derneqspine \ca \cca \epa \epap
       \tpa}} \tna \txpa
  \end{smallmathpar}
  is a judgment of closed function-less types. Then there
  exists a closed substitution $\dersubsttycc \substa \ca \cca$ such
  that $\Subst \epa \substa \argneq\semeq \Subst \epap \substa$.
\end{theorem}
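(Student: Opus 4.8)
The plan is to build the required closed substitution $\substa$ by a single well-founded recursion over the decomposition $\derneqinv \cap \emptyset {\plug \Cb {\derneqspine \ca \cca \epa \epap \tpa}} \tna \txpa$, i.e.\ over the shared context $\Cb$ and the constraint environment $\cca$ it accumulates. I would first observe that the conclusion $\Subst \epa \substa \argneq\semeq \Subst \epap \substa$ is the easy half. Since we work with \emph{closed} types (\fullref{def:closed-type}), there are no atoms, so the source-of-inequality judgment $\derneqspineup \ca \epa \epap \tpa$ of \fullref{fig:derneq} has neither an atomic-variable base case nor a shift case: its only clause is the injection clause, and the recursion must therefore terminate on a genuine tag mismatch. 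Hence $\epa$ and $\epap$ are injection trees sharing a common prefix of tags and then differing at some position $\ii \neq \jj$. As injection tags are syntactic constructors untouched by substitution and sum injections are disjoint in the set semantics, $\Subst \epa \substa$ and $\Subst \epap \substa$ evaluate to distinct semantic values for \emph{any} well-typed $\substa$. So the real content is the existence of a well-typed $\substa$ satisfying every constraint of $\cca$; distinctness then comes for free.

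For the construction I read each recorded constraint as a demand on $\substa$ and discharge it along the scope order of $\Cb$. A case-split constraint $\inj \ii \evap = \eva$ demands that the value of the matched variable lie in the $\ii$-th summand; I set $\Subst \eva \substa \eqdef \inj \ii {\Subst \evap \substa}$, where by the strong-positivity clause of \fullref{def:valid-constraints} the branch binder $\evap$ really is a single fresh variable, whose value I define recursively. A let constraint $\eva = \ena$ demands $\Subst \eva \substa \argeq\semeq \Subst \ena \substa$; since in the function-less fragment $\ena$ is a chain of projections off a negative head variable $\evb$, I propagate the demand \emph{inward}, building the value assigned to $\evb$ as a tuple whose component along that projection path carries the intended value of $\eva$. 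The two injectivity clauses of \fullref{def:valid-constraints}, together with the ``distinct neutrals'' side-condition of the saturation case in \fullref{fig:derneq}, ensure that distinct let-bindings off a common head land in distinct tuple slots, so these tuple values are well-defined. Every slot and every variable left unconstrained is filled with an arbitrary closed inhabitant of its type; such an inhabitant exists because $\ca$ is consistent, since by \fullref{cor:inhabited-or-inconsistent} the facts $\dernatded \ca \ta$ and $\ndernatded \ca \tempty$ force $\dernatded \emptyset \ta$. Well-typedness $\dersubstty \substa \ca$ and constraint satisfaction $\dersubstcc \substa \cca$ then both hold by construction.

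The hard part is proving that this recursion is well-founded and that its two flows of demands never conflict: case-splits push demands \emph{outward} (a matched variable, itself often let-bound, must fall into a prescribed branch), while let-bindings push demands \emph{inward} (a head variable's tuple must record the values of the variables bound by neutrals off it). I would show they meet consistently by checking that each inward demand refers only to branch variables introduced strictly deeper in $\Cb$ than the head it constrains, so the dependency relation is acyclic and terminates at the finitely many injection trees and inhabited slots present at the leaf. The only way the two flows could clash is if a case-split forced a let-computed value into a branch whose summand is uninhabited; ruling this out is exactly where consistency of $\ca$ --- guaranteed by the $\ndernatded \ca \tempty$ side-condition attached to the source of inequality in \fullref{fig:derneq} --- is indispensable, and is the point at which the Saturation Consistency results of the preceding section are put to work.
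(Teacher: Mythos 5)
Your proposal is correct and takes essentially the same route as the paper's own proof: the same initial observation that, at closed types, spine inequality is a constructor mismatch preserved by any well-typed substitution, and the same construction of $\substa$ by processing variables in reverse scope order, where case-split equations fix sum values by injection, let equations propagate values inward along projection paths into tuple values for negative heads, and unconstrained variables and slots receive arbitrary closed inhabitants obtained from consistency via \fullref{cor:inhabited-or-inconsistent}. The paper packages your ``inward demand'' propagation as an explicit recursive value-building function and resolves your worry about clashing demand flows exactly as you do --- validity of $\cca$ guarantees at most one equation per variable and per neutral, and consistency of $\ca$ guarantees that every summand forced by a case-split equation is inhabited, since its branch binder occurs in $\ca$.
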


\begin{corollary}[Canonicity]
  \label{cor:canonicity}
  In the sub-system $\LC{(\sysprodu,\syssumu)}$:
  \begin{smallmathpar}
   \derneqinv \ca \emptyset {\plug \Cb {\derneqspine \cap \ccap \epa \epap
       \tpa}} \tna \txpa
   \Rightarrow
   \plug \Cb \epa \argneq\conteq \plug \Cb \epap
  \end{smallmathpar}
\end{corollary}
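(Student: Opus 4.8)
The plan is to reduce the statement to \emph{semantic} inequivalence and then apply the contrapositive of \fullref{thm:conteq-implies-semeq}. Since we are in the sub-system $\LC{(\sysprodu,\syssumu)}$, every type is closed and function-less, so the hypotheses of \fullref{thm:canonicity} are exactly met. First I would read off the two missing hypotheses from the derivation: the source-of-inequality rule of \fullref{fig:derneq} carries the side-condition $\ndernatded \cap \tempty$, which gives consistency of the source context $\cap$; and $\ccap$ is valid, this being an invariant maintained along the saturated inequivalence derivation (and guaranteed by \fullref{lem:saturated-inequivalence}, which is the intended provenance of the judgment). Applying \fullref{thm:canonicity} then produces a closed substitution $\dersubsttycc \substa \cap \ccap$ with $\Subst \epa \substa \argneq\semeq \Subst \epap \substa$.

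Next I would turn this substitution into a single semantic valuation. Because all types are closed, $\closedenvsem \ca$ and $\closedtypesem \tpa$ are model-independent, so it suffices to fix one valuation $\vca_0 \in \closedenvsem \ca$ on the root context by $\vca_0(\eva) \eqdef \closedtermsem{\Subst \eva \substa}$. By the standard semantic substitution lemma, for any term $d$ over $\cap$ one has $\closedtermsem{\Subst d \substa} = \closedtermsem d (\vca_\substa)$, where $\vca_\substa$ interprets each variable of $\cap$ by the semantics of its $\substa$-image; in particular $\vca_0$ is the restriction of $\vca_\substa$ to $\ca$.

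The core of the argument is a traversal lemma, proved by induction on the saturated inequivalence derivation (equivalently on $\Cb$): the map $d \mapsto \closedtermsem{\plug \Cb d}(\vca_0)$ is injective in $\closedtermsem d (\vca_\substa)$, and the environment it reaches at the hole is exactly $\vca_\substa$. The constructor steps — injections and pair components — are injective on semantics, hence propagate differences upward. At a let-binding $\letin {\bar\eva} {\bar\ena} \wild$ the interpretation extends the environment by $\closedtermsem{\ena}$; by the constraint $\eva = \ena \in \ccap$ together with $\dersubstcc \substa \ccap$, this value equals $\closedtermsem{\Subst \eva \substa}$, so the extended environment still agrees with $\vca_\substa$. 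At a case-split on $\var \eva {\sumfam \tpa}$ recorded by $\inj \ii {\eva_\ii} = \eva \in \ccap$, the same constraint forces the scrutinee's value into the $\ii$-th summand, so the interpretation selects the branch containing the hole and binds $\eva_\ii$ to the matching value. Consistency of $\cap$ ensures the traversal never descends under an absurdity elimination, so the hole is genuinely reached.

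Combining these facts, $\closedtermsem{\plug \Cb \epa}(\vca_0)$ and $\closedtermsem{\plug \Cb \epap}(\vca_0)$ are injective images of $\closedtermsem{\Subst \epa \substa}$ and $\closedtermsem{\Subst \epap \substa}$, which differ by the output of \fullref{thm:canonicity}; hence $\plug \Cb \epa \argneq\semeq \plug \Cb \epap$, witnessed by $\vca_0$ in any model. The contrapositive of \fullref{thm:conteq-implies-semeq} (contextual equivalence implies semantic equivalence) then yields $\plug \Cb \epa \argneq\conteq \plug \Cb \epap$, as required. I expect the traversal lemma to be the main obstacle: the induction invariant must be set up so that the root valuation $\vca_0$ extends, step by step through every binding and case-split of $\Cb$, to precisely the source valuation $\vca_\substa$, while simultaneously preserving injectivity of the semantic context — and it is exactly here that the validity of $\ccap$ and the constraint-satisfaction $\dersubstcc \substa \ccap$ do all the work.
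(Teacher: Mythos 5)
Your proposal is correct, and it reaches the conclusion by a genuinely different route than the paper. The paper proves the corollary by a \emph{strengthened induction hypothesis} on the inequivalence judgment: at each node it constructs, simultaneously, a substitution $\dersubsttycc \substa \ca \cca$ and an explicit \emph{syntactic} distinguishing context $\plug \Ca \ehole$ such that $\plug \Ca {\Subst \ea \substa} \argneq\beta \plug \Ca {\Subst \eap \substa}$ at $\tbool$ -- e.g.\ wrapping with $\proj 1 \ehole$ in the pair case, and in the sum-elimination case observing that the $\substa$-instance of the scrutinee $\beta$-reduces the term into the branch containing the hole; the coincidence of semantic and contextual equivalence at closed types is invoked only once, in the base case at the source of inequality, and the final context is assembled by abstracting over the root context and applying the $\substa$-values. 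You instead invoke \fullref{thm:canonicity} once, transport everything into the semantic domain via the valuation $\vca_0$ induced by $\substa$, and prove a semantic traversal lemma (injectivity of $d \mapsto \closedtermsem{\plug \Cb d}(\vca_0)$ in the hole's denotation, with the environment invariant driven by $\dersubstcc \substa \ccap$), concluding with one appeal to the contrapositive of \fullref{thm:conteq-implies-semeq} at the root. The inductive invariant is essentially the same in both proofs -- the constraints in $\ccap$ force the path through $\Cb$ to be followed -- but you express it as environment extension in the model where the paper expresses it as $\beta$-reduction of the substituted term. What the paper's version buys is a concrete separating context matching the paper's operational definition of contextual equivalence directly; what yours buys is avoiding all syntactic context-plumbing, at the price of outsourcing the context construction to the reification machinery behind \fullref{thm:conteq-implies-semeq}. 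A minor point in your favor: you explicitly discharge the consistency and validity hypotheses of \fullref{thm:canonicity} (from the side-condition $\ndernatded \cap \tempty$ in \fullref{fig:derneq} and from \fullref{lem:saturated-inequivalence}), which the paper's proof leaves implicit.
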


\subsection{Results}

\begin{theorem}[Saturated terms are canonical]
  \label{thm:saturated-canonicity}
  In the system with only positive atoms,
  if $\dertysinv \emptyset \cpa {\ea \argneq\icc \eap} \tna \txpa$
  then $\ea \argneq\conteq \eap$.
\end{theorem}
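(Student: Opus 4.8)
The plan is to pin down a \emph{single} model in which $\ea$ and $\eap$ are separated, namely the neutral model $\nmod\ea\eap$ built from their source of inequality. Since $\parel{\argeq\conteq}$ quantifies over all models, it suffices to establish $\ea \argneq{\conteqmod{\nmod\ea\eap}} \eap$, and the whole argument is simply an assembly of the transport results of this section into that one conclusion.

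First I would expose the saturated structure of the difference. Applying \fullref{lem:saturated-inequivalence} to the hypothesis $\dertysinv \emptyset \cpa {\ea \argneq\icc \eap} \tna \txpa$ (which has positive atoms only) yields a shared context $\plug\Cb\ehole$, a source of inequality $\epa \argneq\icc \epap$, and a valid constraint environment with $\derdecompto \ea\eap\Cb\epa\epap$. I would then transport this witness across the two type simplifications. Writing $\nmodlight \eqdef \nmod\ea\eap$, \fullref{thm:inequivalence-in-the-model} gives $\derdecompto {\eano}{\eanop}{\Cb_1}{\epanum1}{\epapnum1}$ with $\eano \eqdef \modapp\nmodlight\ea$ and $\eanop \eqdef \modapp\nmodlight\eap$, now in the atom-free fragment $\LC{(\sysfun,\sysprodu,\syssumu)}$; \fullref{lem:inequivalence-nofun} then removes functions, giving $\derdecompto {\defun\eano{}}{\defun\eanop{}}{\Cb_2}{\epanum2}{\epapnum2}$ in $\LC{(\sysprodu,\syssumu)}$. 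On this function-less decomposition \fullref{cor:canonicity} applies directly and yields $\plug{\Cb_2}{\epanum2} \argneq\conteq \plug{\Cb_2}{\epapnum2}$; since $\parel{\argeq\icc}$ is a $\betaeta$-equivalence of defocused terms and is thus absorbed by $\parel{\argeq\conteq}$, this is exactly $\defun\eano{} \argneq\conteq \defun\eanop{}$.

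It remains to climb back up to $\ea$ and $\eap$. The coincidence of the two extensional equivalences is available: \fullref{thm:semeq-implies-conteq} (contrapositively) turns $\defun\eano{} \argneq\conteq \defun\eanop{}$ into $\defun\eano{} \argneq\semeq \defun\eanop{}$, \fullref{cor:focused-equivalence-nofun} transfers this semantic difference back through defunctionalization to $\eano \argneq\semeq \eanop$, and \fullref{thm:conteq-implies-semeq} (contrapositively) returns $\eano \argneq\conteq \eanop$. Finally, because $\eano,\eanop$ are the images of $\ea,\eap$ under $\nmodlight$ and carry closed types --- so that their contextual equivalence has no atom left to instantiate and unfolds to the single clause defining contextual equivalence in $\nmodlight$ --- the inequivalence $\eano \argneq\conteq \eanop$ is precisely $\ea \argneq{\conteqmod\nmodlight} \eap$, whence $\ea \argneq\conteq \eap$.

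I expect the genuinely hard mathematics to be carried entirely by the already-established transport theorems, so the risk in this final step is bookkeeping rather than new ideas. The two points that require care are (i) verifying that separation in the concrete model $\nmodlight$ really defeats equivalence in \emph{all} models, i.e.\ that the closed-type image $\eano \argneq\conteq \eanop$ coincides with $\ea \argneq{\conteqmod\nmodlight}\eap$; and (ii) threading the direction of each equivalence correctly while descending to the function-less fragment and returning, keeping in mind that neither \fullref{thm:inequivalence-in-the-model} nor \fullref{lem:inequivalence-nofun} preserves saturation itself --- only the weaker saturated-inequivalence witness of \fullref{lem:saturated-inequivalence} that \fullref{cor:canonicity} consumes, which is exactly where Saturation Consistency (\fullref{lem:saturated-consistency}) silently underwrites the consistency side-condition $\ndernatded\ca\tempty$ at the source of inequality.
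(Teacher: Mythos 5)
Your proposal is correct and follows essentially the same route as the paper's own proof: Lemma~\ref{lem:saturated-inequivalence} to extract the inequivalence witness, Theorem~\ref{thm:inequivalence-in-the-model} and Lemma~\ref{lem:inequivalence-nofun} to transport it through the neutral model and function elimination, Corollary~\ref{cor:canonicity} at closed function-less types, and then the semantic/contextual coincidence plus Corollary~\ref{cor:focused-equivalence-nofun} to climb back to $\ea \argneq\conteq \eap$. The only differences are immaterial bookkeeping in the final ascent (you route through $\parel{\argneq\conteq}$ of the model images where the paper routes through $\parel{\argneq{\semeqmod\nmodlight}}$), and your two flagged points of care match exactly the subtleties the paper itself addresses.
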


\begin{corollary}[Contextual equivalence implies equality of saturated forms]
  \label{cor:conteq-implies-icc}
  If $\dernat \ca {\ea \bocomma \eap} \ta$ are (non-focused) terms of
  the full simply-typed lambda-calculus $\LC{(\sysfull)}$ with
  $\ea \argeq\conteq \eap$, then for any $\cpa, \tna$ with no positive
  atoms and $\polerase \cpa = \ca, \polerase \tna = \ta$ and any
  saturated terms
  $\dertysinv \emptyset \cpa {\eb \bocomma \ebp} \tna \emptyset$ such
  that $\ea \argeq\betaeta \focerase \eb$ and
  $\eap \argeq\betaeta \focerase \ebp$ we have $\eb \argeq\icc \ebp$.
\end{corollary}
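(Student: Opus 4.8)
The plan is to prove the statement directly, reducing the all-negative-atom situation to the all-positive-atom case settled by \fullref{thm:saturated-canonicity}, with semantic equivalence serving as a polarity-blind bridge. First I would use the soundness directions to move between the non-focused terms and the defocusings of their saturated forms. Since $\betaeta$-equivalence implies semantic and hence contextual equivalence (\fullref{thm:betaeta-implies-semeq}, \fullref{thm:semeq-implies-conteq}), the hypotheses $\ea \argeq\betaeta \focerase \eb$ and $\eap \argeq\betaeta \focerase \ebp$ together with $\ea \argeq\conteq \eap$ give, by transitivity, $\focerase \eb \argeq\conteq \focerase \ebp$; and \fullref{thm:conteq-implies-semeq} upgrades this to $\focerase \eb \argeq\semeq \focerase \ebp$. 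Crucially, semantic equivalence is defined purely on the depolarized terms, so it does not see how the atoms were polarized.

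Next I would cross over to the positive polarization. As explained in \fullref{subsec:polarization-choices}, completeness of focusing --- and therefore completeness of saturation (\fullref{thm:saturation-complete}) --- holds for any assignment of polarities to atoms. I may thus re-run saturation on the two non-focused terms $\focerase \eb, \focerase \ebp$ with every atom now declared positive, using a single merged selection function (the remark after \fullref{thm:saturation-complete}) so that both sides make identical saturation choices; this produces saturated forms $\ec_1, \ec_2$ in $\LC{(\xpa,\sysfun,\sysprodu,\syssumu)}$ with $\focerase {\ec_1} \argeq\betaeta \focerase \eb$ and $\focerase {\ec_2} \argeq\betaeta \focerase \ebp$. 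The semantic equivalence obtained above transports along these $\betaeta$-equalities, so $\focerase {\ec_1} \argeq\semeq \focerase {\ec_2}$ and hence $\focerase {\ec_1} \argeq\conteq \focerase {\ec_2}$ by \fullref{thm:semeq-implies-conteq}. Now \fullref{thm:saturated-canonicity} applies in the positive polarization: its contrapositive turns this contextual equivalence into $\ec_1 \argeq\icc \ec_2$, and defocusing then gives $\focerase {\ec_1} \argeq\betaeta \focerase {\ec_2}$. Chaining the $\betaeta$-equalities back down, I arrive at $\focerase \eb \argeq\betaeta \focerase \ebp$ --- the positive-atom theorem has let me upgrade the contextual hypothesis all the way to honest $\betaeta$-equivalence of the negative defocusings.

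The remaining task is to conclude $\eb \argeq\icc \ebp$ from $\focerase \eb \argeq\betaeta \focerase \ebp$ in the negative polarization, and I expect this to be the main obstacle. It is precisely completeness of the saturated representation --- injectivity of defocusing modulo $\argeq\icc$ --- for negative atoms, and it cannot simply be read off the positive theorem through the semantic- and contextual-equivalence layer, which is polarity-blind and so cannot witness the polarity-specific $\argeq\icc$-structure of $\eb$ and $\ebp$. My plan is to establish it by a polarity-change correspondence on saturated forms --- defocus, then re-saturate in the opposite polarity --- and to show by induction on the saturated structure that this round-trip descends to an $\argeq\icc$-class bijection, so that $\ec_1 \argeq\icc \ec_2$ forces $\eb \argeq\icc \ebp$. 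The delicate point is proving the correspondence injective on $\argeq\icc$-classes without circularly presupposing canonicity; for that I would lean on the polarity-agnostic core of \fullref{subsec:canonicity}, where the neutral model collapses atoms to finite sums and the fun-removal of \fullref{fig:focused-nofun} lands in the atom-free subsystem, so that \fullref{cor:canonicity} supplies a separating context with no mention of polarity. Concretely, were $\eb \argneq\icc \ebp$, running \fullref{lem:saturated-inequivalence} through \fullref{thm:inequivalence-in-the-model}, \fullref{lem:inequivalence-nofun} and \fullref{cor:canonicity} would separate $\focerase \eb$ from $\focerase \ebp$, contradicting the $\betaeta$-equivalence just established --- the one case demanding genuinely new work being a negative-neutral source of inequality at a negative atom $\xna$, which the positive simplification sidestepped and which I would dispatch with the fun-removal hack foreshadowed in the distinguishing outline, instantiating $\xna$ by a closed negative type and specifying the head variable on its finitely many observations.
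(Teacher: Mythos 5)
Your proposal and the paper's proof diverge over how to read the statement, and the divergence matters. The paper's own proof is a two-line contrapositive: if $\eb \argneq\icc \ebp$, then \fullref{thm:saturated-canonicity} gives $\focerase \eb \argneq\conteq \focerase \ebp$; since $\betaeta$-equivalence implies contextual equivalence, the hypotheses give $\ea \argeq\conteq \focerase \eb$ and $\eap \argeq\conteq \focerase \ebp$, so transitivity yields $\ea \argneq\conteq \eap$, contradicting $\ea \argeq\conteq \eap$. For \fullref{thm:saturated-canonicity} to be applicable here, $\eb, \ebp$ must lie in the fragment with positive atoms only: the phrase ``no positive atoms'' in the statement is evidently a slip for ``no negative atoms'', as one can check from the way the corollary is used in the proof of \fullref{cor:conteq-implies-betaeta}, where the focused forms are explicitly chosen ``in the subset with positive atoms only''. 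Under that intended reading, your whole detour is unnecessary -- the theorem applies directly to $\eb$ and $\ebp$.

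Reading the statement literally, as you did, what is being asked is canonicity for the all-negative polarization, and there your proposal has a genuine gap. The bridge you build is sound as far as it goes: semantic equivalence, re-saturation of $\focerase \eb, \focerase \ebp$ with all atoms declared positive under a merged selection function, and the contrapositive of \fullref{thm:saturated-canonicity} do yield $\focerase \eb \argeq\betaeta \focerase \ebp$. But the last step -- from $\betaeta$-equivalence of the defocusings to $\eb \argeq\icc \ebp$ for \emph{negative-atom} saturated forms -- is precisely the canonicity theorem in the negative polarization, which the paper deliberately does not prove. The positive simplification of \fullref{sec:canonicity} exists because it makes every source of inequality a pair of positive neutrals; with negative atoms one must also separate terms whose source of inequality is a pair of distinct negative neutrals at a negative atom $\xna$, a case that the saturated-inequivalence judgment, the neutral model, and the fun-removal transformation are not set up to handle (the paper lists the polarization-general proof as future work). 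Your two sketches for this step do not close it: proving your defocus/re-saturate round trip injective on $\parel{\argeq\icc}$-classes is exactly the canonicity being sought, so that route is circular as you yourself suspect, and re-running \fullref{lem:saturated-inequivalence} through \fullref{cor:canonicity} ``with the fun-removal hack'' for $\xna$ names the missing work rather than carrying it out.
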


\begin{corollary}\label{cor:conteq-implies-betaeta}
  Contextual and $\betaeta$-equivalence coincide
\end{corollary}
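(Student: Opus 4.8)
The plan is to prove the two implications separately. The forward direction is an immediate composition of the soundness results, while the backward direction assembles the canonicity machinery developed above. For $\betaeta \Rightarrow \conteq$: if $\ea \argeq\betaeta \eap$, then \fullref{thm:betaeta-implies-semeq} gives $\ea \argeq\semeq \eap$, and \fullref{thm:semeq-implies-conteq} gives $\ea \argeq\conteq \eap$.

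For the substantive direction $\conteq \Rightarrow \betaeta$, suppose $\dernat \ca {\ea \bocomma \eap} \ta$ with $\ea \argeq\conteq \eap$. First I would fix a polarization making all atoms positive, which is legitimate by \fullref{subsec:polarization-choices}, obtaining polarized $\cpa, \tna$ with $\polerase \cpa = \ca$ and $\polerase \tna = \ta$. By completeness of focusing (\fullref{thm:focusing-complete}) each term has a $\betaeta$-equivalent focused form, and by completeness of saturation (\fullref{thm:saturation-complete})---merging the two selection functions so that a single selection function produces saturated forms for both inputs---I obtain saturated terms $\eb, \ebp$ with $\focerase \eb \argeq\betaeta \ea$ and $\focerase \ebp \argeq\betaeta \eap$. \fullref{cor:conteq-implies-icc} then yields directly that $\eb \argeq\icc \ebp$.

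It remains to transport this syntactic equality back to $\betaeta$-equivalence of the original terms. The key observation is that each invertible commuting conversion $\parel{\argeq\icc}$ merely permutes two adjacent invertible rules, a rewrite that preserves the defocused term modulo $\betaeta$ (indeed it is an instance of $\eta$-expansion, as witnessed by $\eabsurd \eva \argeq\icc \lam \evb {\eabsurd \eva}$); hence $\eb \argeq\icc \ebp$ entails $\focerase \eb \argeq\betaeta \focerase \ebp$. Chaining the three equivalences $\ea \argeq\betaeta \focerase \eb \argeq\betaeta \focerase \ebp \argeq\betaeta \eap$ gives $\ea \argeq\betaeta \eap$, completing the coincidence. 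The real content is entirely in \fullref{cor:conteq-implies-icc}, so the remaining work is bookkeeping; the only points requiring care are the existence of saturated forms under a common selection function, which is exactly what the merging remark following \fullref{thm:saturation-complete} supplies, and the verification that $\parel{\argeq\icc}$ lies within $\betaeta$-equivalence on defocused terms, which I expect to be the main---though minor---obstacle.
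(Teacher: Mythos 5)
Your proposal is correct and follows essentially the same route as the paper's own proof: completeness of focusing, then completeness of saturation (with the selection-function merging remark), then \fullref{cor:conteq-implies-icc}, and finally soundness of $\parel{\argeq\icc}$ with respect to $\betaeta$-equality on defocused terms. The only quibble is your parenthetical claim that an invertible commuting conversion ``is an instance of $\eta$-expansion''---in general it is a permutation of adjacent invertible rules whose soundness follows from a combination of $\beta\eta$-laws rather than a single $\eta$-step, which is exactly the (unproved) soundness assertion the paper itself relies on.
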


\begin{corollary}\label{cor:decidability}
  Equivalence in the full simply-typed $\lambda$-calculus with sums and
  the empty type is decidable.
\end{corollary}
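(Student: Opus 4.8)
The plan is to turn the chain of equivalences established so far into an effective decision procedure: for each input term I would compute a complete invariant — its saturated normal form modulo $\parel{\argeq\icc}$ — and then compare the two invariants.

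First I would fix, once and for all, the all-positive polarization discussed in \fullref{subsec:polarization-choices}, so that every atom is treated as a positive atom $\xpa$; this is precisely the ``positive simplification'' under which the canonicity results of \fullref{subsec:canonicity} are stated. Given two terms $\dernat \ca {\ea \bocomma \eb} \ta$ of $\LC{(\sysfull)}$, I would pick a polarized context $\cpa$ and a negative goal type $\tna$ with $\polerase \cpa = \ca$ and $\polerase \tna = \ta$, inserting shifts so that the goal sits in the negative position. By \fullref{thm:focusing-complete}, $\ea$ and $\eb$ admit focused forms whose defocusings are $\betaeta$-equivalent to them. By \fullref{thm:saturation-complete} — and crucially the remark that the two selection functions it produces can be merged into a single one — I would then compute \emph{simultaneously saturated} forms $\dertysinv \emptyset \cpa {\eap \bocomma \ebp} \tna \emptyset$ relative to one common selection function, with $\focerase \eap \argeq\betaeta \ea$ and $\focerase \ebp \argeq\betaeta \eb$. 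Both the focusing and the saturation witnesses are explicitly computable, so this transformation is an algorithm.

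The decision step is then simply to test whether $\eap \argeq\icc \ebp$. This is decidable: invertible commuting conversions only permute adjacent invertible rules inside a single phase, and let-bindings are compared by decidable set equality, so one normalizes each term by imposing a fixed ordering on invertible rules and checks syntactic ($\alpha$-)equality. For correctness I would prove $\ea \argeq\betaeta \eb \iff \eap \argeq\icc \ebp$ in both directions. In the forward direction, $\ea \argeq\betaeta \eb$ gives $\focerase \eap \argeq\betaeta \focerase \ebp$, hence contextual equivalence by \fullref{cor:conteq-implies-betaeta}, so \fullref{cor:conteq-implies-icc} delivers $\eap \argeq\icc \ebp$. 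In the backward direction, $\eap \argeq\icc \ebp$ forces $\focerase \eap \argeq\betaeta \focerase \ebp$, since invertible commuting conversions are type-preserving $\betaeta$-equivalences on defocused forms, and composing with the two $\betaeta$-equalities above yields $\ea \argeq\betaeta \eb$.

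The main obstacle here is not logical depth but bookkeeping: I must guarantee that the two terms are compared on a common footing. Concretely, the same polarization and the \emph{same} merged selection function have to be used for both inputs, for otherwise the two saturated forms would saturate on different sets of neutrals and the $\parel{\argeq\icc}$ comparison would be meaningless; the merging remark following \fullref{thm:saturation-complete} is exactly what makes this step legitimate. Everything else — computability of the focusing and saturation transforms, and decidability of $\parel{\argeq\icc}$ — is already in hand, so the corollary reduces to assembling these pieces into one terminating algorithm.
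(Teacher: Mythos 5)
Your proposal is correct and follows essentially the same route as the paper's own proof: convert both terms via the focusing completeness theorem and the saturation completeness theorem, then decide $\parel{\argeq\icc}$ on the resulting saturated forms. The extra care you take — fixing the all-positive polarization, merging the two selection functions, and arguing correctness in both directions via \fullref{cor:conteq-implies-icc} and the soundness of invertible commuting conversions — simply makes explicit the bookkeeping that the paper's terse proof delegates to its preceding corollaries and to the merging remark after \fullref{thm:saturation-complete}.
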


\begin{corollary}\label{cor:finite-models}
  The full simply-typed $\lambda$-calculus with sums and the empty
  type has the finite model property.
\end{corollary}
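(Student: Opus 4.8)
The plan is to observe that, in this paper's framework, the finite model property is essentially a repackaging of the coincidence of semantic and $\betaeta$-equivalence, once one recalls that every model is finite. The key fact, already noted after the definition of the semantics of types, is that for any closed type $\ta$ the set $\closedtypesem \ta$ is finite: finite products and sums of finite sets are finite, and the function space between two finite sets is finite. Since a model $\ma$ (in the sense of \fullref{def:model}) maps each atom to a closed type, every interpretation $\typesem \ta \ma = \closedtypesem{\modapp \ma \ta}$ is a finite set. Thus semantic equivalence $\parel{\argeq\semeq}$, being quantified over all such models, is \emph{by construction} equivalence in all finite interpretations.

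First I would make precise that closed types are expressive enough to realize any finite set: for a finite set $\sa$ the type $\finty\sa \eqdef \sum \tunit {\sum \dots \tunit}$ (with $|\sa|$ summands, using $\tempty$ for the empty case) satisfies $|\closedtypesem{\finty\sa}| = |\sa|$. Consequently a ``model into finite sets'' in the classical sense and a model in our syntactic sense determine the very same equivalence: any assignment of a finite set to each atom is realized, up to bijection, by some $\ma$, so quantifying over our models is the same as quantifying over finite-set interpretations.

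The property then follows by chaining the established results. We have $\ea \argeq\betaeta \eb \implies \ea \argeq\semeq \eb$ by \fullref{thm:betaeta-implies-semeq}, $\ea \argeq\semeq \eb \implies \ea \argeq\conteq \eb$ by \fullref{thm:semeq-implies-conteq}, and $\ea \argeq\conteq \eb \implies \ea \argeq\betaeta \eb$ by \fullref{cor:conteq-implies-betaeta}; together these give $\ea \argeq\betaeta \eb \iff \ea \argeq\semeq \eb$. The finite model property is exactly the right-to-left direction read contrapositively: if $\ea \argneq\betaeta \eb$, then $\ea \argneq\semeq \eb$, so there is a model $\ma$ — necessarily finite — with $\termsem \ea \ma \neq \termsem \eb \ma$. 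Hence any two $\betaeta$-inequivalent terms are already separated by a finite interpretation.

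The only genuinely delicate point is not a calculation but the conceptual concern flagged earlier: since our models are finite by fiat, one must argue that a \emph{real} finite model property has been proved and not merely stipulated. The hard part will be ruling out that some \emph{infinite}-set interpretation distinguishes terms our finite models cannot. This is handled by noting that enlarging the class of interpretations to infinite sets can only distinguish \emph{more} terms, never identify more; yet $\betaeta$-soundness (\fullref{thm:betaeta-implies-semeq}) extends verbatim to infinite-set semantics, so $\betaeta$-equivalent terms remain equal even in infinite models. Combined with the coincidence above, any pair equal in all finite models is $\betaeta$-equivalent and hence equal in all infinite models as well: the finite and infinite interpretations induce the same equivalence, which is precisely what the finite model property asserts.
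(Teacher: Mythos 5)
Your proof is correct and follows essentially the same route as the paper: the paper's own (one-line) argument is precisely that semantic equivalence coincides with $\betaeta$- and contextual equivalence while only ever interpreting types as sets isomorphic to closed types, which are finite. Your closing paragraph about infinite-set interpretations also mirrors the paper's own remark on this point (in its discussion of models using closed types), so nothing is missing or divergent.
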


\section{Conclusion}

\subsection{Other Related Work}

\paragraph{Background material}

For the reader looking for more historical perspective
\citet*{dougherty2000} gives an interesting, detailed presentation of
the state of the art in separation theorems in absence of sum types,
and of why their approach are difficult to extend to
sums. \citet*{simpson1995} gives an enjoyable, accessible exposition
of what ``completeness'' exactly means in the setting of categorical
semantics; in particular, while we can prove that two inequivalent
terms can be distinguished by a choice of finite sets, there is no
fixed choice of finite sets that could separate all pairs of terms. It
also discusses the notion of ``typical ambiguity'', and the idea that
$\betaeta$-equality is shown to be the maximal consistent relation. At
the time of writing, Simpson's statement of the 15th
\href{http://tlca.di.unito.it/opltlca/}{TLCA open problem} is also one
of the clearest expositions of the relation between these questions.

\paragraph{Focusing}

The main field of application of focusing to programming is the area
of \emph{logic programming}, where operational semantics correspond to
search strategies, which can often be described as specific choices of
polarization in a focused
logic~\citep*{chaudhuri-pfenning-price-2008}.
Focusing has been used to study programming language principles in
\citet*{zeilberger-phd}; when considering non-normal forms (logics with
a strong cut rule) it let us reason finely about evaluation order.

This suggests the general notion of \emph{polarization}, an approach
of programming calculi where compositions (cuts) are
non-associative \citep*{curien2016}, modeling effects and
resources. In the present work we consider only focused normal forms
(except for stating the completeness of focusing), which only captures
the pure, strongly normalizing fragment, and thus corresponds to
a \emph{depolarized} system.

Guillaume Munch-Maccagnoni's work on polarized abstract machine
calculi harps on many ideas close to the present work, in particular
\citet*{MunchMaccagnoni2015}. It is conducted in a syntax that is
inspired by the sequent calculus rather than the $\lambda$-calculus;
this choice gives a beautiful dynamic semantics to polarized
systems. In contrast, the focused $\lambda$-calculus as presented is
a system of normal forms, and defining an internal reduction for it
would be awkward. While there is no direct correspondence between
sequent proofs and natural deduction proofs in general, their normal
forms are in one-to-one mapping, so in our context the choice of
abstract machine or $\lambda$-terms-inspired syntax matters less.

\paragraph{Atom-less systems}

The structure of normal forms has been studied in
\citet*{altenkirch2004} in the special case of the type system
$\LC{(\sysfun, \syssum, 2)}$, with boolean types instead of general
sums, and no atoms. While conceptually easier to follow than the
Grothendieck logical relations of the more general
normalization-by-evaluation work on sums, they remain challenging in
presence of higher-order functions.

In unpublished work, \citet*{ahmad2010} work with the type system
$\LC{(\sysfun,\sysprodu,\syssumu)}$: no atoms, but everything
else. They use focusing as a guiding principle to generalize the
normal forms of $\LC{(\sysfun, \syssum, 2)}$ to everything else -- to
our knowledge this is the first work to use focusing to approach sum
types. The result obtained, namely decidability of observational
equivalence in this system, is not striking on its own: in absence of
atoms, all types are finitely inhabited, so two functions can be
compared by testing them on all their input domain. But it is
conducted in a rigorous, inspiring way that shows the promises of the
focused structure. Our own proof of distinguishability of distinct
normal forms is not as elegant as this development, as it uses the
inelegant shortcut of function type elimination. We are convinced that
there exists a beautiful proof that weaves the distinguishability
structure through higher-order abstractions in their style, but have
yet to find it.

\subsection{Future Work}

\paragraph{Direct distinguishability proof}

The use of the positive atoms simplification and the detour through
types without functions are a bit disappointing. There should be
a proof in which all steps are as general and general as possible:
completeness of focusing in the explicitly polarized system,
completeness of saturated forms, and then canonicity of saturated
forms at all types.

\paragraph{Categorical semantics}

We wonder what is the relation between the saturated structure and the
existing work on categorical semantics of the $\lambda$-calculus with
finite sums.

\paragraph{Direct comparison algorithm}

We prove decidability by reducing equivalence of arbitrary
$\lambda$-terms to equivalence of saturated forms. This algorithm can
be implemented, but we would rather use an algorithm that does not
need to compute full saturated normal forms before returning
a result -- in particular on inequivalent inputs.

It is reasonably easy to formulate an equivalence algorithm on the
focused forms directly, that would perform the saturation ``on the
fly'': at the beginning of each saturation phase, look for all the
neutrals that can be defined in the current context, recursively
compute their equivalence classes, and replace each class by
a distinct free variable -- then continue on the neutral
structure. Proving this algorithm correct, however, turns out to be
challenging.

Finally, it would be interesting to have an algorithm expressed
directly on the non-focused terms, that would perform a focusing
transformation on the fly, as much as each step of equivalence
requires. The difficulty then is that neutral subterms are not as long
as possible (they may be interrupted by commuting conversions), so it
is possible to look for neutrals sub-terms definable in the current
context and miss some of them -- future reasoning stages may un-stuck
sum-eliminations that in turn un-block neutrals that should have been
bound now. For the type system with non-empty sums, control operators
have been used~\citep*{balat2004} to solve that issue; can the
technique be extended?

\acks
\begin{version}{\Anonymous}
  (Redacted for anonymous review.)
  \\
  .\\
  .\\
  .\\
  .\\
  .\\
\end{version}
\begin{version}{\Not\Anonymous}
  Amal Ahmed made this work possible by giving the time, motivation
  and advice that let it foster. Andrew Pitts gave warm encouragements
  to look at the specific question of the empty type, and Alex Simpson
  enthusiastically sat through an intense whiteboard session at
  Schloss Dagstuhl and provided the final motivation to get it
  done. Max New, Didier Rémy and the anonymous referees provided very
  useful feedback on the article.

  This research was supported in part by the National Science
  Foundation (grant CCF-1422133).
\end{version}

\newpage
\bibliographystyle{plainnat}

\bibliography{finite-sums}

\begin{thebibliography}{25}
\providecommand{\natexlab}[1]{#1}
\providecommand{\url}[1]{\texttt{#1}}
\expandafter\ifx\csname urlstyle\endcsname\relax
  \providecommand{\doi}[1]{doi: #1}\else
  \providecommand{\doi}{doi: \begingroup \urlstyle{rm}\Url}\fi

\bibitem[Ahmad et~al.(2010)Ahmad, Licata, and Harper]{ahmad2010}
Arbob Ahmad, Daniel~R. Licata, and Robert Harper.
\newblock Deciding coproduct equality with focusing.
\newblock Online
  \href{http://www.cs.cmu.edu/~adahmad/coproduct_equality.pdf}{draft}, 2010.

\bibitem[Altenkirch and Uustalu(2004)]{altenkirch2004}
Thorsten Altenkirch and Tarmo Uustalu.
\newblock Normalization by evaluation for lambda\({}^{\mbox{-2}}\).
\newblock In \emph{FLOPS}, 2004.

\bibitem[Altenkirch et~al.(2001)Altenkirch, Dybjer, Hofmann, and
  Scott]{altenkirch2001}
Thorsten Altenkirch, Peter Dybjer, Martin Hofmann, and Philip~J. Scott.
\newblock Normalization by evaluation for typed lambda calculus with
  coproducts.
\newblock In \emph{LICS}, 2001.

\bibitem[Andreoli(1992)]{andreoli1992}
Jean-Marc Andreoli.
\newblock {L}ogic {P}rogramming with {F}ocusing {P}roof in {L}inear {L}ogic.
\newblock \emph{Journal of Logic and Computation}, 2\penalty0 (3), 1992.

\bibitem[Balat et~al.(2004)Balat, Di~Cosmo, and Fiore]{balat2004}
Vincent Balat, Roberto Di~Cosmo, and Marcelo~P. Fiore.
\newblock Extensional normalisation and type-directed partial evaluation for
  typed lambda calculus with sums.
\newblock In \emph{POPL}, 2004.

\bibitem[B{\"o}hm(1968)]{boehm1968}
Corrado B{\"o}hm.
\newblock Alcune proprieta delle forme normali nel k-calcolo.
\newblock \emph{IAC Pubbl}, 696119, 1968.

\bibitem[Chaudhuri et~al.(2008{\natexlab{a}})Chaudhuri, Miller, and
  Saurin]{chaudhuri-miller-saurin-2008}
Kaustuv Chaudhuri, Dale Miller, and Alexis Saurin.
\newblock Canonical sequent proofs via multi-focusing.
\newblock In \emph{IFIP TCS}, 2008{\natexlab{a}}.

\bibitem[Chaudhuri et~al.(2008{\natexlab{b}})Chaudhuri, Pfenning, and
  Price]{chaudhuri-pfenning-price-2008}
Kaustuv Chaudhuri, Frank Pfenning, and Greg Price.
\newblock A logical characterization of forward and backward chaining in the
  inverse method.
\newblock \emph{J. Autom. Reasoning}, 40\penalty0 (2-3), 2008{\natexlab{b}}.

\bibitem[Chaudhuri et~al.(2012)Chaudhuri, Hetzl, and Miller]{chaudhuri2012}
Kaustuv Chaudhuri, Stefan Hetzl, and Dale Miller.
\newblock {A Systematic Approach to Canonicity in the Classical Sequent
  Calculus}.
\newblock In \emph{CSL}, 2012.

\bibitem[Curien et~al.(2016)Curien, Fiore, and Munch-Maccagnoni]{curien2016}
Pierre-Louis Curien, Marcelo Fiore, and Guillaume Munch-Maccagnoni.
\newblock {A} {T}heory of {E}ffects and {R}esources: {A}djunction {M}odels and
  {P}olarised {C}alculi.
\newblock In \emph{Proc. POPL}, 2016.
\newblock \doi{10.1145/2837614.2837652}.

\bibitem[Dougherty and Subrahmanyam(2000)]{dougherty2000}
Daniel~J Dougherty and Ramesh Subrahmanyam.
\newblock Equality between functionals in the presence of coproducts.
\newblock \emph{Information and Computation}, 157\penalty0 (1), 2000.

\bibitem[Fiore and Simpson(1999)]{fiore1999}
Marcelo Fiore and Alex Simpson.
\newblock Lambda definability with sums via grothendieck logical relations.
\newblock In \emph{TLCA}, 1999.

\bibitem[Friedman(1975)]{friedman1975}
Harvey Friedman.
\newblock Equality between functionals.
\newblock In \emph{Logic Colloquium}, 1975.

\bibitem[Ghani(1995)]{ghani1995}
Neil Ghani.
\newblock {Beta-Eta Equality for Coproducts}.
\newblock In \emph{TLCA}, 1995.

\bibitem[Ilik(2015)]{Ilik2015}
Danko Ilik.
\newblock The exp-log normal form of types and canonical terms for lambda
  calculus with sums.
\newblock \emph{CoRR}, {\tt arxiv:1502.04634}, 2015.
\newblock URL \url{http://arxiv.org/abs/1502.04634}.

\bibitem[Liang and Miller(2007)]{liang2007}
Chuck Liang and Dale Miller.
\newblock Focusing and polarization in intuitionistic logic.
\newblock \emph{CoRR}, {\tt arxiv:0708.2252}, 2007.
\newblock URL \url{http://arxiv.org/abs/0708.2252}.

\bibitem[Lindley(2007)]{lindley2007}
Sam Lindley.
\newblock Extensional rewriting with sums.
\newblock In \emph{TLCA}, 2007.

\bibitem[Munch-Maccagnoni and Scherer(2015)]{MunchMaccagnoni2015}
Guillaume Munch-Maccagnoni and Gabriel Scherer.
\newblock {P}olarised {I}ntermediate {R}epresentation of {L}ambda {C}alculus
  with {S}ums.
\newblock In \emph{LICS}, 2015.

\bibitem[Scherer(2016)]{scherer-phd}
Gabriel Scherer.
\newblock \emph{{Which types have a unique inhabitant? Focusing on pure program
  equivalence.}}
\newblock PhD thesis, Université Paris-Diderot, 2016.

\bibitem[Scherer and R{\'e}my(2015)]{scherer2015}
Gabriel Scherer and Didier R{\'e}my.
\newblock Which simple types have a unique inhabitant?
\newblock In \emph{ICFP}, 2015.

\bibitem[Simmons(2011)]{simmons2011}
Robert~J. Simmons.
\newblock Structural focalization.
\newblock \emph{CoRR}, {\tt arxiv:1109.6273}, 2011.
\newblock URL \url{http://arxiv.org/abs/1109.6273}.

\bibitem[Simpson(1995)]{simpson1995}
Alex Simpson.
\newblock Categorical completeness results for the simply-typed
  lambda-calculus.
\newblock In \emph{TLCA}, 1995.

\bibitem[Statman(1982)]{statman1982}
Richard Statman.
\newblock Completeness, equivalence and lambda-definability.
\newblock \emph{Journal of Symbolic Logic}, 47\penalty0 (1), 1982.

\bibitem[Zeilberger(2009)]{zeilberger-phd}
Noam Zeilberger.
\newblock \emph{The Logical Basis of Evaluation Order and Pattern-Matching}.
\newblock PhD thesis, Carnegie Mellon University, 2009.

\bibitem[Zeilberger(2013)]{zeilberger2013}
Noam Zeilberger.
\newblock Polarity in proof theory and programming, 2013.
\newblock URL \url{http://noamz.org/talks/logpolpro.pdf}.
\newblock Lecture Notes for the Summer School on Linear Logic and Geometry of
  Interaction in Torino, Italy.

\end{thebibliography}

\newpage

\begin{version}{\Not\cameraversion}
\appendix

\section{Full definitions}
\label{ann:definitions}

\begin{definition*}[\nameonlyref{def:term-semantics}]
  \label{def:annex-term-semantics}

  We define the following naive semantics for term formers:
  \begin{smallmathpar}
  \begin{array}{l}
    \mathtt{var}_\eva
     :
     \typingsem {\ca, \var \eva \ta} \ta \ma
    \\ \mathtt{var}_\eva(\vca)
     \eqdef
     \vca(\eva)

    \\\\ \mathtt{pair}
     :
     \typingsem \ca {\ta_1} \ma
      \times
      \typingsem \ca {\ta_2} \ma
      \to
      \typingsem \ca {\prodfam \ta} \ma
    \\ \mathtt{pair}(f_1, f_2)(\vca)
     \eqdef
     (f_1(\vca), f_2(\vca))

    \\\\ \mathtt{proj}_i
     :
     \typingsem \ca {\prodfam \ta} \ma \to \typingsem \ca {\ta_i} \ma
    \\ \mathtt{proj}_i (f) (\vca)
     \eqdef
     \va_i
    \\   \text{where~} f(\vca) = (\va_1, \va_2)

    \\\\ \mathtt{lam}
     :
     \typingsem {\ca, \var \eva \ta} \tb \ma
      \to
      \typingsem \ca {\fun \ta \tb} \ma
    \\ \mathtt{lam} (f) (\vca)
     \eqdef
     (\va \in \typesem \ta \ma) \mapsto f(\vca, \eva \mapsto \va)

    \\\\ \mathtt{app}
     :
     \typingsem \ca {\fun \ta \tb} \ma
      \times
      \typingsem \ca \ta \ma
      \to
      \typingsem \ca \tb \ma
    \\ \mathtt{app} (f,g) (\vca)
     \eqdef
     f(\vca)(g(\vca))

    \\\\ \mathtt{unit}
     :
     \typingsem \ca \tunit \ma
    \\ \mathtt{unit}(\vca)
     \eqdef
     \star

    \\\\ \mathtt{inj}_i
     :
     \typingsem \ca {\ta_i} \ma
      \to
      \typingsem \ca {\sumfam \ta} \ma
    \\ \mathtt{inj}_i(f)(\vca)
     \eqdef
     (i, f(\vca))

    \\\\ \mathtt{match}
     :
     \typingsem \ca {\sumfam \ta} \ma
      \times
      \typingsem {\ca, \var \eva {\ta_1}} \tb \ma
      \times
      \typingsem {\ca, \var \eva {\ta_2}} \tb \ma
    \\ \qquad\qquad \to \typingsem \ca \tb \ma
    \\ \mathtt{match}(f, g_1, g_2)(\vca)
     \eqdef
     g_i(\vca, \eva \mapsto \va)
    \\ \qquad \text{where~} f(\vca) = (i, \va)

    \\\\ \mathtt{absurd}
     :
     \typingsem \ca \emptyset \ma
      \to
      \typingsem \ca \ta \ma
    \\ \mathtt{absurd}
     \eqdef
     \emptyset
  \end{array}
  \end{smallmathpar}

  By composing together the semantics of the term formers in the
  obvious way, we obtain semantics for terms $\ea$ and one-hole
  contexts $\Ca$:
  \begin{smallmathpar}
    \termsem {\dernat \ca \ea \ta} \ma
    \in
    \typingsem \ca \ta \ma

    \contsem {\dernat \ca {\plug \Ca {\dernat \cap \ehole \tap}} \tap} \ma
    \in
    \typingsem {\ca, \cap} \tap \ma
    \to
    \typingsem \ca \ta \ma
  \end{smallmathpar}
  For example we have
  $\termsem {\pairfam \ea} \ma
  =
  \mathtt{pair}(\termsem {\ea_1} \ma, \termsem {\ea_2} \ma)$
  and
  $\termsem {\pair \ea \ehole} \ma (f)
  =
  \mathtt{pair}(\termsem {\ea_1} \ma, f)$.
  In particular, $\contsem \ehole \ma$ is the identity function.
\end{definition*}

\begin{definition*}[\nameonlyref{subsec:funless}]
  \label{def:annex-fun-less-types}
  ~\\

  (The definitions of $\dataty \ta$ and $\datatyarr \ta$ were given in
  \fullref{fig:dataty-nofun}.)

  \begin{smallmathpar}
  \begin{array}{l}
    \defun \wild \ta \quad:\quad \ta \to \dataty\ta
    \\ \defun {\pairfam \ea} {}
    \eqdef \pairlam {\defun {\ea_\idx} {}}
    \\ \defun {\inj i \ea} {}
    \eqdef \inj i {\defun \ea {}}
    \\ \defun \eunit {}
    \eqdef \eunit
    \\ \defun {\ea} {\fun \ta \tb}
    \eqdef
    \defunarr
      {\lam \eva {\defun {\app \ea {\refun \eva {\ta}}} {\tb}}}
      {\fun \ta \tb}
  \end{array}
  ~
  \begin{array}{l}
    \refun \wild \ta \quad:\quad \dataty\ta \to \ta
    \\ \refun {\pairfam \ea} {}
    \eqdef \pairlam {\refun {\ea_\idx} {}} {}
    \\ \refun {\inj i \ea} {}
    \eqdef \inj i {\refun \ea {}}
    \\ \refun \eunit {}
    \eqdef \eunit
    \\ \refun \ea {\fun {\dataty \ta} {\dataty \tb}}
    \eqdef \refunarr {\lam \eva {\refun {\app \ea {\defun \eva {\ta}}} {\tb}}} {}
  \end{array}
  \end{smallmathpar}

  \begin{smallmathpar}
  \begin{array}{l}
    \defun \wild \ta \quad:\quad \closedtypesem\ta \to \closedtypesem{\dataty \ta} \\

    \defun {(\va_1, \va_2)} {}
    \eqdef (\defun {\va_1} {}, \defun {\va_2} {})
    \\ \defun {(i, \va)} {}
    \eqdef (i, \defun \va {})
    \\ \defun \star {}
    \eqdef \star
    \\ \defun \va {\fun \ta \tb}
    \\ \qquad\eqdef
      \defunarr {\vb \mapsto \defun{\va(\refun \vb {\ta})} {\tb}}
                {\fun \ta \tb}
  \end{array}
  ~
  \begin{array}{l}
    \refun \wild \ta \quad:\quad \closedtypesem{\dataty \ta} \to \closedtypesem\ta \\
    \refun {(\va_1, \va_2)} {}
    \eqdef (\refun {\va_1} {}, \refun {\va_2} {})
    \\ \refun {(i, \va)} {}
    \eqdef (i, \refun \va {})
    \\ \refun \star {}
    \eqdef \star
    \\ \refun {\va} {\fun \ta \tb}
    \\ \qquad\eqdef
      \refunarr {\vb \mapsto \refun{\va(\defun \vb {\ta})} {\tb}}
                {\fun \ta \tb}
  \end{array}
  \end{smallmathpar}

  \begin{smallmathpar}
  \begin{array}{l}
    \defunarr \ea {\fun {\prodfam \ta} \tb}
    \eqdef \defunarr
      {\lam {\eva_1} {\defunarr {\lam {\eva_2} {\app \ea {\pairfam \eva}}} {\fun {\ta_2} \tb}}}
      {\fun {\ta_1} {\datatyarr {\fun {\ta_2} \tb}}}

    \\ \defunarr \ea {\fun \tunit \tb}
    \eqdef \app \ea \eunit

    \\ \defunarr \ea {\fun {\sumfam \ta} \tb}
    \eqdef \pairlam{\defunarr {\lam \eva {\app \ea {(\inj \idx \ea)}}} {\fun {\ta_\idx} \tb}}

    \\ \defunarr \ea {\fun \tempty \tb}
    \eqdef \eunit

    \\[1em] \refunarr \ea {\fun {\prodfam \ta} \tb}
    \eqdef \lam \eva
      {\app
        {\refunarr
          {\app
            {\refunarr
               \ea
               {\fun {\ta_1} {\datatyarr {\fun {\ta_2} \tb}}}}
            {(\proj 1 \eva)}}
          {\fun {\ta_2} \tb}}
        {(\proj 2 \eva)}}
    \\ \refunarr \ea {\fun \tunit \tb}
    \eqdef \lam \eva \ea
    \\ \refunarr \ea {\fun {\sumfam \ta} \tb}
    \eqdef \lam \eva
      {\desumlam \eva {\eva_\idx}
        {\app {\refunarr {\proj \idx \ea} {\fun {\ta_\idx} \tb}} {\eva_\idx}}}
    \\ \refunarr \ea {\fun \tempty \tb}
    \eqdef \lam \eva {\eabsurd \eva}
  \end{array}
  \end{smallmathpar}

  \begin{smallmathpar}
  \begin{array}{l}
    \defunarr \va {\fun {\prodfam \ta} \tb}
    \eqdef \defunarr
      {\semlam {\vb_1} {\defunarr {\semlam {\vb_2}
        {\semapp \va {\sempairfam \vb}}} {\fun {\ta_2} \tb}}}
      {\fun {\ta_1} {\datatyarr {\fun {\ta_2} \tb}}}

    \\ \defunarr \va {\fun \tunit \tb}
    \eqdef \semapp \va \star

    \\ \defunarr \va {\fun {\sumfam \ta} \tb}
    \eqdef \sempairlam{\defunarr {\semlam \vb
      {\semapp \va {(\seminj \idx \va)}}}
      {\fun {\ta_\idx} \tb}}

    \\ \defunarr \va {\fun \tempty \tb}
    \eqdef \star

    \\[1em] \refunarr \va {\fun {\prodfam \ta} \tb}
    \eqdef \semlam {\sempairfam \vb}
      {\semapp
        {\refunarr
          {\semapp
            {\refunarr
               \va
               {\fun {\ta_1} {\datatyarr {\fun {\ta_2} \tb}}}}
            {\vb_1}}
          {\fun {\ta_2} \tb}}
        {\vb_2}}
    \\ \refunarr \va {\fun \tunit \tb}
    \eqdef \semlam \star \va
    \\ \refunarr {\sempairfam \va} {\fun {\sumfam \ta} \tb}
    \eqdef \semlam {\seminj i \vb}
      {\semapp {\refunarr {\va_i} {\fun {\ta_i} \tb}} \vb}
    \\ \refunarr \va {\fun \tempty \tb}
    \eqdef \emptyset
  \end{array}
  \end{smallmathpar}
\end{definition*}

\begin{definition*}[{\hyperref[def:polerase]{Depolarization}}]
  \label{def:annex-polerase}
  \begin{smallmathpar}
  \begin{array}{l@{\quad\eqdef\quad}l}
    \polerase \xpa & \xa \\
    \polerase {\sum \tpa \tpb} & \sum {\polerase \tpa} {\polerase \tpb} \\
    \polerase \tempty & \tempty \\
    \polerase {\shiftp \tna} & \polerase \tna \\
  \end{array}
  ~
  \begin{array}{l@{\quad\eqdef\quad}l}
    \polerase \xnb & \xb \\
    \polerase {\fun \tpa \tna} & \fun {\polerase \tpa} {\polerase \tna} \\
    \polerase {\prod \tna \tnb} & \prod {\polerase \tna} {\polerase \tnb} \\
    \polerase \tunit & \tunit \\
    \polerase {\shiftn \tpa} & \polerase \tpa \\
  \end{array}
  \end{smallmathpar}
\end{definition*}

\section{Proof outlines}
\label{ann:proofs}

\begin{proof}[\fullref{thm:betaeta-implies-semeq}]
  This is proved by direct induction on the evidence that
  $\ea \argeq\betaeta \eb$; semantic equivalence is a congruence, and
  $\beta$-reduction and $\eta$-expansions are identities in the semantic
  domain.
\end{proof}

\begin{proof}[\fullref{thm:semeq-implies-conteq}]
  For any model $\ma$ and context
  $\dernat \emptyset {\plug \Ca \ehole} {\sum \tunit \tunit}$, the
  closed term $\plug \Ca {\modapp \ma \ea}$ must have a normal form
  $\inj \ii \eunit$ by \fullref{lem:lc-inversion}, and
  $\plug \Ca {\modapp \ma \eb}$ a normal form $\inj \jj \eunit$. The
  semantic interpretation of $\plug \Ca \ea$ (technically,
  $\closedtypesem \Ca (\termsem \ea \ma)$ for the natural semantics of
  contexts with a hole) and of $\plug \Ca \eb$ are equal by our
  assumption $\ea \argeq\semeq \eb$ and the fact that semantic
  equality is a congruence. They are also are respectively equal to
  $(\ii, \star)$ and $(\jj, \star)$ by
  \fullref{thm:betaeta-implies-semeq}, so we must have $\ii = \jj$.
  $\plug \Ca \ehole$ is not separating.
\end{proof}

\begin{proof}[\fullref{thm:reification}]
  We define
  $\closedreify \va \eqdef \refun {\closedreifyp {\defun \va \ta}}
  \ta$ using an (obvious) auxiliary function $\closedreifyp \va$ on
  values at fun-less types, on which the inversion property is
  immediate; for $\closedreify \wild$ it is then proved using the fact
  that $\closedtermsem \wild$ and $\defun \wild \ta$, $\refun \wild \ta$
  commute.
\end{proof}

\begin{proof}[\fullrefnoname{lem:reification-inverse}]
  By \fullref{thm:betaeta-implies-semeq}, we can without loss of
  generality consider $\beta$-normal forms only. For the fun-less
  $\closedreifyp \wild$, direct induction proves
  $\closedreifyp {\closedtermsem \ea} = \ea$ on normal-forms $\ea$ --
  note that the fact that our types have no functions is key to
  preserve the inductive invariant that our terms are closed,
  necessary to use the inversion lemma to reason on the shape of
  $\ea$. The result for $\closedreify \wild$ holds by commutation of
  $\closedtermsem \wild$ and $\defun \wild \ta$, $\refun \wild \ta$.
\end{proof}

\begin{proof}[\fullref{thm:conteq-implies-semeq}]
  We prove the contrapositive: if
  $\termsem \ea \ma \neq \termsem \eap \ma$ in some model $\ma$, we
  build a discriminating context $\plug \Ca \ehole$ in $\ma$. Our
  induction on a type $\ta$ has a slightly stronger hypothesis. We
  assume $\dernat {\modapp \ma \ca} {\ea \bocomma \eap} {\modapp \ta \ma}$
  and $\closedtermsem \ea \neq \closedtermsem \eap$: the semantically
  distinct terms are well-typed at the closed typing
  $\dernat {\modapp \ma \ca} \wild {\modapp \ca \ta}$, but need not be
  well-typed at $\dernat \ca \wild \ta$.

  In the function case $\fun \ta \tb$ we have an input
  $\vb \in \typesem \ta \ma$ where the two semantics differ, we build
  the discriminating context
  $\plug \Ca {\papp \ehole {(\reify \vb \ta)}}$,
  where $\Ca$ distinguishes at $\tb$ by induction hypothesis.

  In the sum case, $\termsem \ea \ma$ is of the form $(\ii, \va)$ and
  $\termsem \eap \ma$ of the form $(\jj, \vap)$ and we have either
  $\ii \neq \jj$ or $\va \neq \vap$. In the second case, suppose for
  example $\ii = \jj = 1$, we use a context
  $\pddesum \ehole {\eva_1} {\plug \Ca {\eva_1}} {\eva_2} {\inj \kk
    \eunit}$, using the fact that the return type $\sum \tunit \tunit$
  is inhabited to build a dummy value for the unused branch
  ($\kk \in \{1,2\}$ is arbitrary).
\end{proof}

\begin{proof}[\fullref{lem:strong-cut}]
  By simultaneous induction on $\epa$ and $\ea$. In the variable case,
  it's just a variable/variable substitution.
\end{proof}

\begin{proof}[\fullref{thm:saturation-complete}]
  The proof of \citet*{scherer2015} can be reused almost as-is. It
  defines a rewriting relation from focused terms to saturated terms,
  that proceeds by moving \texttt{let}-bindings around: at any saturation
  point in some context $\cxna; \cxnap$, it looks for all the negative
  neutral terms in $\ea$ that would be well-typed in $\cxna, \cxnap$,
  and saturates over exactly those neutral terms.

  However, just selecting the neutral sub-terms of $\ea$ does not make
  a valid selection function. We also have to add enough neutrals at
  each saturation step to respect the validity criterion; by the
  subformula property, a finite number of neutrals always suffices,
  so there exists a valid selection function with the neutral
  sub-terms of $\ea$.

  Finally, any selection function above this one also satisfies
  completeness. Introducing more neutrals either does not change the
  term shape (after the next invertible phase) or exposes a neutral
  proof of inconsistency $\annot \ena \tempty$; in the latter case,
  more terms become $\betaeta$-equivalent, which helps us proving our
  goal $\focerase \ea \argeq\betaeta \focerase \eap$.
\end{proof}

\begin{proof}[\fullref{thm:saturation}]
  By induction on the derivation. In the base case, the old context is
  empty so the result is trivial -- there is no $\tpa$ such that
  $\dernatdown \emptyset {\shiftn \tpa}$. The induction case considers
  the application of the \Rule{sat} rule to deduce
  $\dertysat \cxna \cxnap {\letin {\bar \eva} {\bar \ena} \ea}
  \txpb$. We have to prove that any sub-derivation of $\ea$ of the
  form $\dertysat {\cxna, \cxnap} {\cxnapp_\kk} {\efa_\kk} \txpb_\kk$
  satisfies the expected property for any $\tpa$.

  If $\dernatdown \cxna \tpa$, this is immediate by induction
  hypothesis. If we have $\dernatdown {\cxna, \cxnap} \tpa$ but not
  $\dernatdown \cxna \tpa$, then $\tpa$ is only provable by a ``new''
  neutral $\ena$ using variables from $\cxnap$, so it is passed to the
  selection function. By validity of the selection function
  (\fullref{fig:selection-function}) we know that some neutral
  $\var \enap \tpa$ must then selected, so $\tpa$ occurs in the
  positive context of the invertible term $\ea$. By
  \fullref{lem:strong-decomposition} we thus have
  $\dernatstrongup {\cxnapp_\kk} \tpa$ as required.
\end{proof}

\begin{proof}[\fullref{lem:saturated-consistency}]
  By (logical) completeness of focusing it suffices to prove that
  there is no \emph{focused} proof of
  $\dertyinvpp \cxna \emptyset \ea \emptyset \tempty$.

  By inversion on the possible derivations, we see that the invertible
  phase must be empty, and \Rule{foclc-let-pos} is the only possible
  choice of focusing: $\ea$ is of the form $\letin \eva \ena \eb$ with
  \begin{smallmathpar}
    \infer
    {\dertydown \cxna \ena {\shiftn \tpb}\\
     \dertyinvpp \cxna \tpb \eb \emptyset \tempty}
    {\dertyfoc \cxna {\letin \eva \ena \eb} \tempty}
  \end{smallmathpar}
  $\cxna$ is saturated so $\dernatdown \cxna {\shiftn \tpb}$ implies
  $\dernatstrongup \cxna \tpb$. But then, by \fullref{lem:strong-cut},
  we know that $\eb$ has a subterm $\dertyfoc \cxna \efa \tempty$.

  We have proved that any term in the judgment
  $\dernatfoc \cxna \tempty$ has a strictly smaller subterm itself in
  the judgment $\dernatfoc \cxna \tempty$. There is no such proof term.
\end{proof}

\begin{proof}[\fullref{thm:inconsistent-canonicity}]
  The inversible and saturation phase are purely type-directed, so
  $\efa$ and $\efap$ necessarily use the same term-formers during
  those phases. They could only differ on a neutral term, but they
  cannot contain a derivation of the form
  $\dertysdown {\cxna, \cxnap} \ena \xna$ or
  $\dertysup {\cxna, \cxnap} \epa \tpa$ as, by
  \fullref{cor:saturation}, $\cxna, \cxnap$ would be saturated,
  which is incompatible with our assumption $\dernatded \cxna \tempty$
  and \fullref{lem:saturated-consistency}.
\end{proof}

\begin{proof}[\fullref{lem:saturated-inequivalence}]
  By direct induction on $\ea \argneq\icc \eap$. In particular, the
  two side-conditions we mentioned are direct consequences of the
  saturated term structure.

  The fact that the $\eva = \ena$ bindings are never redundant is
  a direct consequence of the ``new''-ness condition on saturated
  neutrals: $\ena$ cannot be added in $\cca$ in a future phase as it
  would not be new anymore.

  The fact that the context is consistent at the source of inequality
  is a consequence of \fullrefnoname{cor:saturation} and
  \fullref{lem:saturated-consistency}.
\end{proof}

\begin{proof}[\fullref{thm:inequivalence-in-the-model}]
  We prove this by simultaneous induction on the inequivalence evidence
  for $\eanz \argneq\icc \eanzp$,
  $\derneqinv \cxna \cpa
      {\plug {\Cb_0} {\derneqspine \cxnap \cca {\epanum{0}} {\epapnum{0}} \tpa}}
      \tna \txpb
  $
  and the sub-terms of the two terms $\eano, \eanop$, building
  inequivalence evidence for $\eano \argneq\icc \eanop$. There are
  three interesting cases: when the current term-former introduces
  a variable $\annot \eva \xpa$ in context (or at the root for the
  variables of $\cxna, \cpa$), the saturation step (where
  non-redundancy is checked), and when the source of inequality
  $\derneqspine \cxnap \cca \epa \epap \tpa$ is finally encountered.

  When we enter a binding $\var \eva \xpa$, the corresponding subterms
  of $\eano, \eanop$ start with a large pattern-matching on the
  possible values of $\var \eva {\modapp \nmodlight \xpa}$. We decide
  to take the branch corresponding to the case $\infin \eva$,
  maintaining the inductive invariant that the sub-terms of
  $\eano, \eanop$ are equal to the sub-terms of $\eanz, \eanzp$ under
  the family of substitutions
  $\famx {\subst {} \eva {\infin \eva}} \eva$, for all variables
  $\eva$ of atomic type introduced so far.

  In a saturation step, we must check that the newly added bindings
  are not already present in the constrained environment
  $\ccquotient \cxna \cca$. The neutrals introduced in
  $\eano \argneq\icc \eanop$ are of the form
  ${\subst \ena \eva {\infin \eva}}^\eva$, and this substitution
  preserves $\alpha$-inequivalence, as it replaces distinct variables
  by distinct values.

  When we reach the source of inequality
  $\derneqspine \cxnap \cca \epa \epap \tpa$, the derivation of spine
  inequivalence $\derneqspineup \cxnap \epa \epap \tpa$ either ends up
  on a $\derneqspineup \cxnap {\inj \ii \epa} {\inj \jj \epap} \tpa$
  with $\ii \neq \jj$, in which case the corresponding neutral
  subterms of $\eano, \eanop$ are also of this form, or
  $\derneqspineup \cxnap \eva \evb \xpa$. In this case the subterms of
  $\eano, \eanop$ are $\infin \eva, \infin \evb$, by our inductive
  invariant: they are also distinct positive neutrals, and form
  a source of inequality.

  At the source of inequality, we also have to prove that
  $\ndernatded {\modapp {\nmod \eanz \eanzp} \cap} \tempty$ from the
  hypothesis $\ndernatded \cap \tempty$. Note that this could be wrong
  if we applied some other model $\ma$; for example, the context
  $(\var \eva \xpa, \var \evb {\fun \xpb \tempty})$ is consistent, but
  applying the model $\{ \xpa \mapsto \tunit, \xpb \mapsto \tunit \}$
  results in an inconsistent context.

  Fortunately, the definition of the neutral model
  \fullref{def:neutral-model} is such that the positive atoms that
  become inhabited are exactly those that have bound variables in the
  context at the source of inequality
  $\dernatded \cxnap {\epa \argneq\icc \epap} \tpa$. And, by saturated
  consistency (\fullref{cor:saturation}), we know that those positive
  atoms are exactly those that are \emph{provable} in $\cxnap$.

  We can thus can prove
  $\ndernatded {\modapp \nmodlight \cxnap} \tempty$ by contradiction
  as follows. Suppose we have a proof of
  $\dernatded {\modapp \nmodlight \cxnap} \tempty$. Any subproof of
  the form
  $\dernatded {\modapp \nmodlight \cxnap} {\modapp \nmodlight \xpa}$,
  for one of the $\xpa$ inhabited in $\nmodlight$, can be replaced by
  a subproof of the form
  $\modapp \nmodlight {\dernatded \cxnap \xpa}$, as we know that all
  such $\xpa$ are derivable in $\cxnap$. The transformed proof does
  not rely on the definition of the $\xpa$, so it is also a valid proof
  of $\dernatded \cxnap \tempty$, which is impossible.
\end{proof}

\begin{proof}[\fullref{cor:focused-equivalence-nofun}]
  This is immediate from \fullrefnoname{lem:semantics-nofun} and the fact
  that the function-elimination transformation is a bijection on
  semantics value -- \fullref{fig:dataty-nofun}.
\end{proof}

\begin{proof}[\fullref{thm:canonicity}]
  We remark that $\Subst \epa \substa \argneq\semeq \Subst \epap \substa$
  is a trivial property if $\tpa$ is a closed type (no atoms $\xpa$):
  by inspecting the $\derneqspineup \ca \epa \epap \tpa$ judgment (or
  the strong positive neutral structure), we know that $\epa$ and
  $\epap$ differ on a constructor, and this property is preserved by
  any substitution.

  From $\ndernatded \ca \emptyset$ we know that for any $\ta \in \ca$,
  $\ta$ is inhabited by some closed values --
  \fullref{cor:inhabited-or-inconsistent}. We can always pick an
  arbitrary closed value of type $\ta$.

  We build $\substa$ by peeling off the variables of the environment
  $\ca$, from the most recently introduced to the first introduced. At
  any point during our induction, we have processed a fragment $\cbp$
  of the context, and $\cb$ remains unprocessed, with
  $\ca = (\cb, \cbp)$, and we have a substitution $\substa$ that binds
  the variables of $\cbp$ to closed terms, such that
  $\dersubstcc \substa \cca$ contains no inconsistent
  equation. Initially we have $\cb \eqdef \ca$,
  $\cbp \eqdef \emptyset$, $\substa \eqdef \emptyset$.

  When $\ca$ is of the form $\cap, \var \eva {\sumfam \ta}$, we look
  for equations of the form $\inj \ii {\eva_\ii} = \eva$ in $\cca$; by
  validity of $\cca$ (\fullrefnoname{def:valid-constraints}), there is at
  most one. If there is none, we choose an arbitrary value for $\eva$
  in $\substa$; if there is one, we choose
  $\Subst {\eva_\ii} {\substa}$; this is a closed term as $\eva_\ii$
  must be defined after $\eva$ in the context.

  When $\ca$ is of the form $\cap, \var \eva \tna$, we look for
  equations the family of equations
  $\fami{\evb_\ii = \plug {\ena_\ii} \eva}$ -- remember that we write
  $\plug \ena \eva$ when the neutral $\ena$ has head $\eva$. Let us
  write $\plug \cca \eva$ for this set of equations about $\eva$.

  We define the value for $\eva$ in $\substa$ as
  $\Val \eva \tna {\fami {\evb_\ii = \ena_\ii}}$ where
  $\Val \ena \tna \cca$, defined below, is such that
  $\dersubstcc {{\Val \ena \tna \cca} / \ena} \cca$ always holds, if
  all equations in $\cca$ are of the form $\evb = \plug \enb \ena$.
  \begin{smallmathpar}
    \Val \ena \tunit \cca \eqdef \eunit

    \Val \ena {\prodfam \tna} \cca
    \eqdef~
    \vpairlam
      {\!\!\Val
        {\proj \idx \ena}
        {\tna_\idx}
        {\{ (\eva = \plug \enb {\proj \idx \ena}) \in \cca \}}\!\!}

    \Val \ena {\shiftn \tpa} {\{ \evb = \ena \}}
    \eqdef \evb

    \Val \ena {\shiftn \tpa} \emptyset \eqdef \text{arbitrary}
  \end{smallmathpar}
  From the validity of $\cca$, we know that $\plug \cca \eva$ contains
  at most one equation of the form $\annot \evb \tpa = \ena$ for each
  possible neutral $\dertydown {} \ena {\shiftn \tpa}$, which justifies
  the only two cases in the definition of $\Val \ena {\shiftn \tpa} \cca$ above.
\end{proof}

\begin{proof}[\fullref{cor:canonicity}]
  We use a stronger induction hypothesis on the inequivalence
  judgment: if
  \begin{smallmathpar}
  \derneqinv \ca \cca {\plug \Cb {\derneqspine \cap \ccap \epa \epap
     \tpa}} \tna \txpa
  \end{smallmathpar}
  then there is a substitution $\dersubsttycc \substa \ca \cca$ and
  a context $\plug \Ca \ehole$ such as
  \begin{smallmathpar}
    \dernat \emptyset
    {\plug \Ca {\Subst \ea \substa}
     \argneq\beta
     \plug \Ca {\Subst \eap \substa}} \tbool
  \end{smallmathpar}%
  where $\ea \eqdef \plug \Cb \epa$ and $\eap \eqdef \plug \Cb \epap$.

  In the base case $\Cb \eqdef \ehole$, we get a substitution
  $\dersubsttycc \substa \cap \ccap$ from \fullref{thm:canonicity};
  the theorem also gives us
  $\Subst \epa \substa \argneq\semeq \Subst \epap \substa$, so there
  is a distinguishing context $\Ca$ as required as contextual and
  semantic equivalence coincide.

  In the sum-elimination case
  \begin{smallmathpar}
  \infer
  {\derneqinv
    {\ca, \var \eva {\sumfam \tpa}, \var {\eva_\ii} \tpa_\ii}
    {\cca, \inj \ii {\eva_\ii} = \eva}
    {\plug \Cb {\derneqspine \cap \ccap \epa \epap \tpap}}
    \tna \txpb}
  {\derneqinv {\ca, \var \eva {\sumfam \tpa}} \cca
    {\plug
      {\left({\bcasenoinj \eva
        {\inj \ii {\eva_\ii}} \ehole
        {\inj \jj {\eva_{\jj \neq \ii}}} \ea}\!\!\!\!\right)}
      {\derneqspine \cap \ccap \epa \epap \tpap}
    }
    \tna \txpb}
  \end{smallmathpar}
  the substitution $\substa$ verifies, by inductive hypothesis, that
  $\inj \ii {\Subst {\eva_\ii} \substa} = \Subst \eva \substa$, so in
  particular the sum-elimination term, once applied $\substa$, will
  reduce to $\Subst {(\plug \Cb \epa} \substa$ and
  $\Subst {(\plug \Cb \epap)} \substa$ respectively, as desired: this
  means that the distinguishing context $\Cb$ for the premise is also
  a distinguishing context for the sum-elimination terms.

  In the (left) pair case
  \begin{smallmathpar}
    \infer
    {\derneqinv \ca \cca
      {\plug \Cb {\derneqspine \cap \ccap \epa \epap \tpa}} {\tna_1} \emptyset}
    {\derneqinv \ca \cca
      {\plug {\pair \Cb \eb}
        {\derneqspine \cap \ccap \epa \epap \tpa}} {\prodfam \tna} \emptyset}
  \end{smallmathpar}
  We can obviously keep the same $\substa$, and build the
  distinguishing context $\plug \Ca {\proj 1 \ehole}$, where $\Ca$ is
  the distinguishing context for the inductive premise.

  From this stronger inductive hypothesis $(\substa, \Ca)$, we can
  build a distinguishing context for
  $\plug \Cb \epa \argneq\conteq \plug \Cb \epap$ as
  \begin{smallmathpar}
    \plug \Ca {\app
      {\plam {(\eva_1 \dots \eva_n \in \ca)} \ehole}
      {(\Subst {\eva_1} \substa) \dots (\Subst {\eva_n} \substa)}}
  \end{smallmathpar}
\end{proof}

\begin{proof}[\fullref{thm:saturated-canonicity}]
  Let $\eanz \eqdef \ea$ and $\eanzp \eqdef \eap$.

  By \fullref{lem:saturated-inequivalence} we have
  $\derneqinv \cpa \emptyset
      {\plug {\Cb_0} {\derneqspine \cxnap \cca {\epanum 0} {\epapnum 0} \tpa}}
      \tna \txpb
  $
  with $\eanz \argeq\icc \plug {\Cb_0} {\epanum 0}$ and
  $\eanzp \argeq\icc \plug {\Cb_0} {\epapnum 0}$, that is,
  $\derdecompto \eanz \eanzp {\Cb_0} {\epanum 0} {\epapnum 0}$.

  We consider the neutral model $\nmod \eanz \eanzp$ defined in
  \fullref{def:neutral-model}. Let $\eano \eqdef \modapp \nmodlight \eanz$ and
  $\eanop \eqdef \modapp \nmodlight \eanzp$.

  By \fullref{thm:inequivalence-in-the-model} we have
  $\derdecompto \eanz \eanzp {\Cb_1} {\epanum 1} {\epapnum 1}$,
  that is a inequivalence of terms at the atom-less judgments
  $\dertyinvpp \emptyset {\modapp \nmodlight \cpa}
    {\eanz \argneq\icc \eanzp} {\modapp \nmodlight \tna} {\modapp \nmodlight \txpb}
  $
  .

  Using the focused function-elimination transformation of
  \fullref{fig:focused-nofun}, let us define
  $\eant \eqdef \defun \eano {}$ and $\eantp \eqdef \defun \eanop {}$.
  By \fullref{lem:inequivalence-nofun} we get
  $\derdecompto \eant \eantp {\Cb_2} {\epanum 2} {\epapnum 2}$,
  that is a saturated inequivalence at closed types without functions.

  From \fullref{cor:canonicity}, we finally get
  $\plug {\Cb_2} {\epanum 2} \argneq\conteq \plug {\Cb_2} {\epapnum 2}
  $,
  that is $\eant \argneq\conteq \eantp$, a contextual inequivalence at
  closed types without functions. As contextual and semantic
  equivalence coincide at closed types, this is equivalent to
  $\closedtermsem \eant \neq \closedtermsem \eantp$,
  that is,
  $\closedtermsem {\defun \eano {}} \neq \closedtermsem {\defun \eanop {}}$.

  By \fullref{cor:focused-equivalence-nofun}, we thus have that
  $\closedtermsem \eano \neq \closedtermsem \eanop$ at closed types,
  that is,
  $\termsem \eanz \nmodlight \neq \termsem \eanzp \nmodlight$, that
  is, $\eanz \argneq{\semeqmod \nmodlight} \eanzp$, that is,
  $\eanz \argneq\semeq \eanz$, or equivalently
  $\eanz \argneq\conteq \eanz$.
\end{proof}

\begin{proof}[\fullref{cor:conteq-implies-icc}]
  By contrapositive, we show that $\eb \argneq\icc \ebp$ implies
  $\ea \argneq\conteq \eap$. If we had $\eb \argneq\icc \ebp$ we would
  have $\focerase \eb \argneq\conteq \focerase \ebp$ by
  \fullref{thm:saturated-canonicity}. As $\betaeta$-equivalence
  implies contextual equivalence, we have
  $\ea \argeq\conteq \focerase \eb$ and
  $\eap \argeq\conteq \focerase \ebp$, so $\ea \argneq\conteq \eap$ by
  transitivity.
\end{proof}

\begin{proof}[\fullref{cor:conteq-implies-betaeta}]
  We know from \fullref{sec:full-calculus} that $\betaeta$-equivalence
  implies contextual equivalence. Conversely, we suppose
  $\ea \argeq\conteq \eap$ and prove $\ea \argeq\betaeta \eap$.

  By \fullref{thm:focusing-complete}, there exists focused forms
  $\eb, \ebp$ that erase to $\ea, \eap$ (respectively) modulo
  $\betaeta$; we can even pick them in the subset with positive atoms
  only.

  By \fullref{thm:saturation-complete}, there exists saturated forms
  $\ec, \ecp$ that have the same erasures as $\eb, \ebp$
  (respectively) modulo $\betaeta$, that is they erase to $\ea, \eap$
  modulo $\betaeta$.

  By \fullref{cor:conteq-implies-icc}, we have $\ec \argeq\icc
  \ecp$. As the invertible commuting conversions are sound with
  respect with $\betaeta$-equality, we have
  $\focerase \ec \argeq\betaeta \focerase \ecp$, that is,
  $\ea \argeq\betaeta \eap$.
\end{proof}

\begin{proof}[\fullref{cor:decidability}]
  To test if two terms are equivalent, we invoke the focusing
  completeness result then the saturated completeness result, and
  check if the normal form coincide modulo $\parel{\argeq\icc}$. Those
  two transformations are computable and $\parel{\argeq\icc}$ is
  decidable, so equivalence is decidable.
\end{proof}

\begin{proof}[\fullref{cor:finite-models}]
  Our notion of semantic equivalence coincides with $\betaeta$ and
  contextual equivalence, and only distinguishes terms using sets
  isomorphic to closed types, which are all finitely inhabited.
\end{proof}
\end{version}

\end{document}